\newcommand{\set}[1]{\left\{#1\right\}}
\newcommand{\abs}[1]{{\left|#1\right|}}
\newcommand{\np}{\textbf{NP}}
\newcommand{\bigO}{\mathcal{O}}
\newcommand{\ourproblem}{\textsc{$k$-Densest-Episodes}}
\newtheorem{problem}{Problem}
\newtheorem{definition}{Definition}
\newtheorem{proposition}{Proposition}
\newtheorem{lemma}{Lemma}
\newcommand{\ind}{\mathbb{I}}
\newcommand{\ints}{\mathcal{I}}
\newcommand{\estat}{E'}
\newcommand{\etemp}{E}
\newcommand{\points}{m}
\newcommand{\tstamps}{\mathcal{T}}
\newcommand{\polylog}{\textrm{polylog}}
\newcommand{\ddens}{d^*}
\newcommand{\dens}{d}
\newcommand{\aprxddens}{d_a^*}
\newcommand{\gdens}{\dens_a}
\newcommand{\g}{\deg_a}
\newcommand{\mg}{\chi}
\newcommand{\dpgain}{\ensuremath{\mathrm{gain}}\xspace}
\newcommand{\graphs}{\mathcal{G}}
\newcommand{\stgraphs}{\mathcal{H}}
\newcommand{\dvrs}{\ensuremath{\mathrm{cover}}\xspace}
\newcommand{\find}{\texttt{Find}}
\newcommand{\approxdp}{\texttt{ApproxDP}\xspace}
\newcommand{\sprs}{\texttt{SPRS}\xspace}
\newcommand{\approxdens}{\texttt{ApprDens}\xspace}
\newcommand{\approxgendens}{\texttt{ApprGenDens}\xspace}
\newcommand{\profit}{\ensuremath{\mathrm{profit}}\xspace}
\newcommand{\edp}{\ensuremath{\epsilon_{\text{\sc dp}}}\xspace}
\newcommand{\eds}{\ensuremath{\epsilon_{\text{\sc ds}}}\xspace}
\DeclarePairedDelimiter{\ceil}{\lceil}{\rceil}
\newcommand{\dataset}[1]{\textsl{#1}}
\newcommand{\facebook}{\dataset{Facebook}\xspace}
\newcommand{\twitter}{\dataset{Twitter}\xspace}
\newcommand{\twitterH}{\dataset{Twitter\#}\xspace}
\newcommand{\students}{\dataset{Students}\xspace}
\newcommand{\enron}{\dataset{Enron}\xspace}
\newcommand{\synone}{\dataset{Synthetic1}\xspace}
\newcommand{\syntwo}{\dataset{Synthetic2}\xspace}
\newcommand{\segment}{\textsc{kGapprox}\xspace}
\newcommand{\segmentCover}{\textsc{kGCvr}\xspace}
\newcommand{\segmentOpt}{\textsc{Optimal}\xspace}
\newcommand{\segmentDPopt}{\textsc{kGoptDP}\xspace}
\newcommand{\segmentDSopt}{\textsc{kGoptDS}\xspace}
\newcommand{\staticGreedy}{\textsc{StaticGreedy}\xspace}
\newcommand{\atleastk}{\emph{densest-at-least-$k$-subgraph}\xspace}
\begin{document}
%
\title{Finding events in temporal networks:\\ Segmentation meets densest-subgraph discovery}



%
\author{\IEEEauthorblockN{Polina Rozenshtein\IEEEauthorrefmark{1},
Francesco Bonchi\IEEEauthorrefmark{2},
Aristides Gionis\IEEEauthorrefmark{1},
Mauro Sozio\IEEEauthorrefmark{3} and
Nikolaj Tatti\IEEEauthorrefmark{4}}
\IEEEauthorblockA{\IEEEauthorrefmark{1}Aalto University, Espoo, Finland}
\IEEEauthorblockA{\IEEEauthorrefmark{2}ISI Foundation, Turin, Italy and Eurecat, Barcelona, Spain}
\IEEEauthorblockA{\IEEEauthorrefmark{3}Telecom ParisTech University, Paris, France}
\IEEEauthorblockA{\IEEEauthorrefmark{4}F-Secure Corp., Helsinki, Finland}}


\maketitle \sloppy

\begin{abstract}
In this paper we study the problem of
discovering a timeline of events in a temporal network.
We model events as dense subgraphs that occur within intervals of network activity.
We formulate the event-discovery task as an optimization problem,
where we search for a partition of the network timeline into $k$ non-overlapping intervals,
such that the intervals span subgraphs with maximum total density.
The output is a sequence of dense subgraphs along with corresponding time intervals,
capturing the most interesting events during the network lifetime.

A na\"ive solution to our optimization problem has polynomial but prohibitively high running time complexity. We adapt existing recent work on dynamic densest-subgraph discovery and approximate dynamic programming to design a fast approximation algorithm. Next, to ensure richer structure, we adjust the problem formulation to encourage coverage of a larger set of nodes. This problem is \np-hard even for static graphs. However, on static graphs a simple greedy algorithm leads to approximate solution due to submodularity. We extended this greedy approach for the case of temporal networks. However, the approximation guarantee does not hold. Nevertheless, according to the experiments, the algorithm finds good quality solutions. 
\end{abstract}


%
\IEEEpeerreviewmaketitle

\section{Introduction}
Real-world networks are highly dynamic in nature,
with new relations (edges) being continuously established among entities (nodes), and old relations being broken.
Analyzing the temporal dimension of networks
can provide valuable insights about their structure and function,
for instance, it can reveal temporal patterns, concept drift, periodicity, temporal events, etc.
%
%
In this paper we focus on the problem of
\emph{finding dense subgraphs}, a fundamental graph-mining primitive.
Applications include community detection in social networks
\cite{chen2012dense, ditursi2017local, taylor2017super},
gene expression and drug-interaction analysis in bioinformatics
\cite{fratkin2006motifcut, saha2010dense}, graph compression and summarization~\cite{feder1995clique, karande2009speeding, hernandez2012compressed},
spam and security-threat detection \cite{gibson2005discovering, beutel2013copycatch},
and more.

When working with temporal networks one has first to define how to deal with the temporal dimension,
i.e., how to identify which are the temporal intervals in which the dense structures should be sought.
Instead of defining those intervals a-priori, in this paper
\emph{we study the problem of automatically identifying the intervals
that provide the most interesting structures}.
We consider a subgraph interesting if it boasts high density.
As a result, we are able to discover a sequence of dense subgraphs in the temporal network,
capturing the evolution of interesting events that occur during the network lifetime.
As a concrete example, consider the problem 
of \emph{story identification} in online social media~\cite{angel2012dense,DBLP:conf/icwsm/BalalauCS18}:
the main goal is to automatically discover emerging stories by finding
dense subgraphs induced by some entities,
such as twitter hash-tags, co-occurring in a social media stream.
In our case, we are additionally interested in understanding how the stories evolve over time.
For instance, as one story wanes and another one emerges,
one dense subgraph among entities dissipates and another one appears.
Thus, by segmenting the timeline of the temporal network into intervals,
and identifying dense subgraphs in each interval,
we can capture the evolution and progression of the main stories over time.

As another example, consider a collaboration network,
where a sequence of dense subgraphs in the network can reveal information
about the main trends and topics over time,
along with the corresponding time intervals.

\smallskip
\noindent \textbf{Challenges and contributions.}
The problem of finding the $k$ densest subgraphs in a static graph
has been considered in the literature from different perspectives.
One natural idea is to iteratively (and greedily)
find and remove the densest subgraphs~\cite{tsourakakis2013denser}.
More recent works consider finding $k$ densest graphs with limited
overlap~\cite{balalau2015finding, galbrun2016top}.
However, these approaches do not generalize to temporal networks.

For temporal networks, to our knowledge,
there are only few papers that consider the task of finding temporally-coherent densest subgraphs.
The most similar to our work
aims at  finding a heavy subgraph present in all, or~$k$, snapshots~\cite{semertzidis2016best}.
Another related work focuses on finding a dense subgraph
covered by $k$ scattered intervals in a temporal network~\cite{rozenshtein2017finding}.
Both of these methods, however, find a single densest subgraph.


In this paper, instead, we aim at producing a segmentation of the temporal network that
($i$) captures dense structures in the network;
($ii$) exhibits temporal cohesion;
($iii$) spans the whole history of the network; and
($iv$) is amenable to direct inspection and temporal interpretation.
Towards this goal we formulate the problem of
\ourproblem,
which requires to find a partition of the temporal domain into $k$ non-overlapping intervals,
such that the intervals span subgraphs with maximum total density.
The output is a sequence of dense subgraphs along with corresponding time intervals,
capturing the most interesting events during the network lifetime.

A na\"ive solution to this problem has polynomial but prohibitively-high running-time complexity.
Thus, we adapt existing recent work on dynamic-densest subgraph~\cite{epasto2015efficient}
and approximate dynamic programming~\cite{tatti15segmentation} to design a fast approximation algorithm.

Next we shift our attention to encouraging coverage of a larger set of nodes,
so as to produce richer, more interesting structures.
The resulting new problem formulation turns out to be \np-hard even for the case of static graphs.
However, on static graphs a simple greedy algorithm leads to approximate solution thanks  to the submodularity of the objective function.
Following this observation, we extended this greedy approach for the case of temporal networks.
Despite the fact that the approximation guarantee does not carry on when generalizing to the temporal case,
our experimental evaluation indicates that the method produces solutions of very high quality.

The contributions of this paper are summarised as follows:
\begin{itemize}
\item We introduce (Section~\ref{section:problem}) the \ourproblem\ problem and show that it has a polynomial time exact algorithm, which is however cubic thus unpractical.

\item  By leveraging recent work on dynamic densest subgraph
and approximate dynamic programming we achieve a fast  algorithm with approximation guarantees (Section~\ref{section:dp}).

\item We then (Section~\ref{section:overlap}) extend the problem formulation to
encourage coverage of a larger set of nodes. We show that the resulting problem is \np-hard even for the case of static graph. However, we show on static graphs a simple greedy algorithm leads to approximate solution due to submodularity; then we extend this greedy approach for the case of temporal networks.

\item Experiments on synthetic and real-world datasets (Section~\ref{section:experiments}), and a case study on Twitter data (Section \ref{sec:casestudy}) confirm that our methods are efficient and produce meaningful and high-quality results.
\end{itemize}

\section{Problem formulation}
\label{section:problem}
We are given a \emph{temporal graph} $G = (V,\tstamps,\tau)$, where $V$ denotes the set of nodes, $\tstamps = [0, 1, \ldots, t_{max}] \sqsubseteq \mathbb{N}$ is a discrete time domain, and $\tau: V  \times V \times  \tstamps\rightarrow \{0,1\}$ is a function defining for each pair of nodes $u,v \in V$ and each timestamp $t \in \tstamps$ whether edge $(u,v)$ exists in $t$. We denote $E = \{(u,v,t) \mid \tau(u,v,t) = 1 \}$ the set of all temporal edges. Given a temporal interval $T=[t_1, t_2] \sqsubseteq \tstamps$, let $G[T] = (V[T],E[T])$ be the subgraph induced by the set of temporal edges $E[T] = \set{(u,v) \mid (u,v,t) \in E \wedge  t \in T}$.

\begin{definition}[Episode]
Given a temporal graph $G = (V,\tstamps,\tau)$ we define an episode as a pair $(I,H)$ where $I \sqsubseteq \tstamps$ is a temporal interval and $H$ is a subgraph of $G[I]$.
\end{definition}
Our goal is to find a set of interesting episodes along the lifetime of the temporal graph. In particular, our measure of interestingness is the density of the subgraph in the episodes.
We adopt the widely-used notion of density of a subgraph $H = (V(H),E(H))$ as the average degree of the nodes in the subgraph, i.e.,  $d(H)=\frac{|E(H)|}{|V(H)|}$.  Observe that
this definition is not the only choice, however, such a
notion of density enjoys the following nice properties: It can be optimized exactly~\cite{goldberg1984finding} and approximated efficiently~\cite{charikar2000greedy}, while a densest subgraph can be computed in real-world graphs containing up to tens of billions of edges~\cite{DBLP:conf/www/DanischCS17}.




\begin{problem}[\ourproblem \label{p1}]
Given a temporal graph $G = (V,\tstamps,\tau)$ and an integer $k \in \mathbb{N}$, find a set of $k$ episodes
$S=\set{(I_\ell, H_\ell)}$, for $\ell = 1, \ldots, k$ such that the $\{I_\ell\}$ are disjoint intervals and
$\sum_{\ell = 1}^k \dens(H_\ell)$ is
	maximized.
\end{problem}

A solution for Problem~\ref{p1} can be computed in polynomial time.  To
see this, let $S^*$ be an optimum solution and let $\ints(S^*)=\{I_\ell\, , \ell = 1, \ldots, k\}$ and $\graphs(S^*)=\{H_\ell\, , \ell = 1, \ldots, k\}$.
We can assume without loss of generality that the
union of the  intervals in $\ints(S^*)$ results in the set of time stamps $\tstamps$, that
is, $\ints(S)$ is a $k$-segmentation of $\tstamps$. Moreover, a graph $H_\ell
\in \graphs(S^*)$ is the densest subgraph of $G(I_\ell)$, and can be found in
$\bigO(nm \log n)$ time~\cite{goldberg1984finding,orlin2013max} or in $\bigO(nm\log(n^2/m))$ time~\cite{gallo1989fast} (where $n$ and $m$ denote the number of nodes and edges in $G(I_\ell)$ respectively). The optimal segmentation can be solved
with a standard dynamic programming approach, requiring $\bigO(km^2)$ steps~\cite{bellman2013dynamic}. This brings the total running time to $\bigO(km^3n \log n)$ or $\bigO(km^3n\log(n^2/m))$.

\section{Approximate dynamic programming}
\label{section:dp}

The simple algorithm discussed in the previous section has a running time, which is prohibitively expensive for large graphs. In this section we develop a fast algorithm with approximation guarantees.

The derivations below closely follows the ones in \cite{tatti15segmentation}, which improves \cite{guha2001data}. However, we cannot use those results directly: both papers work with minimization problems, while leveraging the fact that the profit of an interval is not less than the profit of its subintervals (monotone non-decreasing). 
In contrast, our problem can be viewed as a minimization problem with monotone non-increasing profit function.

Given a time interval $T = [t_1, t_2]$, 
let us write $\ddens(T)=\linebreak\max_{H\subseteq G(T)} d(H)$. 
For simplicity, we define $\ddens([t_1, t_2]) = 0$ if $t_2<t_1$.
Problem~\ref{p1} is now a classic $k$-segmentation problem of $\tstamps$ maximizing
the total sum of scores $\ddens(T)$ for individual time intervals.
For notation simplicity, we assume that the all timestamps $\tstamps$ are enumerated by integers from $1$ to $r$.

Let $o[i, \ell]$ be the profit of optimal $\ell$-segmentation using only the first $i$
time stamps. It holds:
\[
	o[i, \ell] = \max_{j < i} o[j, \ell - 1] + \ddens(j + 1, i),
\]
and $o[i, k]$ can be computed recursively. Denote the approximate profit of optimal $\ell$-segmentation as $s[i, \ell]$. The main idea behind the
speed-up is not to test all possible values of $j$. Instead, we are going to
keep a small set of candidates, denoted by $A$, and only use those values for
testing. The challenge is how to keep $A$ small enough while at the same time
guarantee the approximation ratio. The pseudo-code achieving this balance
is given in Algorithm~\ref{alg:approx}, while a subroutine that keeps the candidate list short is given in Algorithm~\ref{alg:spars}. Algorithm~\ref{alg:approx} executes a standard dynamic programming search: it assumes that partition of $i'<i$ first data points into $\ell-1$ intervals is already calculated and finds the best last interval $[a,i]$ for partitioning of $i$ first points into $l$ intervals. However, it considers not all possible candidates $[a,i]$, but only a sparsified list, which guarantees to preserve a quality guarantee. The sparsified list is built for a fixed number of intervals $\ell$ starting from empty list. Intuitively, it keeps only candidates $A=[a_j]$ with significant difference $s[a_j,\ell-1]$. Significance of the difference depends on the current best profit $s[i,\ell]$: the larger the value of the solution found, the less cautions we can be about lost candidates and the coarser becomes $A$. Thus, we need to refine $A$ by Algorithm~\ref{alg:spars} after each processed $i$.

\begin{algorithm}[t]
	\KwData{number of intervals $k$, parameter $\epsilon$}
	\KwResult{approximate solution $s[i,\ell]$ for $i \in [1, r]$, $\ell \in [1, k]$}
	
	\lFor {$i=1,\dots, r$} {$s[i,1]=\ddens([1,i])$}	
	\For {$\ell=2,\dots, k$}
	{
		$A=[]$\;
		\For {$i=1,\dots, r$}
		{
			add $i$ to $A$\;
			$s[i,\ell]=\max \{s[i-1,\ell],s[i,\ell-1], \max_{a\in A}  (s[a-1,\ell-1]+\ddens([a,i]))\}$\;
			
			$A=\sprs(A, s[i,\ell], \ell, \epsilon)$
		}
	}
	\Return $s$
	\caption{$\approxdp(k,\epsilon)$, computes $k$-segmentation with $\epsilon$-approximation guarantee}
	\label{alg:approx}
\end{algorithm}
\begin{algorithm}[t]
	\KwData{candidates $A$, sparsification factor $\sigma=s[i,\ell]$, current number of intervals $\ell$, approximation parameter $\epsilon$}
	\KwResult{sparsified $A$}
	
	$\delta = \sigma \frac{\epsilon}{k+\ell\epsilon}$\;
	$j=1$\;
	\While {$j < |A|-1$}
	{
		\lIf{$s[a_{j+2}, \ell-1] - s[a_j,\ell-1]\leq \delta$}
		{remove $a_{j+1}$ from $A$}
		\lElse{$j = j+1$}
	}
	\Return $A$
	\caption{$\sprs(A, \sigma, \ell, \epsilon)$, a subroutine keeping the candidate list short.}
	\label{alg:spars}

\end{algorithm}

Let us first prove that \approxdp yields an approximation guarantee, assuming that
$\ddens(\cdot)$ is calculated exactly.

\begin{proposition}\label{mainthrem}
Let $s[i, \ell]$ be the profit table constructed by \approxdp$(k, \epsilon)$. Then
	$s[i,\ell](\frac{\ell\epsilon}{k}+1)\geq o(i,\ell)$.
\end{proposition}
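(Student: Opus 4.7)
The plan is induction on $\ell$, mirroring the structure of the sparsification-based DP analysis referenced in \cite{tatti15segmentation} but re-expressed for a maximization objective. The base case $\ell=1$ is immediate, since \approxdp sets $s[i,1]=\ddens([1,i])=o[i,1]$ exactly, so the factor $1+\epsilon/k \geq 1$ suffices. I would therefore concentrate the entire argument on the inductive step.

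Assume $s[j,\ell-1](1+(\ell-1)\epsilon/k)\geq o[j,\ell-1]$ for every $j$. Pick any optimal last cut $a^*$ with $o[i,\ell]=o[a^*-1,\ell-1]+\ddens([a^*,i])$. The argument splits on the fate of $a^*$ in the candidate list $A$ maintained at level $\ell$. \textbf{Easy case:} if $a^*$ is still present in $A$ when the outer loop reaches index $i$, the recurrence in Algorithm~\ref{alg:approx} gives $s[i,\ell]\geq s[a^*-1,\ell-1]+\ddens([a^*,i])$; combining with the induction hypothesis and using $1/(1+(\ell-1)\epsilon/k)\geq 1/(1+\ell\epsilon/k)$ in the first summand, together with the trivial bound $\ddens([a^*,i])\geq \ddens([a^*,i])/(1+\ell\epsilon/k)$, yields $s[i,\ell](1+\ell\epsilon/k)\geq o[i,\ell]$.

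\textbf{Hard case:} if $a^*$ was removed by \sprs at some earlier iteration $i'\leq i$, then at that moment surviving neighboring candidates $a_j<a^*<a_{j+2}$ satisfied $s[a_{j+2},\ell-1]-s[a_j,\ell-1]\leq \delta'$, where $\delta'=s[i',\ell]\cdot \epsilon/(k+\ell\epsilon)$. Because $s[\cdot,\ell-1]$ is monotone non-decreasing in its first argument (a direct consequence of the $s[i-1,\ell]$ branch in the recurrence), the DP value at index $a^*-1$ is sandwiched between those at $a_j-1$ and $a_{j+2}-1$, so substituting $a_j$ for $a^*$ in the recurrence costs at most $\delta'$ in the first summand while leaving $\ddens$ only increased (monotonicity of $\ddens$ with respect to enlarging the interval). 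If $a_j$ itself has subsequently been evicted, the same inequality is re-applied, but since $\sigma=s[\cdot,\ell]$ is monotone in $i$ and every intermediate $\delta$ is dominated by the final $\delta=s[i,\ell]\cdot\epsilon/(k+\ell\epsilon)$, a telescoping collapse caps the total accumulated loss by a single $\delta$ term.

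Putting the two cases together, the quantity $s[i,\ell]$ is at least $s[a^{\dagger}-1,\ell-1]+\ddens([a^*,i])-\delta$ for some surviving $a^{\dagger}$, and combining the induction hypothesis with the identity $1+(\ell-1)\epsilon/k+\epsilon/(k+\ell\epsilon)\bigl(1+(\ell-1)\epsilon/k\bigr)\leq 1+\ell\epsilon/k$ closes the induction. The main obstacle I anticipate is the bookkeeping in the hard case: removals at level $\ell$ can stack over many iterations of the outer loop, and one must argue cleanly that the chain of substitutions $a^*\to a_j\to\dots$ contributes only a single final $\delta$ rather than an accumulating sum. The clean way I plan to handle this is to appeal once to the monotonicity of $\sigma$ so that the telescoping reduces to a bound in terms of the current $s[i,\ell]$, after which the algebra is routine.
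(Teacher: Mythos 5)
Your base case, your ``easy case,'' and the closing algebra (including the inequality $\bigl(1+\tfrac{(\ell-1)\epsilon}{k}\bigr)\tfrac{\epsilon}{k+\ell\epsilon}\leq\tfrac{\epsilon}{k}$) all match the paper's argument, and you have correctly located the only real difficulty. But the hard case, as written, has a genuine gap: the claim that the chain of substitutions $a^*\to a_j\to\dots$ loses ``only a single final $\delta$'' is asserted via a ``telescoping collapse,'' and the justification you give for it --- monotonicity of $\sigma$ plus the fact that every intermediate $\delta$ is dominated by the final one --- does not deliver that conclusion. If you literally ``re-apply the same inequality'' once per eviction, you incur one $\delta^{(r)}$ per link of the chain, and domination only lets you bound each term by the final $\delta$; the result is a loss of (chain length)$\,\times\,\delta$, which is far too weak (the chain can be long, since $\abs{A_i}$ is only kept at $\bigO(k/\epsilon)$ while $\Theta(i)$ elements have been discarded by round $i$).

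The missing idea is structural, not a telescoping identity: the test in \sprs compares the two candidates that \emph{become adjacent} after a removal, i.e., it re-certifies the entire merged gap against the threshold in force at that moment. Consequently, at every round $i$ the invariant ``consecutive survivors in $A_i$ differ in $s[\cdot,\ell-1]$ by at most $\delta_{i-1}$ (or are adjacent indices)'' holds outright --- there is never a sum of old gaps to control, only a single gap certified at the last merge, and monotonicity of $\delta_i$ (which follows from $s[i,\ell]\geq s[i-1,\ell]$) is needed only to promote that older certificate to the current threshold. This is exactly what the paper isolates as the ``$i$-dense'' property in Lemma~\ref{baselemma}, proved by induction over $i$; once you have it, the largest surviving $a_j\leq a^*$ satisfies $s[a^*-1,\ell-1]-s[a_j-1,\ell-1]\leq\delta_{i-1}$ in one step, $\ddens([a_j,i])\geq\ddens([a^*,i])$ by interval monotonicity, and your two cases merge into one. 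So the fix is available and short, but your proposal as stated does not contain it, and the mechanism you name in its place would not close the bound.
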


To prove the final result, let us first fix $\ell$ and
let $A_i$ be the set of candidates in the beginning of round $i$.
Let $\delta_i$ be the value of $\delta$ in Algorithm~\ref{alg:spars}, called on iteration $i$.

\begin{lemma}\label{baselemma}
	For every $b\in [1,i-1]$, there is $a_j, a_{j+1}\in A_i$ with $a_{j}\leq b \leq a_{j+1}$, such that
	\[s[a_j-1,\ell-1]+\ddens([a_j,i])\geq s[b-1,\ell-1]+\ddens([b,i])-\delta_{i-1}.\]
\end{lemma}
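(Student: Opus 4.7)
The plan is to split on whether $b$ lies in $A_i$; in the nontrivial case I would choose $a_j$ as $b$'s immediate left neighbor in $A_i$ and reduce the lemma, via the monotonicity $\ddens([a_j, i]) \geq \ddens([b, i])$, to showing that $s[b-1,\ell-1] - s[a_j - 1, \ell-1] \leq \delta_{i-1}$; I would then prove this ``$s$-value closeness'' by walking back through the chain of $\sprs$ deletions that made $b$ disappear.

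If $b \in A_i$ I would simply set $a_j = b$; the required inequality collapses to $\delta_{i-1} \geq 0$. The successor $a_{j+1}$ exists because the value $i-1$, appended at the start of round $i-1$ and placed last in $A$, is never removed (the while loop of $\sprs$ never treats the final element as a middle). So I may assume $b \notin A_i$ and let $a_j < b < a_{j+1}$ be $b$'s two neighbors in $A_i$. The bound $\ddens([a_j, i]) \geq \ddens([b, i])$ is immediate from $[a_j, i] \supseteq [b, i]$, which makes every subgraph of $G[b,i]$ also a subgraph of $G[a_j,i]$. This reduces the lemma to the displayed $s$-value gap.

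To establish that $s$-value gap I would trace $b$'s removal history. Since $b$ was inserted into $A$ at round $b$ but is missing from $A_i$, it was the middle of a triple $(a^\star, b, c^\star)$ in some call $\sprs$ at round $i^\star \leq i - 1$, with $s[c^\star - 1, \ell - 1] - s[a^\star - 1, \ell - 1] \leq \delta_{i^\star}$. Because $b \leq c^\star$ and $s[\cdot, \ell - 1]$ is non-decreasing, this yields $s[b-1,\ell-1] - s[a^\star - 1,\ell-1] \leq \delta_{i^\star} \leq \delta_{i-1}$, where $\delta$ is non-decreasing in its round index because the recurrence in $\approxdp$ enforces $s[i,\ell] \geq s[i-1,\ell]$. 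If $a^\star \in A_i$, then $a_j = a^\star$ and I am done.

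The main obstacle is the case where $a^\star$ is itself deleted later, possibly through a whole chain of deletions, so that $b$'s left neighbor in $A_i$ is strictly smaller than $a^\star$. The structural fact that rescues the argument is that $\approxdp$ only appends new values at the right end of $A$ (each newly added value exceeds every existing one, which are all bounded by the round counter) while $\sprs$ only removes, so no element is ever inserted into the range $(a^\star, c^\star)$ after round $i^\star$. Hence whenever $a^\star$ (or any later left neighbor of $b$'s vacant slot) is removed at a round $i^{\star\star}$, producing a new left $a^{\star\star}$ and current right $c^{\star\star}$, one has $c^{\star\star} \geq c^\star > b$, and the removal criterion gives $s[b-1,\ell-1] \leq s[c^{\star\star}-1,\ell-1] \leq s[a^{\star\star}-1,\ell-1] + \delta_{i^{\star\star}}$. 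By induction on the chain, the bound $s[b-1,\ell-1] - s[a^{(t)}-1, \ell-1] \leq \delta_{i-1}$ is maintained at every update of the current left neighbor $a^{(t)}$, and when the chain terminates at $a_j \in A_i$ I obtain the required inequality; checking this chaining, in particular that the ``right neighbor'' used at each step of the chain remains at least $c^\star$, is the only delicate bookkeeping.
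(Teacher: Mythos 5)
Your proof is correct, and it rests on exactly the same three ingredients as the paper's: monotonicity of $s[\cdot,\ell-1]$ in the first index (hence of $\delta_i$ in $i$), monotonicity of $\ddens([\cdot,i])$ under interval extension, and the guarantee that \sprs only deletes a middle element when the $s$-gap between its two current neighbours is at most the current $\delta$. The organization, however, is genuinely different. The paper packages these facts into a single forward induction over rounds, maintaining the invariant that $A_i$ is ``$i$-dense'' (every pair of consecutive candidates has $s$-gap at most $\delta_{i-1}$, or consists of adjacent integers); the lemma then falls out by reading off the gap between $b$'s two neighbours. You instead fix $b$ and chase its deletion history backwards through a chain of \sprs removals, re-deriving the bound $s[b-1,\ell-1]-s[a^{(t)}-1,\ell-1]\leq\delta_{i-1}$ each time the surviving left neighbour changes. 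Your route is sound --- the structural fact you flag as the delicate point (new candidates are only appended on the right, so the right neighbour used at each step of the chain is always $\geq c^\star>b$, and each removal occurs at a round $\leq i-1$ so its $\delta$ is dominated by $\delta_{i-1}$) does hold and closes the argument --- but it costs you precisely this extra bookkeeping, which the paper's global invariant renders unnecessary: with the invariant in hand one never needs to know \emph{when} or \emph{why} an element disappeared, only that the surviving list stays dense. Conversely, your per-element argument avoids having to formulate (and carry the ``$a_{j+1}=a_j+1$'' escape clause of) the list-wide density property, since you only ever reason about the one position where $b$ used to sit. Both proofs are valid; the paper's is the more economical statement of the same underlying mechanism.
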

\begin{proof}

We say that a list of numbers $A = \set{a_j}$ is $i$-dense, if
\begin{equation}
\label{eq:dense}
	s[a_{j+1}-1,\ell-1] - s[a_j-1,\ell-1]\leq \delta_{i-1} \text{ or } a_{j+1} = a_j+1,
\end{equation}
for every $a_j \in A$ with $j < \abs{A}$.
We first prove by induction over $i$ that $A_i$ is $i$-dense.
	
Assume that $A_{i - 1}$ is $(i - 1)$-dense. \sprs never deletes the last element, so $i - 1 \in A_{i - 1}$,
and $A_{i - 1} \cup \set{i}$ is $(i - 1)$-dense.
Note that $\delta_{i-2}\leq \delta_{i-1}$, because $s[i,\ell]$ is monotonic, $s[i,\ell]\geq
s[i-1,\ell]$, due to explicit check on line 5 of \approxdp. Thus, $A_{i - 1} \cup \set{i}$ is $i$-dense.
Since $A_i=\sprs(A_{i-1}\cup\{i\})$, and $\sprs$ does not create
gaps larger than $\delta_{i-1}$, $A_i$ is $i$-dense.

Let $a_j$ be the largest element in $A_i$, such that $a_j\leq b$. Then either
$a_j\leq b<a_{j+1}$ or $b=a_{|A_i|}$ and $j=a_{|A_i|}$. Due to monotonicity,
$s[a_{j+1}, \ell-1]\geq s[b,\ell-1]$ and gives $s[b-1, \ell-1]-s[a_j-1, \ell-1]\leq
\delta_{i-1}$ for the first case. The second case is trivial.
	
Due to monotonicity $\ddens([a_j,i])\geq \ddens([b,i])$. This concludes the
proof.  \end{proof}

We can now complete the proof.

\begin{proof}[Proof of Proposition~\ref{mainthrem}]
We will prove the result with induction over $\ell$.
Let $\alpha=(1+\frac{\epsilon}{k}(\ell-1))$. Let $b$ be the starting point of the last interval of optimal solution $o[i,\ell]$, and let $a_j$
as given by Lemma~\ref{baselemma}.
Upper bound $\delta_{i-1}=s[i-1, \ell]\frac{\epsilon}{k+\epsilon(\ell-1)}\leq s[i, \ell]\frac{\epsilon}{k+\epsilon(\ell-1)} = s[i, \ell]\frac{\epsilon}{\alpha k}$.
Then
\[
	\begin{split}
	\alpha s[i,\ell]&\geq \alpha (s[a_j-1,\ell-1] + \ddens([a_j,i]))\\
	& \geq \alpha (s[b-1,\ell-1] + \ddens([b,i]) - \delta_{i-1})\\
	& \geq o[b - 1,\ell-1] + \ddens([b,i]) - \alpha\delta_{i-1}\\
	& \geq o[i,\ell] - s[i, \ell]\frac{\epsilon}{k}.
	\end{split}
\]
As a result, $s[i,\ell](1+\frac{\epsilon}{k}\ell)\geq o[i,\ell])$.
\end{proof}

Let us now address the computational complexity.

\begin{proposition}
The running time of \approxdp is $\bigO(\frac{k^2}{\epsilon}r)$.
\label{prop:time}
\end{proposition}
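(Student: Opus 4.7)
The plan is to bound the running time by showing that the candidate list $A$ stays of size $\bigO(k/\epsilon)$ throughout the execution. The two nested loops contribute $kr$ iterations, and within iteration $(\ell, i)$ the dominant work is the evaluation of the maximum over $A$ on line 5 together with the single pass through $A$ inside $\sprs$, both linear in $\abs{A}$; the remaining operations (adding $i$ and the lookups $s[i-1,\ell]$, $s[i,\ell-1]$) are $\bigO(1)$. Each value $\ddens(\cdot)$ is treated as an oracle query, consistent with the analysis of Proposition~\ref{mainthrem}. Hence the overall cost is $\bigO(kr \cdot \max_{\ell,i} \abs{A})$, and it suffices to bound $\max \abs{A}$.

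To obtain the size bound I would argue from the invariant enforced by $\sprs$ on its output: for every $j$ with $j+2 \leq \abs{A}$ in the returned list, it must hold that $s[a_{j+2}, \ell-1] - s[a_j, \ell-1] > \delta$, otherwise $a_{j+1}$ would have been removed by the while-loop of Algorithm~\ref{alg:spars}. Telescoping along every other index and using $s[a_1, \ell-1] \geq 0$ gives
\[
    s[a_{\abs{A}}, \ell-1] \geq \delta \cdot \frac{\abs{A} - 2}{2}.
\]
By monotonicity of $s$ in both arguments, which is enforced by the explicit terms $s[i-1, \ell]$ and $s[i, \ell-1]$ in the maximum on line 5 of \approxdp, we have $s[a_{\abs{A}}, \ell-1] \leq s[i, \ell-1] \leq s[i, \ell]$. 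Substituting $\delta = s[i, \ell]\,\epsilon/(k+\ell\epsilon)$ lets the $s[i,\ell]$ factor cancel, leaving $\abs{A} \leq 2(k+\ell\epsilon)/\epsilon + 2 = \bigO(k/\epsilon)$, where the last step uses $\ell \leq k$.

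Combining the $\bigO(k/\epsilon)$ per-iteration cost with $kr$ outer iterations yields the claimed $\bigO(k^2 r/\epsilon)$ bound. The main piece of bookkeeping I would make explicit is that the size bound from $\sprs$ applies to $A$ immediately after $\sprs$ returns, whereas the maximum on line 5 of the next iteration is evaluated on $A$ just after $i$ has been appended; the appended element inflates $\abs{A}$ by only one and is absorbed into the asymptotic, so the $\bigO(k/\epsilon)$ bound governs $A$ at every moment where it is scanned.
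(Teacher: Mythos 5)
Your proof is correct and follows essentially the same route as the paper: bound $\abs{A}$ by $\bigO(k/\epsilon)$ via the gap invariant maintained by \sprs, then multiply by the $kr$ cells of the table. You are in fact slightly more careful than the paper's own writeup, since you telescope over every other index (the \sprs guarantee concerns elements two apart, not consecutive ones), which costs only a factor of $2$ absorbed by the asymptotics.
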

\begin{proof}
Fix $i$ and $\ell$, and let $c_j = s[a_j, \ell]$, where $a_j \in A_i$.
Then $c_j$ is monotonically increasing sequence upper bounded by $s[i - 1, \ell]$, and having consecutive elements being at least $\delta_{i - 1}$ apart.
Counting conservatively, this leads to
\[
	|A_i| \leq 2+ \ceil[\Big]{\frac{s[i-1,\ell]}{\delta_{i-1}}}\leq 2 + \ceil[\Big]{\frac{k(1+\epsilon)}{\epsilon}} \in \bigO(k/\epsilon).
\]
Since we have $kr$ cells in $s$, the result follows.
\end{proof}

Since computing $\ddens$ requires $\bigO(nm \log n)$ time, this gives us a total running time
of $\bigO(nmr \frac{k^2}{\epsilon})$.
We further speed up our algorithm by approximating the value $\ddens$ by means of one of the approaches developed in~\cite{epasto2015efficient}. In particular, we employ the algorithm that maintains a $2(1 + \epsilon)$-approximate solution for the incremental densest subgraph problem (i.e. edge insertions only), while boasting a poly-logarithmic amortized cost. We shall refer to such an algorithm as \approxdens.

\approxdens allows us to efficiently maintain the approximate density of the densest subgraph $\ddens([a,i])$ for each $a$ in $A_i$ in \approxdp, as larger values of $i$ are processed and edges are added. Whenever we remove an item $a$ from $A_i$ in \sprs we also drop the corresponding instance of \approxdens.

From the fact that an approximate densest subgraph can be maintained with poly-logarithmic amortized cost, it follows that our algorithm boasts quasi-linear running time.

\begin{proposition}
\approxdp combined with \approxdens
runs in $\bigO(\frac{k^2}{\epsilon_1 \epsilon_2^2} m \log^2 n )$,
where $\epsilon_1$ and $\epsilon_2$ are the respective approximation parameters
for \approxdp and \approxdens.
\end{proposition}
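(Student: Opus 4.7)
The plan is to redo the accounting of Proposition~\ref{prop:time} while replacing the cost of a single exact densest-subgraph computation with the amortized cost incurred by maintaining an instance of \approxdens incrementally. Recall from Proposition~\ref{prop:time} that for every pair $(i,\ell)$ the candidate list satisfies $\abs{A_i}\in \bigO(k/\epsilon_1)$, and that there are $kr$ DP cells to fill. The key observation is that, for each fixed $\ell$, we do not need to recompute $\ddens([a,i])$ from scratch for every $(a,i)$: as $i$ advances, we can keep one \approxdens instance per candidate $a\in A$, feed it the new edges arriving at time $i$, and read off the current $(2(1+\epsilon_2))$-approximation of $\ddens([a,i])$ in $\bigO(1)$.

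First I would argue that the total amortized cost of maintaining one fixed \approxdens instance started at time $a$ is at most $\bigO(m\log^2 n/\epsilon_2^2)$, because its lifetime processes at most $m$ edge insertions and each insertion costs $\bigO(\log^2 n/\epsilon_2^2)$ amortized by the guarantee of~\cite{epasto2015efficient}. This bound, however, is too crude if we summed over all candidates; instead, I would charge the work to pairs $(\text{active instance}, \text{edge insertion})$. Concretely, at iteration $i$ of the inner loop we perform at most $\abs{A_i}\cdot e_i$ edge insertions across the active \approxdens instances, where $e_i$ denotes the number of temporal edges with timestamp $i$, and each insertion costs $\bigO(\log^2 n/\epsilon_2^2)$.

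Summing over the outer loops, the total work for a fixed $\ell$ is at most
\[
	\sum_{i=1}^{r} \abs{A_i}\cdot e_i \cdot \bigO\!\pr{\tfrac{\log^2 n}{\epsilon_2^2}}
	\;\leq\; \max_i \abs{A_i}\cdot \sum_{i=1}^{r} e_i \cdot \bigO\!\pr{\tfrac{\log^2 n}{\epsilon_2^2}}
	\;=\; \bigO\!\pr{\tfrac{km\log^2 n}{\epsilon_1 \epsilon_2^2}},
\]
using $\sum_i e_i = m$ and $\abs{A_i}\in\bigO(k/\epsilon_1)$. The additional bookkeeping inside \approxdp and \sprs at cell $(i,\ell)$ is $\bigO(\abs{A_i})$, which is dominated by the above expression since $m\geq r$. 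Multiplying by the $k$ values of $\ell$ gives the claimed $\bigO(\frac{k^2}{\epsilon_1 \epsilon_2^2}\, m \log^2 n)$ bound.

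The main subtlety, and the step I would be most careful about, is the charging argument for the amortization: one must make sure that dropping an \approxdens instance when its candidate leaves $A$ (as noted in the paragraph preceding the proposition) does not break the amortized analysis of~\cite{epasto2015efficient}, and that creating a fresh instance when a new $i$ is added to $A$ incurs only $\bigO(1)$ setup cost. Both hold because the amortized guarantee of \approxdens is stated per insertion on an otherwise independent instance, so the per-instance potentials never interact, and because we only ever \emph{insert} edges (never delete), matching the incremental setting for which the poly-logarithmic amortized cost was established.
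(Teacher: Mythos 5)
Your proposal is correct and follows essentially the same route as the paper's own proof: both bound the total number of edge insertions fed to all \approxdens instances by $\bigO(\frac{k^2}{\epsilon_1}m)$ via the $\bigO(k/\epsilon_1)$ candidate-list bound from the running-time proposition, and then multiply by the $\bigO(\epsilon_2^{-2}\log^2 n)$ amortized per-insertion cost from Epasto et al. Your version merely spells out the charging scheme (per active instance per timestamp) and the dominated bookkeeping terms in more detail than the paper does.
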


\begin{proof}
Let $m_j$ be the number of edges added to the graph corresponding to $a_j$
before it is deleted.  The same argument as in the proof of
Proposition~\ref{prop:time} states that $\sum_i m_i \in \bigO(\frac{k^2}{\epsilon_1}m)$.
Theorem 4 in~\cite{epasto2015efficient} states that maintaining the graph
with $m_i$ edges requires $\bigO(m_i \epsilon_{2}^{-2} \log^2 n)$ time.
Combining these two results proves the proposition.
\end{proof}



When combining \approxdp with \approxdens, we wish to maintain the same approximation guarantee of \approxdens. Recall that \approxdp leverages the fact that the profit function is monotone non-increasing. Unfortunately, \approxdens does not necessarily yield a monotone score function, as the density of the computed subgraph might decrease when a new edge is inserted. This can be easily circumvented by keeping track of the best solution, i.e. the subgraph with highest density. The following proposition holds. 

%

\begin{proposition}
\approxdp combined with \approxdens yields a $2(1+\epsilon_1)(1 + \epsilon_2)$-approximation guarantee.
\end{proposition}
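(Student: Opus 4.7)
The plan is to adapt the proof of Proposition~\ref{mainthrem} to the setting where the exact density function $\ddens(\cdot)$ is replaced by a surrogate $\hat{d}(\cdot)$ defined via \approxdens with best-so-far bookkeeping: for each candidate $a$ maintained by \approxdp, an independent instance of \approxdens processes the edges appearing at timestamps $a,a+1,\dots,i$ as $i$ grows, and $\hat{d}([a,i])$ is the maximum density of any subgraph it has produced up to time $i$. Three properties will be used: (i) $\hat{d}(I)\ge \ddens(I)/\beta$ with $\beta=2(1+\epsilon_2)$, by the approximation guarantee of \approxdens; (ii) $\hat{d}(I)\le \ddens(I)$, since $\hat{d}(I)$ is realised by an actual subgraph of $G[I]$; and (iii) $\hat{d}([a,i])$ is monotone non-decreasing in $i$ for fixed $a$, thanks to best-so-far tracking. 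Property (iii) is exactly what is needed to argue that the candidate list $A_i$ remains $i$-dense, so the structural part of Lemma~\ref{baselemma} transfers without change.

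Next, I would isolate the single step in the proof of Lemma~\ref{baselemma} that breaks, namely the cross left-endpoint monotonicity inequality used to compare $\hat{d}([a_j,i])$ with $\hat{d}([b,i])$ for $a_j\le b$: the two values come from different \approxdens instances and are not directly comparable. In its place, I would use the detour
\[
\hat{d}([a_j,i]) \;\ge\; \ddens([a_j,i])/\beta \;\ge\; \ddens([b,i])/\beta,
\]
combining property (i) with the exact monotonicity of $\ddens$ under interval containment. Together with the unchanged sparsification step, this yields the modified conclusion $s[a_j-1,\ell-1]+\hat{d}([a_j,i]) \ge s[b-1,\ell-1]-\delta_{i-1}+\ddens([b,i])/\beta$, which is the right inequality to feed into the induction.

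Finally, I would redo the induction of Proposition~\ref{mainthrem} with the strengthened hypothesis $\beta\gamma_\ell\,s[i,\ell]\ge o[i,\ell]$, where $\gamma_\ell = 1+\ell\epsilon_1/k$ and $o[\cdot,\cdot]$ is the exact OPT table. The base case $\ell=1$ is immediate from property (i). In the inductive step, multiplying the modified baselemma inequality by $\alpha\beta$ with $\alpha=\gamma_{\ell-1}$, the inductive hypothesis absorbs $\alpha\beta\,s[b-1,\ell-1]\ge o[b-1,\ell-1]$, the factor $\alpha\ge 1$ absorbs the density term, and the slack $\alpha\beta\delta_{i-1}\le \beta\,s[i,\ell]\epsilon_1/k$ is handled exactly as in the exact-density case because $\delta_{i-1}$ scales with $s[\cdot,\ell]$ and $\beta$ appears symmetrically on both sides. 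At $\ell=k$, $i=r$ this gives $s[r,k]\cdot 2(1+\epsilon_1)(1+\epsilon_2)\ge o[r,k]$. The main obstacle is the absence of cross-instance monotonicity for $\hat{d}$; the detour through $\ddens$ fixes this at the cost of exactly the factor $\beta$, while best-so-far bookkeeping preserves the single-instance monotonicity on which the sparsification argument still relies.
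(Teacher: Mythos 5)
Your plan is correct, but it takes a genuinely different route from the paper. The paper's proof is a short, modular four-quantity chain: with $q_1$ the exact score of the exact optimum, $q_2$ that same segmentation re-scored with $\aprxddens$, $q_3$ the optimum under $\aprxddens$, and $q_4$ the output of \approxdp, it writes $q_1 \leq 2(1+\epsilon_2)q_2 \leq 2(1+\epsilon_2)q_3 \leq 2(1+\epsilon_2)(1+\epsilon_1)q_4$, invoking Proposition~\ref{mainthrem} as a black box on the surrogate profit function for the last step. You instead re-open the induction of Proposition~\ref{mainthrem} and Lemma~\ref{baselemma} and thread the factor $\beta = 2(1+\epsilon_2)$ through it. What your version buys is rigor at the one place the paper's argument is thin: the step $q_3 \leq (1+\epsilon_1)q_4$ requires Lemma~\ref{baselemma} to hold for $\aprxddens$, whose proof uses left-endpoint monotonicity ($\ddens([a_j,i]) \geq \ddens([b,i])$ for $a_j \leq b$); best-so-far bookkeeping gives monotonicity in $i$ within one \approxdens instance but not across instances with different left endpoints, so $\aprxddens$ does not literally satisfy the hypothesis. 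Your detour $\hat{d}([a_j,i]) \geq \ddens([a_j,i])/\beta \geq \ddens([b,i])/\beta$ repairs exactly this, and since the $\beta$ is paid anyway in the final bound, the constant is unharmed; the induction with $\beta\gamma_\ell\, s[i,\ell] \geq o[i,\ell]$, $\gamma_\ell = 1+\ell\epsilon_1/k$, checks out. One small quibble: the $i$-dense property in Lemma~\ref{baselemma} concerns gaps between the $s[\cdot,\ell-1]$ values and the monotonicity $s[i,\ell]\geq s[i-1,\ell]$ enforced explicitly on line~5 of \approxdp, so it transfers regardless of your property (iii); single-instance monotonicity of $\hat{d}$ is not what that part of the argument needs, though assuming it is harmless. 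The paper's proof is shorter; yours is more self-contained and makes explicit a monotonicity issue the paper only gestures at in the preceding paragraph.
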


\begin{proof}
Let $\aprxddens(T)$ be the density of the graph returned by \approxdens for a
time interval $T$.
Let $O$ be the optimal $k$-segmentation, and let $q_1 = \sum_{I \in \ints(O)} \ddens(O)$
be its score. Let also $q_2 = \sum_{I \in \ints(O)} \aprxddens(O)$.
Let $q_3$ be the score of the optimal $k$-segmentation using $\aprxddens$, and let $q_4$
be the score of the segmentation produced by \approxdp. Then,
\[
	q_1 \leq 2(1 + \epsilon_2)q_2 \leq 2(1 + \epsilon_2)q_3 \leq 2(1 + \epsilon_2)(1 + \epsilon_1)q_4,
\]
completing the proof.
\end{proof}
We will refer to this combination of \approxdp with \approxdens as Algorithm \segment.

\section{Encourage coverage}
\label{section:overlap}
Problem~\ref{p1} is focused on total density maximization, thus its solution can contain graphs which are dense, but union of their node sets cover only a small part of the network. Such segmentation is useful when we are interested in the densest temporally coherent subgraphs which can be understood as tight cores of temporal clusters. However, segmentations with larger but less dense subgraphs, covering a larger fraction of nodes, can be useful to get a high-level explanation of the whole temporal network. To allow for such segmentations we extend Problem~\ref{p1} to take node coverage into account.

Let $x_v(\graphs)=|\{G_i\in \graphs: v\in G_i,  G_i\in\graphs\}|$ be the number of subgraphs in $\graphs$, which include node $v$.

Here we consider a generalized cover functions of the shape
\[
	\dvrs(\graphs) = \sum_{v \in V}w(x_v(\graphs)),
\]
where $w$ is a non-negative non-decreasing concave function of $x_v(\graphs)$.
When $w(x_v(\graphs))$ is a 0-1 indicator function, function $\dvrs(\graphs)$
is a standard cover, which is intuitive and easy to optimize by greedy
algorithm. Another instance of the generalized cover function, inspired by
text-summarization research \cite{lin2011class}, is
$w(x_v(\graphs))=\sqrt{x_v(\graphs)}$. It ensures that the marginal gain of a
node decreases proportionally to the number of times the node is covered.


\begin{problem}\label{p2}
	Given a temporal graph $G=(V,E)$, integer $k$, parameter $\lambda\geq 0$. Find a $k$-segmentation
	$S=\{(I_i, G_i)\}$ of $G$, such that $\profit(S) = \sum_{G_i\in
	\graphs}\dens(G_i) + \lambda\times\dvrs(\graphs)$ is maximized.
\end{problem}


\begin{proposition}\label{prop:NPP2}
	There is no polynomial solution for Problem~\ref{p2} unless P=NP.
\end{proposition}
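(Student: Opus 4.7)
The plan is to reduce from \textsc{Maximum $k$-Coverage}, which is \np-hard: given a universe $U$, a family $\mathcal{S}=\{S_1,\dots,S_m\}$ of subsets of $U$, and an integer $k$, find $k$ sets whose union has maximum size. Since a static graph is a temporal graph with a single timestamp, establishing hardness on static graphs (as the paragraph preceding the proposition announces) implies Proposition~\ref{prop:NPP2}.

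From a Max-$k$-Cover instance I would construct a static graph $G=(V,E)$ together with a weight function $w$ and a parameter $\lambda$, keeping the same integer $k$. For each set $S_j$ I plant a ``gadget'' subgraph $H_j$ whose vertex set is $V(H_j) = S_j \cup P_j$, where the padding $P_j$ is private to $j$ and all $|P_j|$ are equal. The edges of $H_j$ are designed so that $H_j$ is a (locally) densest subgraph of $G$, while any strict sub-gadget or union of distinct gadgets has strictly smaller density. Taking $w$ to be the $0$--$1$ indicator and $\lambda$ strictly positive but sufficiently small, the density term dominates the objective so that any optimum of Problem~\ref{p2} must place one distinct full gadget $H_{j_\ell}$ in each of the $k$ segments. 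Because the padding sets are pairwise disjoint and of identical size, the coverage then decomposes as $\dvrs = k|P_1| + |S_{j_1}\cup\cdots\cup S_{j_k}|$, so maximizing $\profit$ is equivalent to maximizing $|S_{j_1}\cup\cdots\cup S_{j_k}|$, i.e., to solving Max-$k$-Cover. A polynomial algorithm for Problem~\ref{p2} would then yield one for Max-$k$-Cover, contradicting $\poly \ne \np$.

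The hard part will be designing gadgets so that no alternative subgraph of $G$ can match the gadget density. A naive clique gadget fails: two cliques sharing even a single element vertex have strictly higher union density than an individual clique, which would let the optimum ``pack'' many gadgets into one segment and achieve arbitrarily high coverage without paying in density. To block this shortcut, I would use a more refined gadget—for example, a large clique on $P_j$ attached to the elements of $S_j$ by low-degree bridges whose individual inclusion is density-maximizing but whose joint presence in one subgraph dilutes the average degree. Once the density gap between a single gadget and any merged configuration is bounded below by a constant $\Delta > 0$, $\lambda$ can be tuned in the narrow window $(0,\Delta/|V|)$ so that the coverage term strictly discriminates among density-optimal configurations yet never outweighs a strict density loss, which closes the reduction.
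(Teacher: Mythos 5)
There is a genuine gap, in two places. First, the reduction as stated does not type-check against Problem~\ref{p2}: you instantiate the temporal graph with a single timestamp but keep the Max-$k$-Cover parameter $k$ as the number of segments. With one timestamp, a $k$-segmentation has at most one non-empty interval, so at most one gadget can ever be selected and the other $k-1$ subgraphs are empty; the intended ``one distinct gadget per segment'' configuration is unrealizable. You would need to replicate the gadget graph across $k$ timestamps (or otherwise re-route $k$), and this is not addressed. Second, and more seriously, the key gadget property you defer --- that a single gadget is strictly denser than any union of distinct gadgets --- is not merely hard to engineer but is in direct tension with what the coverage term needs. For coverage to register an element $u \in S_j$, the density-optimal sub-gadget of $H_j$ must \emph{contain} $u$. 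But if two gadgets $H_i, H_j$ of equal density $d$ both contain a shared element vertex, then
\[
\dens(H_i \cup H_j) \;=\; \frac{|E(H_i)|+|E(H_j)|}{|V(H_i)\cup V(H_j)|} \;>\; \frac{|E(H_i)|+|E(H_j)|}{|V(H_i)|+|V(H_j)|} \;=\; d,
\]
so the merged configuration is strictly denser and the density term actively rewards packing overlapping gadgets together --- the exact shortcut you set out to block. Your proposed fix (low-degree bridges to the elements) makes the elements drop out of the densest sub-gadget, which destroys the correspondence between coverage and $|S_{j_1}\cup\cdots\cup S_{j_k}|$. Until a concrete gadget resolving this tension is exhibited, the reduction does not close.

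For contrast, the paper avoids gadgets entirely: it sets $k=1$ on a one-timestamp instance and reduces from the \atleastk problem via a Lagrangian/binary-search argument over $\lambda$, using the facts that the cover value $c(\lambda)$ of an optimal solution is non-decreasing in $\lambda$, that it reaches any target size for $\lambda$ large enough, and that an optimum with cover $c$ must attain the maximum density among subgraphs with at least $c$ nodes. This yields a Turing reduction solving $\bigO(\log n)$ instances of Problem~\ref{p2}, sidestepping the density-dilution difficulties that sink the direct gadget construction.
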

\begin{proposition}
	Function $\dvrs(\graphs)$ is a non-negative non-decreasing submodular function of subgraphs.
\end{proposition}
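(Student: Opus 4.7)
The plan is to verify the three stated properties in turn. Non-negativity of $\dvrs$ is immediate, since $w$ is assumed non-negative and $\dvrs$ is a sum of values of $w$. Monotonicity under inclusion of $\graphs$ follows because adding a subgraph to $\graphs$ can only increase each counter $x_v(\graphs)$ (by $0$ or $1$) and never decrease it; combined with $w$ being non-decreasing, every summand $w(x_v(\graphs))$ is non-decreasing, so the total $\dvrs$ is non-decreasing as well.

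The substantive part of the proof is submodularity, which I would establish by reducing it to a vertex-wise inequality and then exploiting the concavity of $w$. Fix two families $\graphs_1 \subseteq \graphs_2$ and a subgraph $H \notin \graphs_2$, and consider an arbitrary $v \in V$. Set $x_1 = x_v(\graphs_1)$ and $x_2 = x_v(\graphs_2)$; the inclusion $\graphs_1 \subseteq \graphs_2$ gives $x_1 \le x_2$. If $v \notin V(H)$, adding $H$ leaves $x_v$ unchanged in both families, so the marginal contribution of $v$ is $0$ on both sides. If $v \in V(H)$, the marginal contributions equal $w(x_1+1) - w(x_1)$ in the smaller family and $w(x_2+1) - w(x_2)$ in the larger one.

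Because $w$ is concave on the non-negative integers, its discrete first differences $w(x+1)-w(x)$ are non-increasing in $x$; combined with $x_1 \le x_2$, this yields $w(x_1+1) - w(x_1) \ge w(x_2+1) - w(x_2)$ vertex-wise. Summing over all $v \in V$ gives $\dvrs(\graphs_1 \cup \{H\}) - \dvrs(\graphs_1) \ge \dvrs(\graphs_2 \cup \{H\}) - \dvrs(\graphs_2)$, which is precisely the diminishing-returns characterization of submodularity. The only conceptually nontrivial step is the concavity argument; everything else is routine, using the standard fact that the three properties are all preserved by non-negative summation.
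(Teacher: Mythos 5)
Your proposal is correct and follows essentially the same route as the paper: both reduce to the per-vertex functions $x_v$, which are modular counting functions, and then invoke concavity of $w$ (the paper by citing the general fact that a non-decreasing concave function composed with a non-decreasing modular/submodular function is submodular, you by verifying the diminishing-returns inequality for the discrete first differences of $w$ directly) before summing over $v \in V$. Your version simply inlines the proof of the composition lemma that the paper cites as a known property.
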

\begin{proof}
	For a fixed $v\in V$ function $x_v(\graphs)$ is non-decreasing modular (and submodular): for any set of subgraphs $X$ and a new subgraph $x$ holds that $x_v(X\cup \{x\})-x_v(X)=1$ if $v$ belongs to $x$ and does not belong to any subgraph in $X$. Otherwise $0$. By property of submodular functions, composition of concave non-decreasing and submodular non-decreasing is non-decreasing submodular. Function $\dvrs(\graphs)$ is submodular non-decreasing as a non-negative linear combination. Non-negativity follows from non-negativity of $w$.	
\end{proof}	

\subsection{K static densest subgraphs and generalized average degree}


Before going into the temporal segmentation, we briefly consider the static case:
\begin{problem}\label{stat}
	Given a static graph $H=(V,\estat)$, integer $k$, $\lambda\geq 0$. Find a set of $k$ subgraphs $\stgraphs = \{H_i \in H\}$, such that $\profit_{ST} = \sum_{H_i\in \stgraphs}\dens(H_i) + \lambda\cdot\dvrs(\stgraphs)$ is maximized.
\end{problem}

To solve this problem we can search greedily over subgraphs. Let $\stgraphs_{i-1} = \{H_1,\dots, H_{i-1}\}$, and define marginal node gain, given weight function $w$, as $\delta_v(H_i, \stgraphs_{i-1}\mid w) = w(x_v(\stgraphs_{i-1}\cup \{H_i\}))-w(x_v(\stgraphs_{i-1}))$. Then denote marginal gain of subgraph $H_i$ given already selected graphs $\stgraphs_{i-1}$ as
\begin{equation*}
\mg(H_i, \stgraphs_{i-1}\mid w)=\dens(H_i)+\lambda\sum_{v\in H_i}\delta_v(H_i, \stgraphs_{i-1}\mid w). \\
\end{equation*}

Greedy algorithm for Problem~\ref{stat} consequently builds the set $\stgraphs$ by adding $H_i$, which maximizes gain $\mg(H_i, \stgraphs_{i-1})$. If we can find $H_i$ optimally, such greedy gives $1-1/e$ approximation due to submodular maximization over cardinality constrains (see~\cite{nemhauser1978analysis} for this classic result, Euler's number $e\approx 2.71828$).

\begin{problem}\label{statgreedystep}
	Given a static graph $H=(V,\estat)$, a set of subgraphs $\stgraphs_{i-1} = \{H_1,\dots, H_{i-1}\}$, find graph $H_i \in H$, such that $\mg(H_i, \stgraphs_{i-1})$ is maximized.
\end{problem}

Before we proceed, we define a more general and simple version of Problem~\ref{statgreedystep}. First, we note that preselected subgraphs $\stgraphs_{i-1}$ contribute only to $\delta_v(H_i, \stgraphs_{i-1}\mid w)$ and this term does not change through iterations. Thus, once term $\delta_v(H_i, \stgraphs_{i-1}\mid w)$ is recalculated we can exclude $\stgraphs_{i-1}$ from consideration.

Next, we define a generalized degree as a function of nodes defined as $\g(v\mid H) = \sum_{u\in V\setminus\{v\}} a(v,u\mid H)$ with $a(v,u\mid H)\geq 0$.
Let $\ind(v,u\mid H)$ be $1$, if there is an edge between $u$ and $v$ in graph $H$ and $0$ otherwise.
If $a(v,u\mid H) = \ind(v,u\mid H)$, then $\g(v\mid H)=\deg(v\mid H)$---a degree of node $v$ in $H$.
Denote a half of the average generalized degree of graph $H$ as $\gdens(H)=\frac{1}{2|V(H)|}\sum_{v\in V(H)} \g(v\mid H)$.

\begin{problem}\label{maxgendegree}
	Given a static graph $H=(V,\estat)$ find graph $H_i=(V(H_i),\estat(H_i)) \subseteq H$, such that a half of the average generalized degree $\gdens(H_i)$ is maximized.
\end{problem}

If $a(v,u\mid H) = \ind(v,u\mid H)$, then the profit of Problem~\ref{maxgendegree} is the half of average degree $\gdens(H_i)=\dens(H_i)$.
On the other hand, when $a(v,u\mid H) = \ind(v,u\mid H) + 2\lambda |V(H)| \delta_v(H_i, \stgraphs_{i-1}\mid w)$, then Problem~\ref{maxgendegree} is equivalent to Problem~\ref{statgreedystep}. Note that in the latter case $a$ depends on the number of nodes in graph $H$.

We will continue analysis with Problem~\ref{maxgendegree}.


\begin{proposition}\label{prop:NPP5}
	There is no polynomial solution for Problem~\ref{maxgendegree} unless P=NP.
\end{proposition}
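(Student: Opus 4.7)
The plan is to reduce from the densest-$k$-subgraph problem (DkS), a canonical NP-hard problem: given a graph $G=(V,E)$ and an integer $k$, find a vertex subset of size exactly $k$ inducing the maximum number of edges. The key enabling observation is that Problem~\ref{maxgendegree} is stated very flexibly: the coefficient $a(v,u\mid H')$ is allowed to depend on the candidate subgraph $H'$ itself, as explicitly witnessed by the reduction from Problem~\ref{statgreedystep} sketched just above, whose coefficient involves the factor $2\lambda |V(H')|$. This dependence is exactly the leverage needed to encode a cardinality constraint.

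Given a DkS instance $(G,k)$, I would take the input graph to be $H=G$ and define
\[
    a(v,u\mid H') \;=\; \ind(v,u\mid H')\cdot c(|V(H')|),
\]
where $c\colon \mathbb{N}\to\{0,1\}$ is the indicator of $\{k\}$, i.e., $c(k)=1$ and $c(j)=0$ for $j\neq k$. This $a$ is non-negative and computable in constant time from $H'$, hence a legitimate input of Problem~\ref{maxgendegree}. A routine calculation using $\sum_{v\in V(H')} \deg_{H'}(v) = 2|E(H')|$ then yields
\[
    \gdens(H') \;=\; c(|V(H')|)\cdot\frac{|E(H')|}{|V(H')|}.
\]
Subgraphs with $|V(H')|\neq k$ therefore contribute zero, while subgraphs with exactly $k$ vertices contribute $|E(H')|/k$. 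The maximizer of $\gdens$ over subgraphs of $H$ is consequently a densest-$k$-subgraph of $G$, so any polynomial-time algorithm for Problem~\ref{maxgendegree} would solve DkS in polynomial time, implying $\poly=\np$.

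The main point to sanity-check is simply that the construction stays inside the stated problem definition. Since $a$ depends only on the edge indicator and on $|V(H')|$, and since Problem~\ref{maxgendegree} imposes no monotonicity, symmetry, or locality assumption on $a$ beyond non-negativity, no obstacle arises. The hardest conceptual step is recognising this flexibility in the definition; once that is noted, the reduction itself is essentially one line. An almost identical argument starting from the densest-at-least-$k$-subgraph problem, using $c(j)=1$ for $j\ge k$ and $c(j)=0$ otherwise, would yield the same conclusion.
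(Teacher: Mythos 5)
Your reduction is formally valid against the letter of Problem~\ref{maxgendegree}: the definition constrains $a(v,u\mid H)$ only to be non-negative, your choice does make $\gdens(H')=c(|V(H')|)\,|E(H')|/|V(H')|$ as you compute, and a polynomial-time maximizer would then solve densest-$k$-subgraph. But this is a genuinely different argument from the paper's, and it proves hardness only for a contrived $a$ that the paper never instantiates. The paper's own proof (spelled out for Proposition~\ref{prop:NPP2} and invoked ``similarly'' for this statement) reduces the \np-hard \atleastk problem to the one instance that actually occurs in the algorithmic pipeline, namely $a(v,u\mid H)=\ind(v,u\mid H)+2\lambda|V(H)|\delta_v(\cdot)$ with a $0$--$1$ cover: it shows that the number of covered nodes in the optimal solution is monotone in $\lambda$, and then binary-searches over $\lambda$ to force any prescribed lower bound on the subgraph's size. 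That establishes hardness for precisely the instance class on which \staticGreedy is subsequently claimed to give a $1/2$-approximation. Your instance, in contrast, is non-monotone under node addition ($\gdens$ collapses to zero as soon as $|V(H')|\neq k$), which is exactly the property the proof of Proposition~\ref{prop:statgreedy} relies on (``$\g$ is increasing w.r.t.\ node addition''); indeed, if such an $a$ were admissible, a $1/2$-approximation for Problem~\ref{maxgendegree} would yield a $2$-approximation for densest-$k$-subgraph, which is not believed to exist. So your proof settles the proposition as literally stated, but it certifies hardness of a degenerate corner of the definition rather than of the coverage-augmented objective the paper is actually about; the Lagrangian/binary-search reduction from \atleastk is the argument that is consistent with the rest of the development.
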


To solve Problem~\ref{maxgendegree} efficiently we can modify Charikar's algorithm for densest subgraphs~\cite{charikar2000greedy} and obtain $1/2$ approximation guarantee.

\begin{algorithm}[t]
	\KwData{static graph $H=(V,\estat)$}
	\KwResult{subgraph $\bar H\subseteq H $ which maximizes $\gdens(\bar H)$}
	$\bar H=H$\;
	\While{$H\not = \emptyset$}
	{
		$v = \arg\min_{v\in H} \g(v\mid H)$\;
		$H=H\setminus\{v\}$\;
		\lIf {$\gdens(H) > \gdens(\bar H)$} {$\bar H=H$}
	}	
	\Return $\bar H$
	\caption{\texttt {StaticGreedy}}
	\label{alg:statgreedy}
\end{algorithm}

\begin{proposition}\label{prop:statgreedy}
	Algorithm~\ref{alg:statgreedy} gives $1/2$ approximation for Problem~\ref{maxgendegree}.	
\end{proposition}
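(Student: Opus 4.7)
The plan is to adapt Charikar's classical $1/2$-approximation argument for densest subgraph to the generalized density $\gdens$. The template has two pieces: a lower bound on the minimum generalized degree inside any optimal subgraph, and an upper bound on that same quantity via the greedy choice.

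First I would establish the local-optimality property $\g(v\mid H^*) \geq \gdens(H^*)$ for every vertex $v$ of an optimal subgraph $H^*$. This follows by the standard rearrangement: since $H^*\setminus\{v\}$ is feasible, $\gdens(H^*\setminus\{v\}) \leq \gdens(H^*)$; expanding both sides as $\tfrac{1}{2|V(\cdot)|}\sum_u \g(u\mid\cdot)$ and clearing denominators yields the inequality. The step uses only that removing one vertex leaves the contributions $a(u,w\mid\cdot)$ between the remaining vertices essentially unchanged, which is true for the concrete instantiations of $a$ that the paper plugs in.

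Next I would run \staticGreedy up to the iteration at which it first deletes a vertex of $V(H^*)$. Call that vertex $v^*$ and the working graph at the start of that iteration $H$; note that $V(H^*)\subseteq V(H)$ since no vertex of $V(H^*)$ has been removed yet. Two bounds on $\g(v^*\mid H)$ then combine. The $\arg\min$ choice on line~3 gives
\[
    \g(v^*\mid H) \;\leq\; \frac{1}{|V(H)|}\sum_{u\in V(H)} \g(u\mid H) \;=\; 2\gdens(H).
\]
In the other direction, non-negativity of $a$ allows restricting the sum defining $\g(v^*\mid H)$ to $V(H^*)\setminus\{v^*\}$, and monotonicity of $a$ with respect to enlarging the working graph gives
\[
    \g(v^*\mid H) \;\geq\; \sum_{u\in V(H^*)\setminus\{v^*\}} a(v^*,u\mid H) \;\geq\; \g(v^*\mid H^*) \;\geq\; \gdens(H^*).
\]
Chaining the two displays yields $\gdens(H)\geq \gdens(H^*)/2$, and since line~5 of Algorithm~\ref{alg:statgreedy} retains the best iterate encountered, the returned $\bar H$ inherits $\gdens(\bar H)\geq \gdens(H^*)/2$.

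The main obstacle is the monotonicity used in the lower-bound chain: non-negativity of $a$ alone is not enough, one also needs $a(v^*,u\mid H)\geq a(v^*,u\mid H^*)$ whenever $H^*\subseteq H$. For the two relevant choices---the plain edge indicator $\ind(v,u\mid H)$, and the weight $\ind(v,u\mid H)+2\lambda|V(H)|\,\delta_v$ used to reduce Problem~\ref{statgreedystep} to Problem~\ref{maxgendegree}---both terms are clearly non-decreasing in the working graph, so the monotonicity is immediate. The cleanest write-up simply takes this mild monotonicity as a hypothesis on $a$, after which the Charikar-style template applies verbatim.
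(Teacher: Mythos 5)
Your argument is correct and is essentially the paper's own proof: the same local-optimality lemma $\g(v\mid H^*)\geq\gdens(H^*)$ for $v\in V(H^*)$, followed by the Charikar/Khuller--Saha step of examining the iteration at which the first vertex of $V(H^*)$ is deleted and combining the min-degree upper bound $2\gdens(H)$ with the monotonicity lower bound $\gdens(H^*)$. The only difference is cosmetic: you state explicitly the monotonicity hypothesis on $a(\cdot,\cdot\mid H)$ under vertex addition, which the paper invokes implicitly via ``$\g$ is increasing w.r.t.\ node addition.''
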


Time complexity of Algorithm~\ref{alg:statgreedy} is quadratic of number of nodes (not linear of the edges, like in the case of densest subgraph), as $|V(H)|$ decreases on each step and we need to update generalized degree of all nodes, not only neighbors of the removed node. With a Fibonacci heap time complexity is $\Theta(|V| + |V|(\log{|V|} + |V|))=\Theta(|V|^2)$.

\subsection{Incremental case}

Here we consider the setting of incremental updates for Problem~\ref{maxgendegree}, which may be not interesting by itself, but we will use it as a subroutine for temporal case.

Given a stream of incremental edge updates to graph $H$ we would like to find and keep up-to-date a subgraph $H_i$, which maximizes $\gdens(H_i)$ for some generalized degree function $\g(u,v\mid H_i)$.

To keep $H_i$ updated we can use the data structure and update procedure designed for the densest subgraph by Epasto et al.~\cite{epasto2015efficient}. In the full version of this paper we describe the approach of Epasto et al. and necessary modifications to handle generalized degree. We will refer to this extension as \approxgendens. The algorithm provides $2(1 + \epsilon)$-approximate generalized density densest subgraph using edge insertions.

Similar to the original algorithm, the generalization requires $O(|V|+|E|)$ of space, while running time increases: $O(\frac{|V|^2}{|E|}\epsilon^{-2}\log^2{D})$ amortized cost per edge insertion, with $D=O(|V|)$ is the maximum of average generalized degree.

\subsection{Greedy dynamic programming}

Similarly to Problem~\ref{p1}, we will use dynamic programming for Problem~\ref{p2}. However, as the problem is hard we have to rely on greedy choices of the subgraphs. Thus, the obtained solution does not have any quality guarantee.

Let $M[\ell,i]$ be the profit of $i$ first
points into $\ell$ intervals, let $C[\ell,i]$ be the set of subgraphs
$\graphs_{\ell}=\{G_1, \dots, G_{\ell}\}$ selected on these $\ell$ intervals, $1\leq
\ell \leq k$ and $0\leq i\leq \points$.

Define marginal gain interval $[j,i]$, given that $j-1$ are already segmented into $\ell - 1$ interval,
\begin{equation*}
	\dpgain([j,i], C[\ell-1, j-1]) = \max_{G'\subseteq G([j,i])}\mg(G', C[\ell-1, j-1]).
\end{equation*}

Dynamic programming recurrence:
\[
\begin{split}
	M[\ell,i]  = & \max_{1\leq j\leq i+1} M[\ell-1,j-1] \\
	           & + \dpgain([j,i], C[\ell-1, j-1]) \textrm{ for } 1 < \ell \leq k, \\
	M[1,i]   = & \ddens([0, i]) \textrm{ for } 0\leq i\leq \points, \\
	M[k',0]  = & 0 \textrm{ for } 1\leq k'\leq k.
\end{split}
\]

After filling this table, $M[k,m]$ contains the profit of k-segmentation with subgraph overlaps. $C[k,m]$ will contain selected subgraphs, the intervals and subgraphs can be reconstructed, if we keep track of the starting points of selected last intervals. Note, that profit $M[k,m]$ is not optimal, because the choice of subgraph $G_i$ depends on the interval and the previous choices, and there is a fixed order, in which we explore intervals.

We perform dynamic programming by approximation algorithm \approxdp, and the densest subgraph for each candidate interval is retrieved by \approxgendens. We refer to the resulting algorithm as \segmentCover.


To keep track on number of $x_v$ when we construct $\graphs$ we need to keep frequencies of each node. To avoid extensive memory costs, in the experiments we use Min-Count sketches.


\section{Experiments}
\label{section:experiments}

We evaluate the performance of the proposed algorithms on synthetic graphs and real-world social networks.
The datasets are described below.
Unless specified, we post-process the output of all algorithms and report the optimal densest subgraphs in the output intervals. Our datasets and implementations are publicly available.\footnote{\url{https://github.com/polinapolina/segmentation-meets-densest-subgraph}}

\subsection{Synthetic data.}
We generate a temporal network with $k$ planted communities and a background network.
All graphs are Erd\H{o}s-R\'{e}nyi.
The communities $G'$ have the same density, disjoint set of nodes, and
are planted in non-overlapping intervals.
The background network $G$ includes nodes from all planted communities $G'$.
The edges of $G$ are generated uniformly on the timeline.
In the typical setup the length of the whole time interval $T$ is $\abs{T}=1000$ time units,
while the edges of each $G'$ are generated in intervals of length $\abs{T'}=100$ time units.
The densities of the communities and the background network vary.
The number of nodes in $G$ is set to $100$.

We test the ability of our algorithms to discover planted communities in two settings.
In the first setting (dataset family \synone) we vary the average degree of the background network
from $1$ to $6$ and fix the density of the planted $5$-cliques to $4$.
\synone allows to test the robustness against background noise.
In the second setting (dataset family \syntwo) we vary the density of planted $8$-node graphs from $2$ to $7$,
while the average degree of the background network is fixed to $2$.

\subsection{Real-world data.}
We use the following real-world datasets:
\facebook~\cite{viswanath2009evolution} is a subset of Facebook activity in the New Orleans regional community. Interactions are posts of users on each other walls.
The data covers the time period from 9.05.06 to 20.08.06.
The \twitter dataset tracks activity of Twitter users in Helsinki in year 2013.
As interactions we consider tweets that contain mentions of other users.
The \students\footnote{\url{http://toreopsahl.com/datasets/\#online\_social\_network}}
dataset logs activity in a student online network at the University of California, Irvine.
Nodes represent students and edges represent messages with ignored directions.
\enron:\footnote{\url{http://www.cs.cmu.edu/~./enron/}} is a popular dataset that contains email
communication of senior management in a large company and spans several years.

For a case study we create a hashtag network from Twitter dataset (the same tweets from users in Helsinki in year 2013): nodes represent hashtags -- there is an interaction, if two hashtags occur in the same tweet.
The timestamp of the interaction corresponds to the timestamp of the tweet. 
We denote this dataset as \twitterH.

\subsection{Optimal baseline}

A natural baseline for \segment is \segmentOpt, which combines exact dynamic programming with finding the optimal densest subgraph 
for each candidate interval. 
%
Due to the high time complexity of \segmentOpt we generate a very small dataset with $60$ timestamps,
where each timestamp contains a random graph with $3$--$6$ nodes and random density.
We vary the number of intervals $k$ and report the value of the solution
(without any post-processing) and the running time in Figure~\ref{fig:opt}.
On this toy dataset \segment is able to find near-optimal solution,
while it is significantly faster than \segmentOpt.


\begin{figure}[t]
	\begin{center}
		\setlength{\tabcolsep}{0pt}
		\newlength{\figlength}
		\setlength{\figlength}{0.45\textwidth}
		\begin{tabular}{c@{\hspace*{1mm}}*{2}{c}}
			objective function value & running time (sec.)\\
			\includegraphics[width=0.52\figlength]{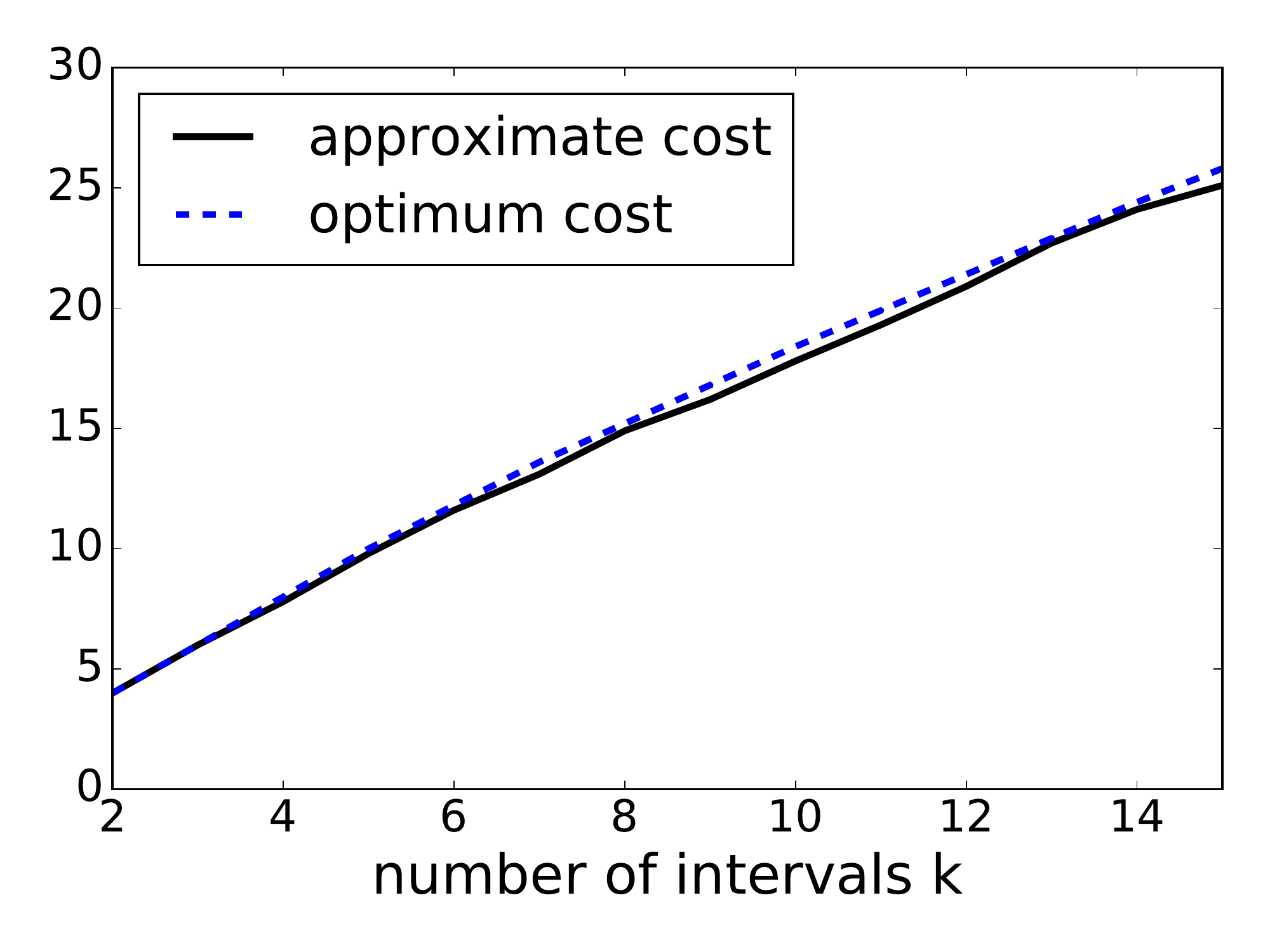}
			&\includegraphics[width=0.52\figlength]{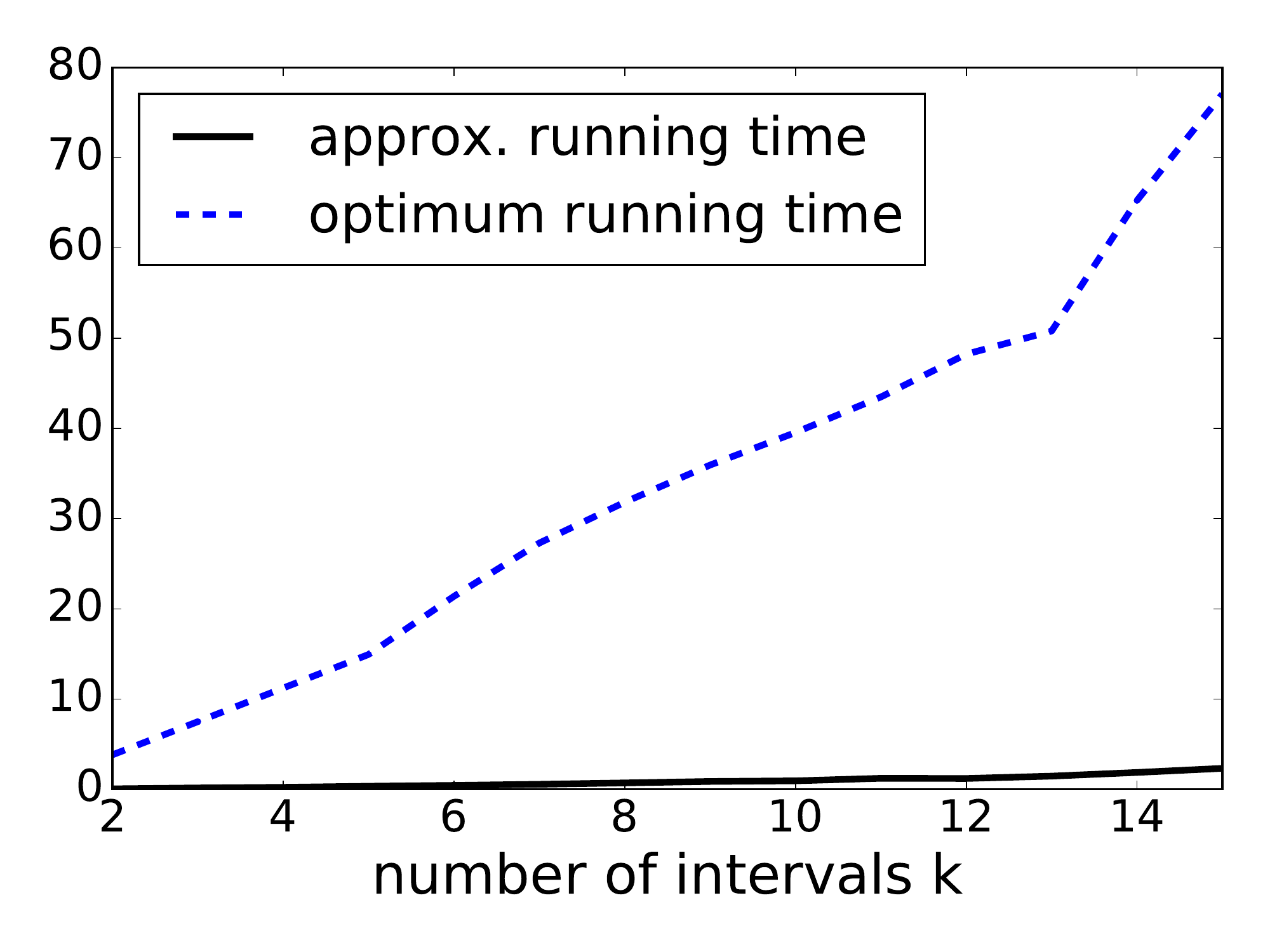}
		\end{tabular}
	\end{center}
	\caption{Comparison between optimum and approximate solutions (\segmentOpt and \segment). Approximate algorithm was run with $\epsilon_1=\epsilon_2=0.1$. Running time is in seconds.}
	\label{fig:opt}
\end{figure}

\subsection{Results on synthetic datasets}

Next, we evaluate the performance of \segment on the synthetic datasets \synone and \syntwo
by assessing how well the algorithm finds the planted subgraphs.
We report mean precision, recall, and $F$-measure, calculated with respect to the ground-truth
subgraphs.
All results are averaged over 100 independent runs.

First, Figure~\ref{fig:synth}(a) depicts the quality of the solution as a function of background noise.
Recall, that the \synone dataset contains planted 8-node subgraphs with average degree~$5$.
Precision and recall are generally high for all values of average degree in the background network.
However, precision degrades as the density of the background network increases,
as then it becomes cost-beneficial to add more nodes in the discovered densest subgraphs.

Second, Figure~\ref{fig:synth}(b) shows the quality of the solution of \segment as a function of the density in the planted subgraphs. In \syntwo the density of the background is $2$. Similarly to the previous results, the quality of the solution, especially recall, degrades much only when the density of the planted and the background network become similar.

\begin{figure}[t]
	\begin{center}
		\setlength{\figlength}{0.45\textwidth}
		\begin{tabular}{c@{\hspace*{1mm}}*{2}{c}}
			 (a) effect of background noise & (b) effect of community density\\
			\includegraphics[width=0.52\figlength]{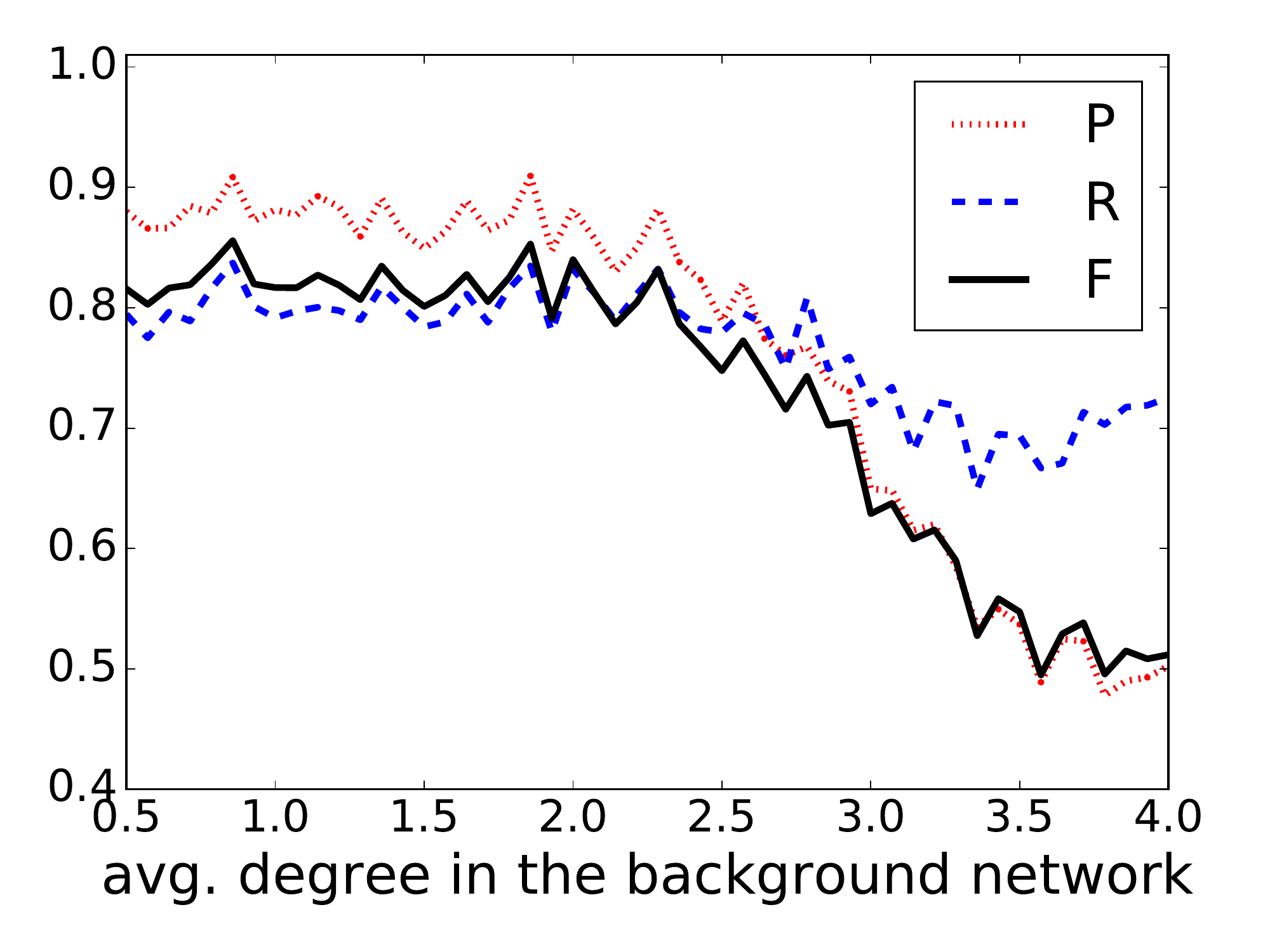}
			&\includegraphics[width=0.52\figlength]{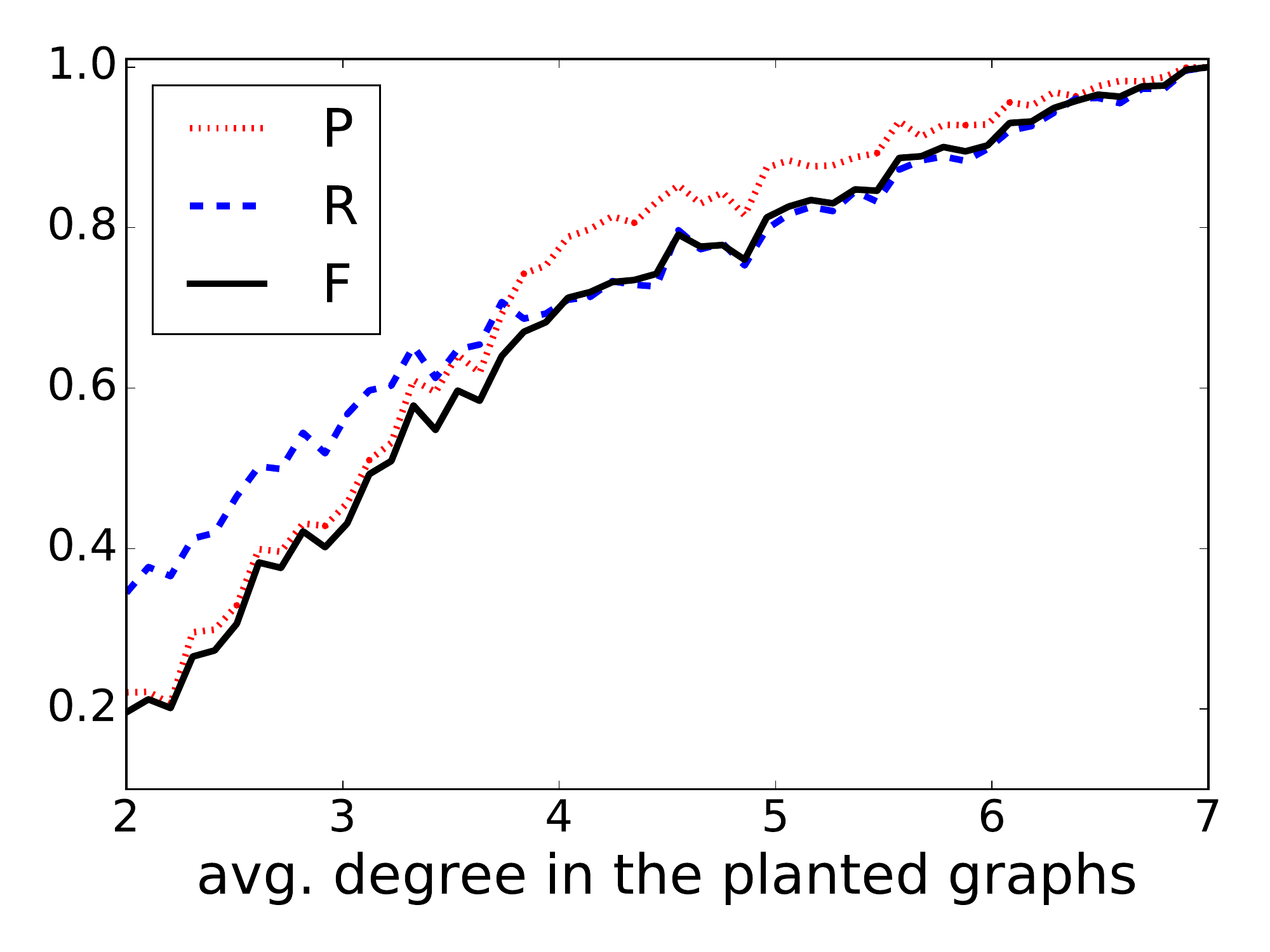}
		\end{tabular}
	\end{center}
	\caption{Precision, recall and $F$-measure on synthetic datasets. 
	For plot (a)~the community average degree is fixed to $5$ (\synone dataset), for plot (b) the background network degree is fixed to $2$ (\syntwo dataset)}.
	\label{fig:synth}
\end{figure}

\subsection{Results on real-world datasets}

As the optimal partition algorithm \segmentOpt is not scalable for real datasets,
we present comparative results of \segment with baselines \segmentDPopt and \segmentDSopt.
The \segmentDPopt algorithm performs exact dynamic programming,
but uses an approximate incremental algorithm for the densest subgraph search
(the incremental framework by Epasto et al.~\cite{epasto2015efficient}).
Vice versa, \segmentDSopt performs approximate dynamic programming
while calculating the densest subgraph optimally for each candidate interval
(by Goldberg's algorithm~\cite{goldberg1984finding}).
Note that \segmentDPopt has $2(1+\eds)^2$ approximation guarantee and
\segmentDSopt has $(1+\edp)$ approximation guarantee.
However, even these non-optimal baselines are quite slow on practice and
we use a subset of $1\,000$ interactions of \students and \enron datasets for comparative reporting.

To ensure fairness, we report the total density of the optimal densest subgraphs
in the intervals returned by the algorithms.

In Table~\ref{tab:density} we report the density of the solutions reported by
\segment, \segmentDPopt, and \segmentDSopt, as well as their running time.
We experiment with different parameters for the approximate densest-subgraph search ($\eds$)
and for approximate dynamic programming ($\edp$).

For both datasets the best solution was found by \segmentDSopt.
This is expected, as this algorithm has the best approximation factor.
The solution cost decreases as $\edp$ increases.
On the other hand, \segmentDSopt has the largest running time,
which decreases with increasing $\edp$,
but even with the largest parameter value
($\edp=2$) \segmentDSopt takes about an hour.

The \segmentDPopt algorithm typically finds the second-best solution,
however it only marginally outperforms \segment (e.g., $\eds=0.1$),
while requiring up to several orders of magnitude of higher computational time.
Naturally, the quality of the solution degrades with increasing $\eds$.

The solution quality degrades with increasing the approximation parameters for all algorithms.
However, the degradation is not as dramatic as the worst case bound suggests, and
using such an approximation parameter offers significant speed-up.
\segment provides the fastest estimates of a good quality
for a wide range of approximation parameters.
Note that \segment is more sensitive to the changes in the quality of the densest
subgraph search regulated by $\eds$.

\begin{table*}[t]
	\begin{center}
		\caption{Comparison with \segmentDPopt and \segmentDSopt baselines.}\label{tab:density}					
		\begin{tabular}{l @{\hspace{0.5mm}} c @{\hspace{1mm}} c}
			\toprule
			 Dataset& Community density & Running time (sec.)\\
			 \cmidrule{1-2}
			 \cmidrule{3-3}
			
			\students1000 &
			\begin{tabular}{ll|llll|l}
				&&\multicolumn{4}{c|}{$\eds$} &\\
				&\segment & 0.01 & 0.1& 1& 2 &\segmentDSopt \\\midrule
				
				\multirow{4}{*}{$\edp$}
				&0.01 &4.24&4.24 &3.82 &3.82 & 6.30\\
				&0.1 & 4.24&4.24 &3.82 &3.82 &6.22 \\
				&1 &  4.24&4.24 &3.82 &3.82  & 5.76\\
				&2 &  4.24&4.24 &3.82 &3.82  &5.61 \\\midrule
				&\segmentDPopt & 5.73 &5.73 & 3.82&3.82  & \\
			\end{tabular} &
			\begin{tabular}{ll|llll|l}
				&&\multicolumn{4}{c|}{$\eds$} &\\
				&\segment & 0.01 & 0.1& 1& 2 &\segmentDSopt \\\midrule
				
				\multirow{4}{*}{$\edp$}&
				0.01 &0.62 & 0.62& 0.63&0.64 &23678 \\
				&0.1 &0.23 & 0.23&0.24 &0.23 & 11877\\
				&1 & 0.13& 0.26&0.13 &0.13 & 3394\\
				&2 &0.36 & 0.20& 0.20&0.36 & 3769\\\midrule
				&\segmentDPopt &162 &43.5 & 29.5&  29.5& \\
			\end{tabular}
			
			\\\midrule
			\enron1000 &
			\begin{tabular}{ll|llll|l}
				&&\multicolumn{4}{c|}{$\eds$} &\\
				&\segment & 0.01 & 0.1& 1& 2 &\segmentDSopt \\\midrule
				
				\multirow{4}{*}{$\edp$}
				&0.01 &10.4 &10.4 & 10.0&10.5 & 11.3\\
				&0.1 &10.3 & 10.4& 10.0&10.3 & 11.0\\
				&1 & 9.54&9.54 &8.80 &9.83 & 11.0\\
				&2 &7.34 & 7.34& 7.34&7.34 & 10.8\\\midrule
				&\segmentDPopt &10.5 &11.0 &10.4 &10.4  & \\				
			\end{tabular} &
			\begin{tabular}{ll|llll|l}
				&&\multicolumn{4}{c|}{$\eds$} &\\
				&\segment & 0.01 & 0.1& 1& 2 &\segmentDSopt \\\midrule				
				\multirow{4}{*}{$\edp$}
				&0.01 & 56.4& 55.5& 42.3&31.8 & 25788\\
				&0.1 &3.02 &2.85 & 2.07 & 1.70& 16070\\
				&1 & 0.43& 0.44&0.29 &0.28 & 7834\\
				&2 & 0.22& 0.22& 0.23 &0.23 & 3469\\\midrule
				&\segmentDPopt & 1654& 61.15& 17.82&  6.07& \\
			\end{tabular}			
			\\\bottomrule
		\end{tabular}
	\end{center}
\end{table*}

\subsection{Running time and scalability}

Figure~\ref{fig:epsilons} shows running time of \segment
as a function of the approximation parameters $\eds$ and $\edp$.
The figure confirms the theory, that is,
$\eds$ has significant impact on the running time,
while the algorithm scales very well with~\edp.

We demonstrate scalability in Figure~\ref{fig:length},
plotting the running time for increasing number of interactions,
for \facebook and \twitter datasets.
Recall that the theoretical running time is $\bigO(k^2 m\log n)$,
where $n$ is the number of nodes and $m$ the number of interactions.
In practice, the running time grows fast for the first thousand interactions and
then saturates to linear dependence.
This happens because in the beginning of the network history the number of nodes grows fast.
In addition, new, denser than previously seen, subgraphs are more likely to occur.
Thus, the approximate densest-subgraph subroutine has to be computed more often.
Furthermore, the number of intervals $k$ contributes to running time as expected.

\begin{figure}[t]
	\begin{center}		
		\begin{tabular}{c@{\hspace{1mm}}c@{\hspace{-1mm}}c}
			& \students &\twitter\\
			\rotatebox{90}{\hspace*{0.3cm}\small{Running time} (sec)}&\includegraphics[width=0.45\columnwidth]{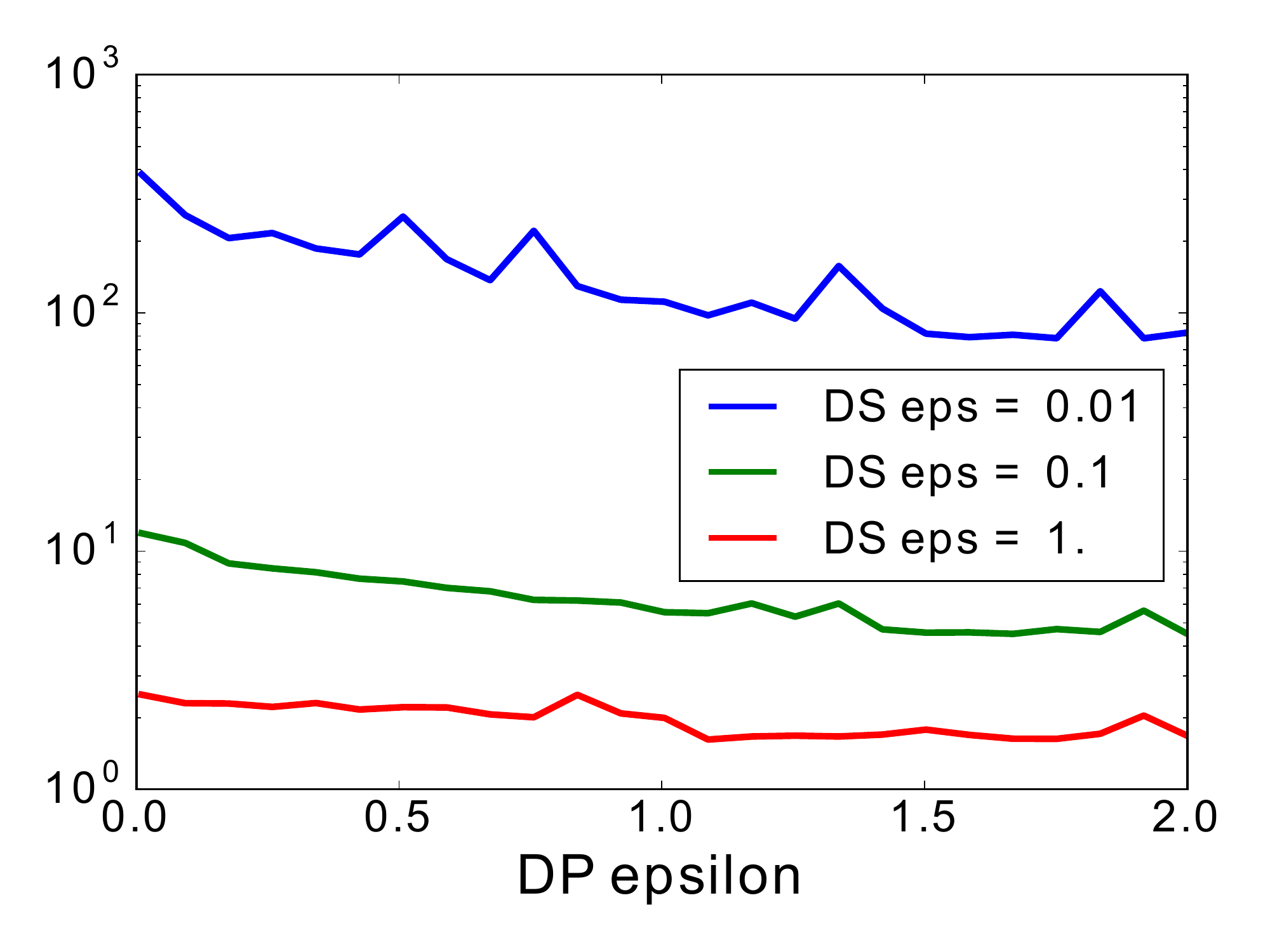}
			&\includegraphics[width=0.49\columnwidth]{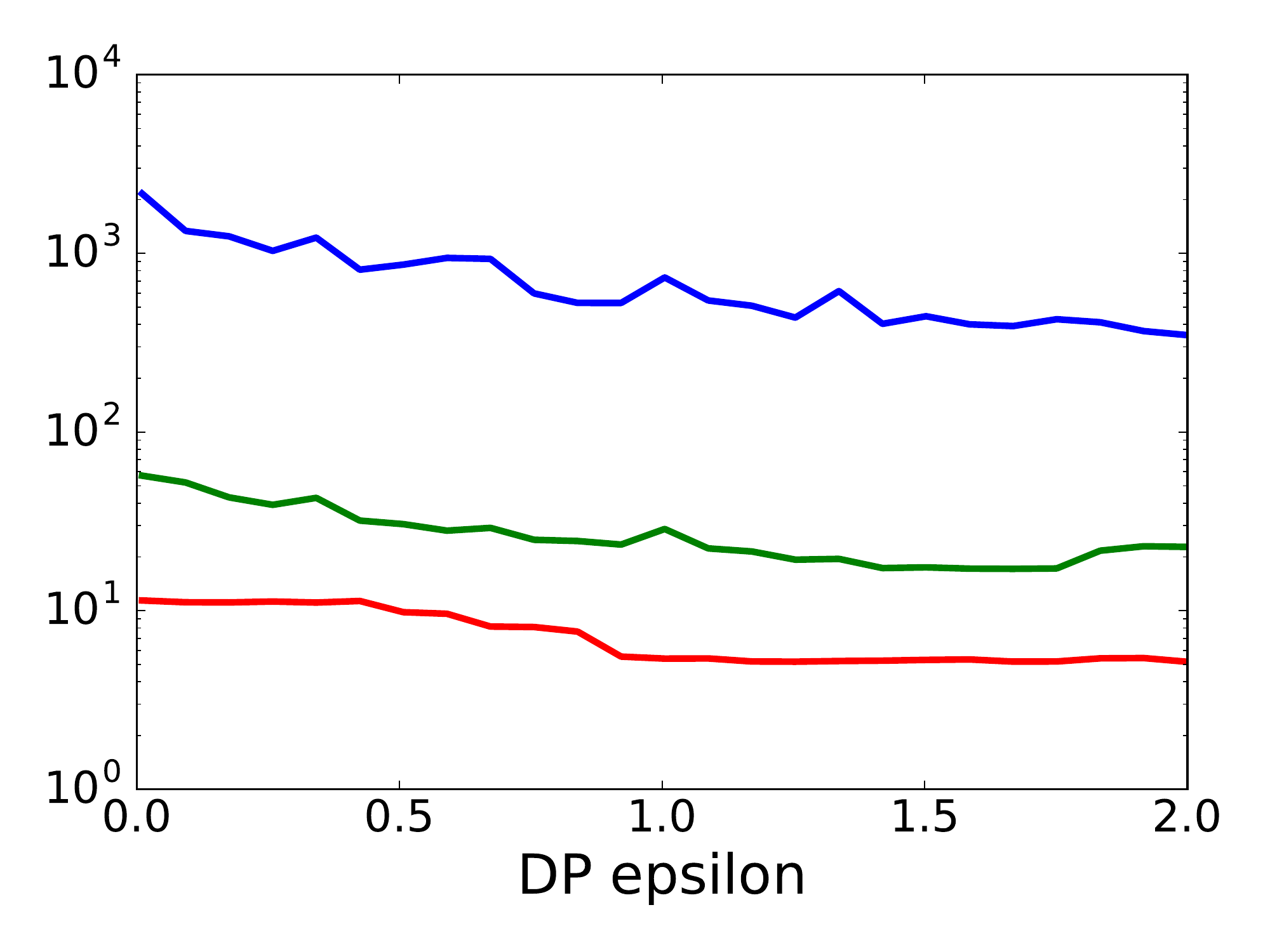}
		\end{tabular}
	\end{center}
	\caption{Effect of different approximation parameters in \segment. $k=20$.}
	\label{fig:epsilons}
\end{figure}

\begin{figure}[t]
	\begin{center}		
		\begin{tabular}{c@{\hspace{0.5mm}}c@{\hspace{-2mm}}c}
			& \facebook &\twitter\\
			\rotatebox{90}{\hspace*{0.3cm}\small{Running time} (sec)}&\includegraphics[width=0.49\columnwidth]{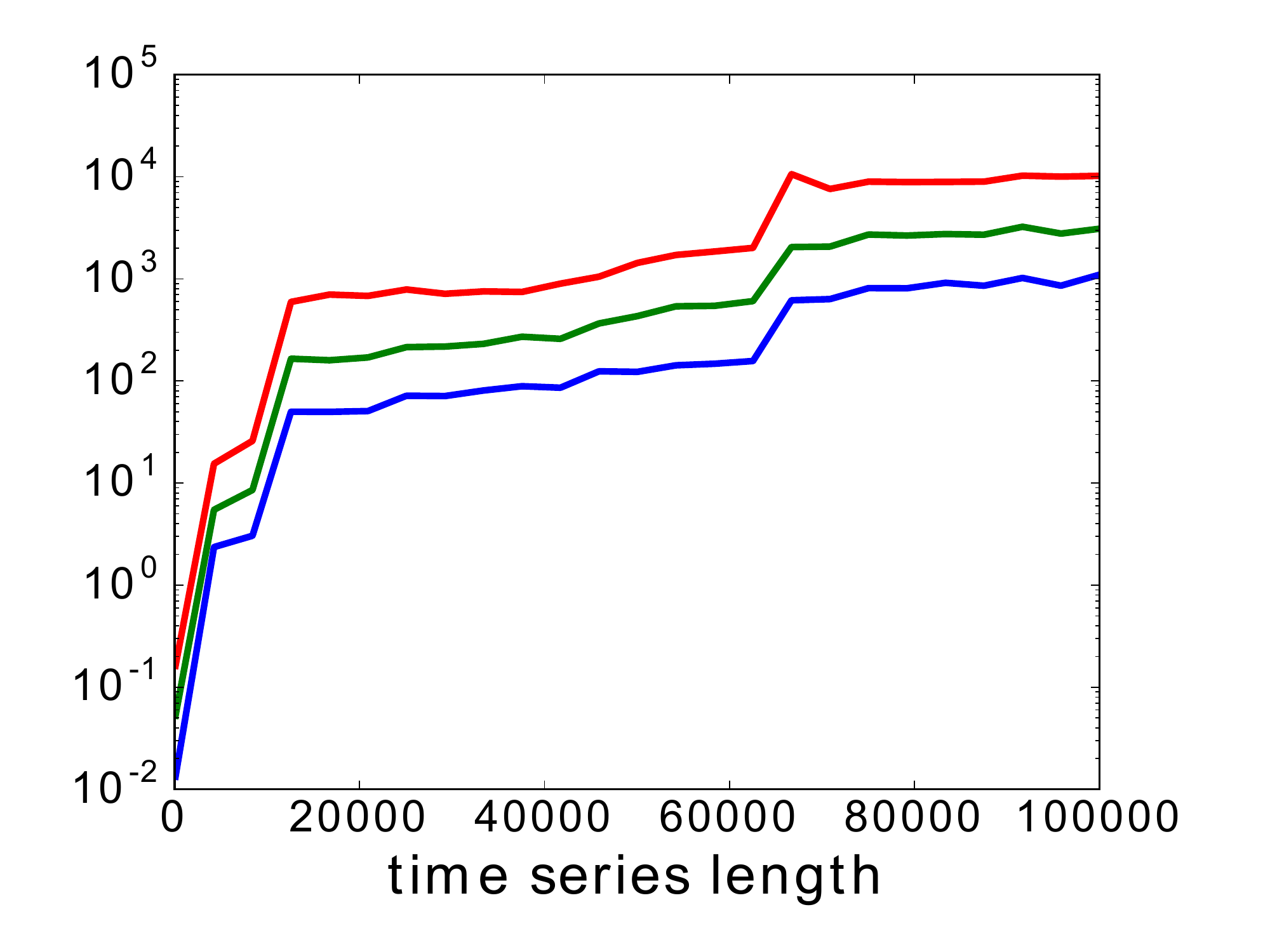}
			&\includegraphics[width=0.49\columnwidth]{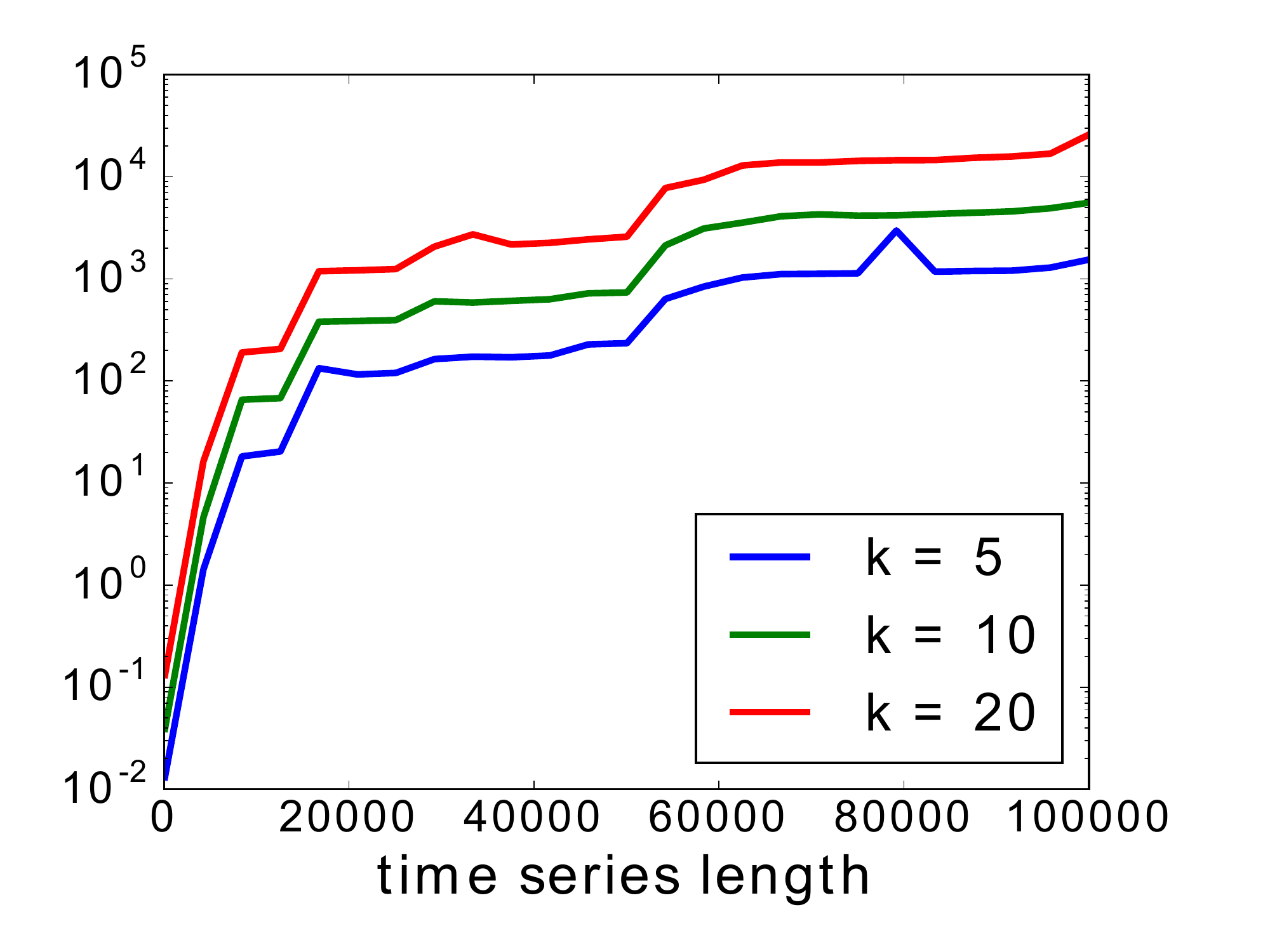}
		\end{tabular}
	\end{center}
	\caption{Scalability testing with $\eds=\edp=0.1$.}
	\label{fig:length}
\end{figure}

\subsection{Subgraphs with larger node coverage --- static graphs}

Next we evaluate \staticGreedy.
To measure coverage, we simply count the number of distinct nodes in the output subgraphs.
We use the 10K first interactions of \students dataset, set $k=20$,
and test different values of $\lambda$.
Figure~\ref{fig:static} shows the density and the pairwise Jaccard similarity
of the node sets of the retrieved subgraphs.
The subgraphs are shown in the order they are discovered.
Smaller values of $\lambda$ give larger density,
and larger values of $\lambda$ give more cover.
We observe that, for all values of $\lambda$,
in the beginning \staticGreedy returns diverse and dense subgraphs,
but soon after it returns identical graphs.
We speculate that the algorithm finds all dense subgraphs that exist in the dataset.
Regarding setting $\lambda$,  we observe that
$\lambda=0.002$ offers a good trade-off in finding
subgraphs of high density and moderate overlap.


\begin{figure*}[t]
	\begin{center}		
		\setlength{\figlength}{0.45\textwidth}
		\begin{tabular}{c@{\hspace*{1mm}}*{4}{c}}
		 	$\lambda=0.001$ & $\lambda=0.002$ & $\lambda=0.003$ &\\
			\includegraphics[width=0.52\figlength]{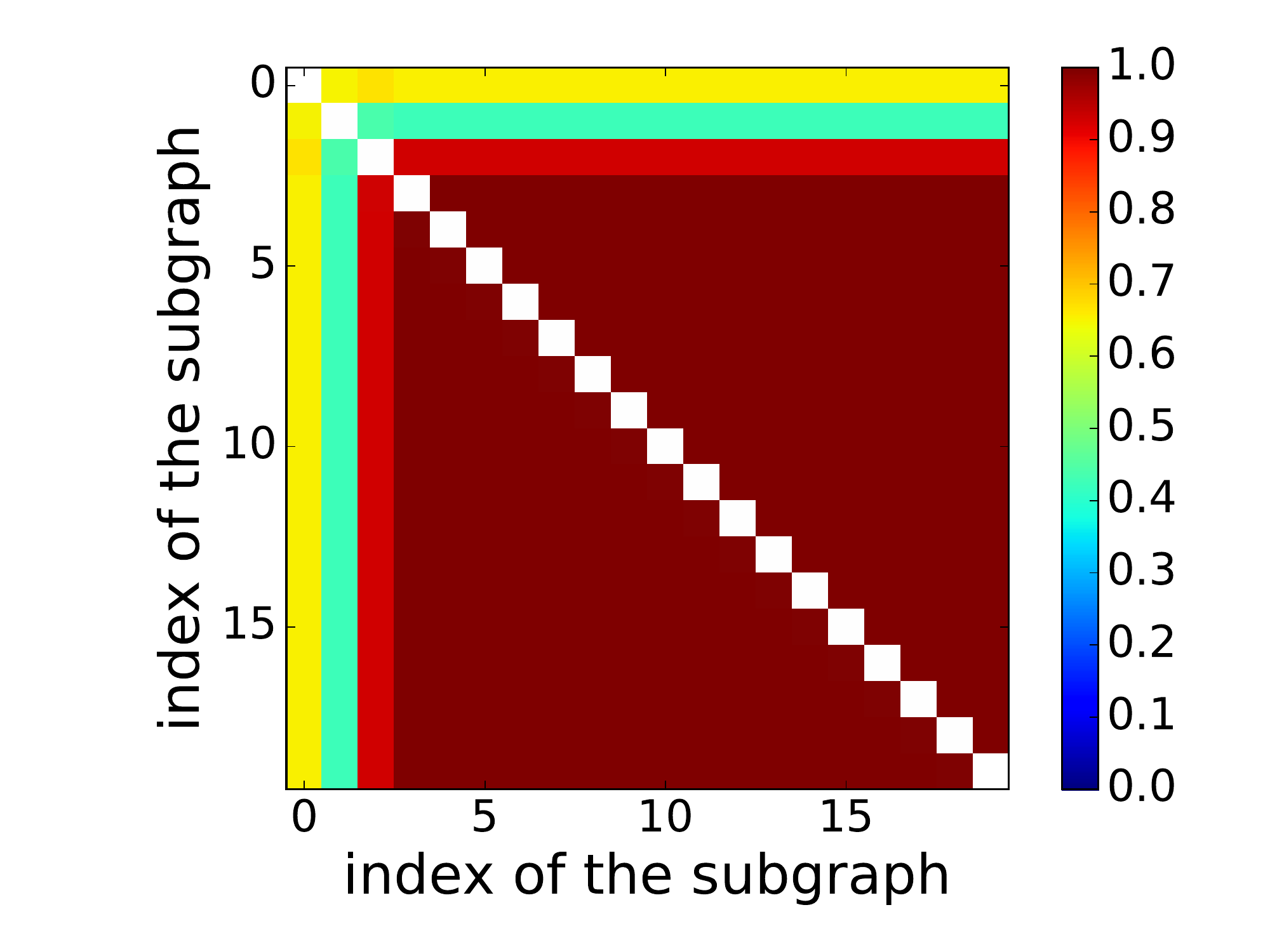}
			&\includegraphics[width=0.52\figlength]{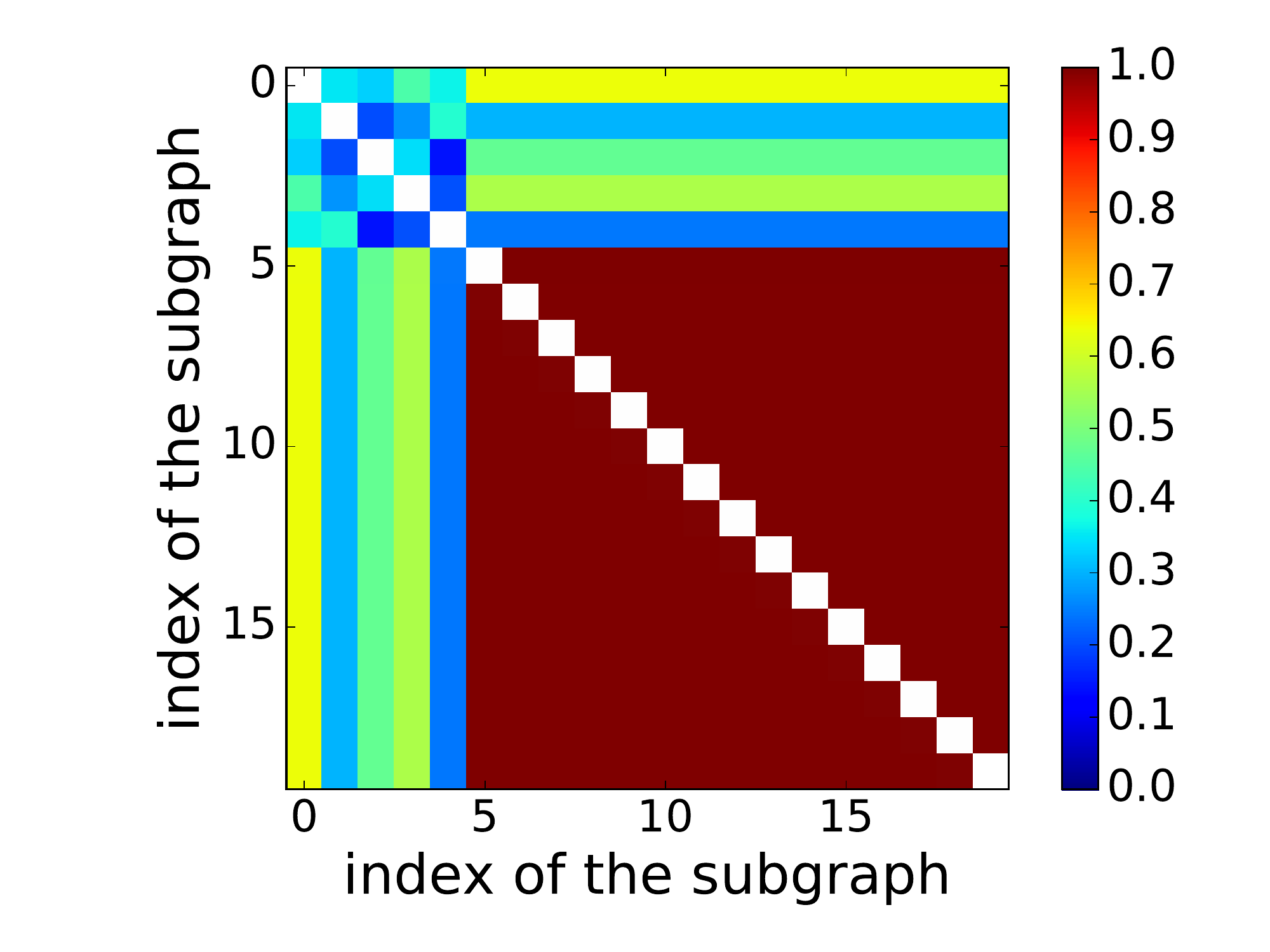}
			&\includegraphics[width=0.52\figlength]{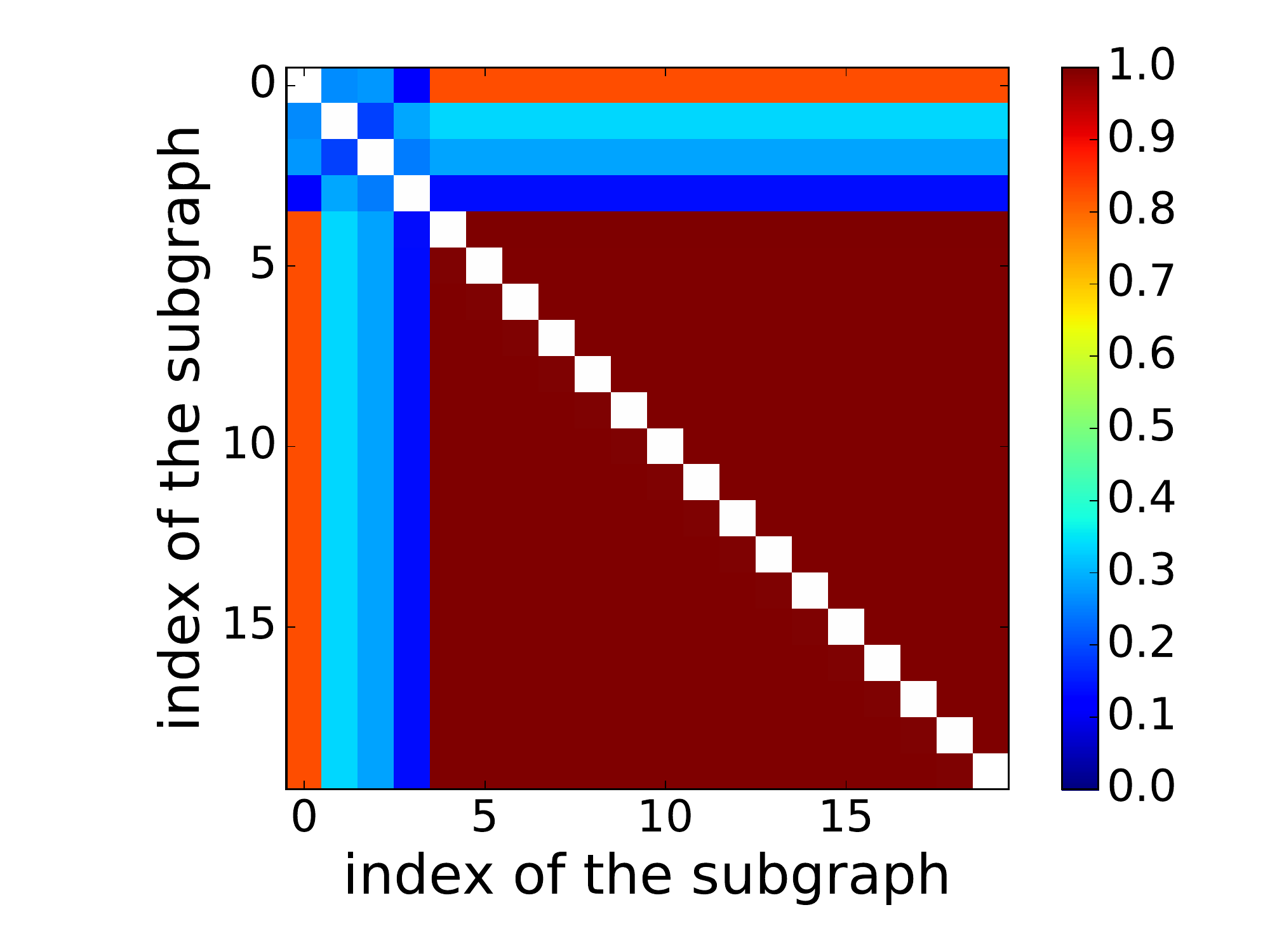}
			&\includegraphics[width=0.52\figlength]{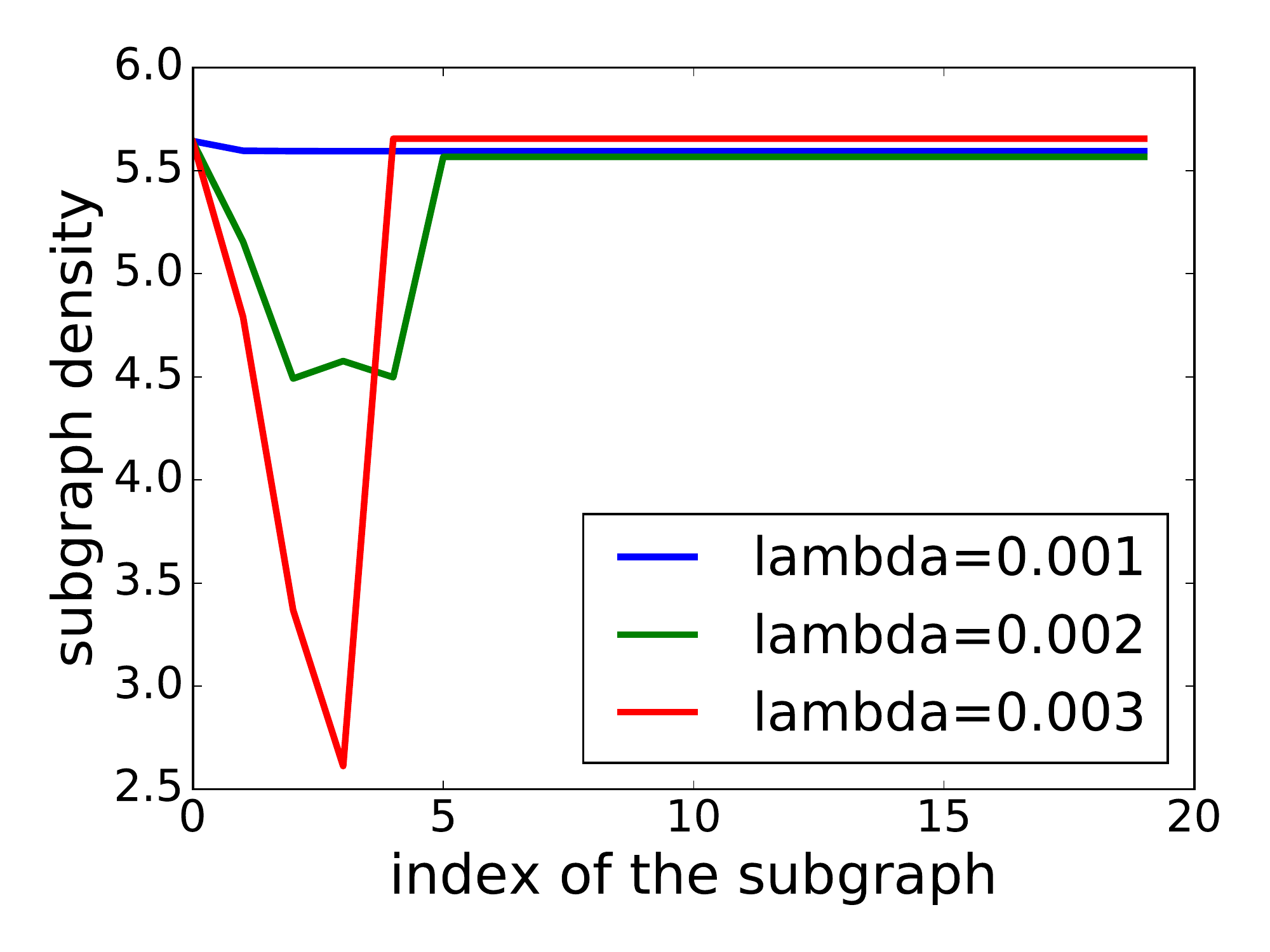}\\
		\end{tabular}
	\end{center}
	\caption{Pairwise similarities (3 heatmap plots on the left) and densities (right plot) of subgraphs returned by \staticGreedy.}
	\label{fig:static}
\end{figure*}

\subsection{Subgraphs with larger node coverage --- dynamic graphs}

Finally we evaluate the performance of \segmentCover algorithm.
We vary the parameter $\lambda$ and compare different characteristics of the solution,
with the solution returned by \segment.
For different values of $\lambda$,
Table~\ref{tab:overlap} shows average density, total number of covered nodes,
average size of the subgraphs, and average pairwise Jaccard similarity.
Although \segmentCover does not have an approximation guarantee,
for small values of $\lambda$ it finds subgraphs of the density close to \segment.
Similarly to the static case, $\lambda$ provides an efficient trade-off between density and coverage.


\begin{table*}[t]
	\begin{center}
		\caption{Results of \segmentCover with $k=5$ and $\eds = \edp = 0.1$.}\label{tab:overlap}
		\begin{tabular}{ll ll l ll l ll l ll}
			\toprule &&
			\multicolumn{2}{c}{Density} &&
			\multicolumn{2}{c}{Cover} &&
			\multicolumn{2}{c}{Size} &&
			\multicolumn{2}{c}{JSim} \\
			\cmidrule{3-4}
			\cmidrule{6-7}
			\cmidrule{9-10}
			\cmidrule{12-13}
			
			Dataset & $\lambda$ & \segmentCover &\segment && \segmentCover &\segment && \segmentCover &\segment && \segmentCover &\segment\\ 
			\midrule
			
			\students & 1e-6 & 10.690 & 11.151 && 136 & 130 && 48.75 & 37.6 && 0.1449 & 0.0951\\ 
			&1e-5 & 7.0869 & 11.151 && 813 & 130 && 261.0 & 37.6 && 0.0788 & 0.095\\ 
			&1e-4 & 5.0273 & 11.151 && 889 & 130 && 286.0 & 37.6 && 0.0910 & 0.0951\\ 
			\midrule	
				
			\enron & 1e-6 & 19.995 & 19.871 && 38 & 37 && 16.0 & 16.2 && 0.3619 & 0.3851\\ 
			&1e-5 & 19.962 & 19.871 && 40 & 37 && 17.0 & 16.2 && 0.3660 & 0.3851\\ 
			&1e-4 & 6.5684 & 19.871 && 1144 & 37 && 288.8 & 16.2 && 0.0808 & 0.3851\\ 
			\midrule
			
			\facebook &1e-8 & 5.3714 & 5.3933 && 83 & 120 && 22.75 & 27.6 && 0.0185 & 0.0163\\ 
			&1e-7 & 4.2749 & 5.3933 && 3470 & 120 && 882.0 & 27.6 && 0.0027 & 0.0163\\ 
			&1e-6 & 3.2673 & 5.3933 && 4100 & 120 && 1228.75 & 27.6 && 0.0335 & 0.0163\\ 
			\midrule
			
			\twitter & 1e-7 & 9.9970 & 10.138 && 128 & 152 && 44.25 & 54.0 && 0.1590 & 0.1673\\ 
			&1e-6 & 6.5500 & 10.138 && 3808 & 152 && 1061.75 & 54.0 && 0.0837 & 0.1673\\ 
			&1e-5 & 3.5389 & 10.138 && 4604 & 152 && 1379.0 & 54.0 && 0.0773 & 0.1673\\ 
			\bottomrule
		\end{tabular}
	\end{center}
\end{table*}

\section{Case study}
\label{sec:casestudy}

We present a case study using graphs of co-occurring hashtags from Twitter messages in the Helsinki region.
We create two subsets of \twitterH dataset:
one covering all tweets in November 2013 and another in December 2013.
Figure~\ref{fig:twitter} shows the dense subgraphs discovered by the \segment algorithm on these datasets,
with $k=4$ and $\eds = \edp = 0.1$.

For the November dataset,
\segment creates a small 1-day interval in the beginning and then splits the rest time almost evenly.
This first interval includes the nodes \texttt{\small movember},
\texttt{\small liiga},
\texttt{\small halloween}, and
\texttt{\small digiexpo},
which cover a broad range of global
(e.g., movember and Halloween)
and local events
(e.g., game-industry event DigiExpo and Finnish ice-hockey league).
The next interval is represented by a large variety of well-connected tags related to
\texttt{\small mtv} and media,
corresponding to the MTV Europe Music Awards'13 on November~10.
There are also other ice hockey-related tags, e.g.,  \texttt{\small leijonat},
and Father's Day tags, e.g., \texttt{\small isänpäivä}, which was on November~13.
The third interval is mostly represented by Slush-related tags;
Slush is the annual large startup and tech event in Helsinki.
The last interval is completely dedicated to ice-hockey with many team names.

There are three major public holidays in December:
Finland's Independence Day on December 6,
Christmas on December 25, and
New Year's Eve on December 31.
\segment allocates one interval for Christmas and New Year from December 21 to 31.
Ice hockey is also represented in this interval,
as well as in the third interval.
Remarkably, the Independence Day holiday is split into 2 intervals.
The first one is from December 1 to December 6, 3:30pm, and the corresponding graph has two clusters:
the first one containing general holidays-related tags and the second one is focused on Independence Day President's reception.
This is a large event that starts on December 6, 6pm,
is broadcasted live, and is discussed in media for the following days.
The second interval for December 6-9 is a truthful representation of this event.

\begin{figure*}[t]
\vspace{2mm}
	\begin{center}		
		\setlength{\figlength}{0.45\textwidth}		
		\begin{tabular}{c@{\hspace*{1mm}}*{5}{c}}
			\rotatebox{90}{\hspace*{0.3cm} November 2013}
			&\scalebox{0.65}{\begin{tikzpicture}[scale = 1] 
\tikzstyle{node_style} = [circle,draw=blue,fill=blue!30!,inner sep = 1.2pt] 
\tikzstyle{edge_style} = [gray!50, bend left = 10]
\node[node_style] (v0) at (2.13844396529,1.21209594267) {};
\node[node_style] (v1) at (2.28511094076,0.977127537855) {};
\node[node_style] (v2) at (-2.40784297312,0.147354155738) {};
\node[node_style] (v3) at (-2.13010535346,-1.40321125995) {};
\node[node_style] (v4) at (-2.37158094057,-0.937911023022) {};
\node[node_style] (v5) at (-0.373611984516,2.08685614168) {};
\node[node_style, label={xboxone} ] (v6) at (-0.533890370328,-2.44592075949) {};
\node[node_style] (v7) at (2.20326862737,-1.1457255056) {};
\node[node_style, label={vr} ] (v8) at (-1.92,-1.8) {};
\node[node_style] (v9) at (-0.444613263825,0.459339812843) {};
\node[node_style] (v10) at (1.75637702898,1.75167274815) {};
\node[node_style] (v11) at (-2.45204542682,-0.665012945013) {};
\node[node_style] (v12) at (-1.84223702367,1.76705203287) {};
\node[node_style, label={kirkko} ] (v13) at (1.77234932174,-1.68602428724) {};
\node[node_style] (v14) at (0.342337378411,2.14021620711) {};
\node[node_style] (v15) at (-1.40232234685,-2.13105621771) {};
\node[node_style, label={turku} ] (v16) at (-2.00577227463,1.5498899276) {};
\node[node_style] (v17) at (2.43902477386,-0.3767560278) {};
\node[node_style, label={kvantaa} ] (v18) at (0.0837334493188,2.41150249134) {};
\node[node_style] (v19) at (-1.00531182156,1.75294097671) {};
\node[node_style, label={suhesrk} ] (v20) at (-0.454866339231,-0.798588618259) {};
\node[node_style] (v21) at (2.32593690788,0.711801150531) {};
\node[node_style] (v22) at (1.59804265953,1.89246726692) {};
\node[node_style, label={mestis} ] (v23) at (1.02403363079,2.28397971147) {};
\node[node_style] (v24) at (1.37263182432,2.04697788661) {};
\node[node_style, label={tampere} ] (v25) at (-1.00189085394,-2.30879586766) {};
\node[node_style] (v26) at (-0.72622419089,2.14478937944) {};
\node[node_style] (v27) at (-2.13816588605,0.553882947362) {};
\node[node_style] (v29) at (1.9424384321,-1.10521515054) {};
\node[node_style, label={kallionkirjasto} ] (v30) at (-2.24682566128,-1.18680043605) {};
\node[node_style, label={ässät} ] (v31) at (0.850639546229,1.36284196863) {};
\node[node_style, label={digiexpo} ] (v32) at (0.709389248581,-2.02272404858) {};
\node[node_style] (v33) at (1.26976822382,-1.99753861206) {};
\node[node_style] (v34) at (1.53621577412,-1.86952190626) {};
\node[node_style, label={halloween} ] (v35) at (0.931199740557,-1.00834270575) {};
\node[node_style] (v36) at (-2.39328352251,0.858877121706) {};
\node[node_style] (v37) at (1.89773036586,-1.54862717596) {};
\node[node_style] (v38) at (0.0569993131511,-1.70277246642) {};
\node[node_style] (v39) at (2.43441106251,0.469903696198) {};
\node[node_style] (v40) at (-0.331312914093,2.5) {};
\node[node_style, label={mtv3f1} ] (v41) at (-0.898461358234,2.38669799093) {};
\node[node_style] (v42) at (0.821710711629,-2.15626655389) {};
\node[node_style, label={ff} ] (v43) at (-0.141721009712,1.14679543155) {};
\node[node_style] (v44) at (-2.5,0.509015919057) {};
\node[node_style] (v45) at (0.625666518197,2.24805292724) {};
\node[node_style, label={liiga} ] (v46) at (1.43510868607,1.02891786714) {};
\node[node_style] (v47) at (0.245727759871,-2.45855308476) {};
\node[node_style, label={helsinki} ] (v48) at (-0.367664754179,-1.31848545486) {};
\node[node_style] (v49) at (-0.302334807002,-2.49118143029) {};
\node[node_style] (v50) at (2.00621917441,1.47647160661) {};
\node[node_style] (v51) at (-1.24154116994,-2.22184304776) {};
\node[node_style] (v52) at (2.41175578019,0.182263504719) {};
\node[node_style, label={nowplaying} ] (v53) at (2.24714228331,-0.751046211979) {};
\node[node_style, label={movember} ] (v54) at (-0.939575224961,0.829690271457) {};
\node[node_style] (v55) at (-0.0250643306974,-2.5) {};
\node[node_style] (v56) at (-1.05193149454,2.29184982552) {};
\node[node_style] (v57) at (-2.28841785672,1.13355041341) {};
\node[node_style, label={kallio} ] (v58) at (-1.55058454254,-1.54016241997) {};
\node[node_style] (v59) at (-1.68185526823,-1.87300525012) {};
\node[node_style] (v60) at (0.506675721257,-2.02921658812) {};
\node[node_style] (v61) at (-1.62485150721,1.92329503533) {};
\node[node_style, label={huomenta} ] (v62) at (-2.09924411899,1.31355434324) {};
\node[node_style, label={lihatonlokakuu} ] (v28) at (-1.38548981897,2.11552083794) {};
\begin{pgfonlayer}{bg}
\draw[edge_style]  (v0) edge (v46);
\draw[edge_style]  (v1) edge (v46);
\draw[edge_style]  (v53) edge (v17);
\draw[edge_style]  (v53) edge (v35);
\draw[edge_style]  (v2) edge (v54);
\draw[edge_style]  (v3) edge (v30);
\draw[edge_style]  (v4) edge (v8);
\draw[edge_style]  (v5) edge (v54);
\draw[edge_style]  (v6) edge (v32);
\draw[edge_style]  (v6) edge (v15);
\draw[edge_style]  (v7) edge (v13);
\draw[edge_style]  (v8) edge (v11);
\draw[edge_style]  (v8) edge (v51);
\draw[edge_style]  (v8) edge (v59);
\draw[edge_style]  (v8) edge (v25);
\draw[edge_style]  (v9) edge (v43);
\draw[edge_style]  (v12) edge (v16);
\draw[edge_style]  (v14) edge (v43);
\draw[edge_style]  (v16) edge (v62);
\draw[edge_style]  (v19) edge (v43);
\draw[edge_style]  (v20) edge (v54);
\draw[edge_style]  (v20) edge (v47);
\draw[edge_style]  (v21) edge (v46);
\draw[edge_style]  (v22) edge (v46);
\draw[edge_style]  (v23) edge (v18);
\draw[edge_style]  (v23) edge (v10);
\draw[edge_style]  (v24) edge (v46);
\draw[edge_style]  (v25) edge (v48);
\draw[edge_style]  (v57) edge (v54);
\draw[edge_style]  (v26) edge (v43);
\draw[edge_style]  (v54) edge (v18);
\draw[edge_style]  (v54) edge (v35);
\draw[edge_style]  (v54) edge (v27);
\draw[edge_style]  (v54) edge (v62);
\draw[edge_style]  (v54) edge (v28);
\draw[edge_style]  (v54) edge (v44);
\draw[edge_style]  (v54) edge (v36);
\draw[edge_style]  (v54) edge (v61);
\draw[edge_style]  (v54) edge (v46);
\draw[edge_style]  (v28) edge (v56);
\draw[edge_style]  (v29) edge (v35);
\draw[edge_style]  (v30) edge (v58);
\draw[edge_style]  (v31) edge (v46);
\draw[edge_style]  (v31) edge (v43);
\draw[edge_style]  (v32) edge (v37);
\draw[edge_style]  (v32) edge (v48);
\draw[edge_style]  (v33) edge (v35);
\draw[edge_style]  (v34) edge (v35);
\draw[edge_style]  (v41) edge (v40);
\draw[edge_style]  (v41) edge (v43);
\draw[edge_style]  (v35) edge (v60);
\draw[edge_style]  (v35) edge (v42);
\draw[edge_style]  (v35) edge (v13);
\draw[edge_style]  (v38) edge (v48);
\draw[edge_style]  (v39) edge (v46);
\draw[edge_style]  (v43) edge (v48);
\draw[edge_style]  (v43) edge (v45);
\draw[edge_style]  (v55) edge (v48);
\draw[edge_style]  (v48) edge (v49);
\draw[edge_style]  (v48) edge (v58);
\draw[edge_style]  (v50) edge (v46);
\draw[edge_style]  (v52) edge (v46);
\end{pgfonlayer}
\end{tikzpicture}}
			&
			\scalebox{0.65}{\begin{tikzpicture}[scale = 1] 
\tikzstyle{node_style} = [circle,draw=blue,fill=blue!30!,inner sep = 1.2pt] 
\tikzstyle{edge_style} = [gray!50, bend left = 10]
\node[node_style, label={idolssuomi} ] (v0) at (1.95428536805,1.64403252164) {};
\node[node_style, label={isänpäivä} ] (v1) at (-0.15740084419,2.44338152376) {};
\node[node_style, label={jännä} ] (v2) at (-0.99642067181,-2.5) {};
\node[node_style, label={kahvi} ] (v3) at (-1.74486132951,2.5) {};
\node[node_style, label={maikkari} ] (v4) at (1.39705780646,-0.128112819627) {};
\node[node_style, label={socialtv} ] (v5) at (-2.20615884631,0.277413438093) {};
\node[node_style, label={mtv} ] (v6) at (2.3594531145,-0.219461286257) {};
\node[node_style, label={elcrew} ] (v7) at (-0.813267801148,0.527251883445) {};
\node[node_style, label={hjallis} ] (v8) at (-1.24104110469,-1.65255253543) {};
\node[node_style, label={yle} ] (v9) at (1.49581460817,-2.25829510175) {};
\node[node_style, label={uusimtv} ] (v10) at (-0.165820093799,0.237333688992) {};
\node[node_style, label={leijonat} ] (v11) at (0.677267718833,0.00736324538757) {};
\node[node_style, label={sarasvuo} ] (v12) at (-2.5,-0.986626283097) {};
\node[node_style, label={huomenta} ] (v13) at (-1.96162253277,1.40786070973) {};
\node[node_style, label={ttk} ] (v14) at (-0.835603479575,1.89290988134) {};
\node[node_style, label={parhautta} ] (v15) at (0.719932448983,1.6863218015) {};
\node[node_style, label={alko} ] (v16) at (2.28919099853,-1.49605739487) {};
\node[node_style, label={mtv3} ] (v17) at (0.467927039812,-2.16848150594) {};
\node[node_style, label={mtvsport} ] (v18) at (0.440449115517,-1.12821294802) {};
\begin{pgfonlayer}{bg}
\draw[edge_style]  (v0) edge (v11);
\draw[edge_style]  (v0) edge (v15);
\draw[edge_style]  (v1) edge (v11);
\draw[edge_style]  (v1) edge (v14);
\draw[edge_style]  (v1) edge (v3);
\draw[edge_style]  (v5) edge (v10);
\draw[edge_style]  (v5) edge (v7);
\draw[edge_style]  (v3) edge (v13);
\draw[edge_style]  (v4) edge (v16);
\draw[edge_style]  (v4) edge (v10);
\draw[edge_style]  (v4) edge (v6);
\draw[edge_style]  (v4) edge (v7);
\draw[edge_style]  (v2) edge (v8);
\draw[edge_style]  (v2) edge (v17);
\draw[edge_style]  (v6) edge (v10);
\draw[edge_style]  (v6) edge (v16);
\draw[edge_style]  (v10) edge (v8);
\draw[edge_style]  (v10) edge (v11);
\draw[edge_style]  (v10) edge (v7);
\draw[edge_style]  (v10) edge (v13);
\draw[edge_style]  (v10) edge (v14);
\draw[edge_style]  (v10) edge (v15);
\draw[edge_style]  (v10) edge (v18);
\draw[edge_style]  (v8) edge (v12);
\draw[edge_style]  (v8) edge (v17);
\draw[edge_style]  (v7) edge (v11);
\draw[edge_style]  (v7) edge (v14);
\draw[edge_style]  (v7) edge (v12);
\draw[edge_style]  (v7) edge (v13);
\draw[edge_style]  (v7) edge (v15);
\draw[edge_style]  (v9) edge (v11);
\draw[edge_style]  (v9) edge (v17);
\draw[edge_style]  (v11) edge (v17);
\draw[edge_style]  (v11) edge (v18);
\draw[edge_style]  (v16) edge (v17);
\end{pgfonlayer}
\end{tikzpicture}}
			&
			\scalebox{0.65}{\begin{tikzpicture}[scale = 1] 
\tikzstyle{node_style} = [circle,draw=blue,fill=blue!30!,inner sep = 1.2pt] 
\tikzstyle{edge_style} = [gray!50, bend left = 10]
\node[node_style, label={kärpät} ] (v0) at (0.527995024496,0.883510998764) {};
\node[node_style] (v1) at (-0.895323775487,-2.40236083374) {};
\node[node_style, label={finland} ] (v2) at (-0.878704345693,-0.780042088093) {};
\node[node_style] (v3) at (-0.696830415456,2.33335939856) {};
\node[node_style] (v4) at (1.30412146403,-1.27922742964) {};
\node[node_style, label={ilves} ] (v5) at (0.887353588322,-2.35148715548) {};
\node[node_style] (v6) at (0.154631172065,2.09357016041) {};
\node[node_style, label={jokerit} ] (v7) at (0.771588668424,-0.0420531864655) {};
\node[node_style] (v8) at (-2.5,-0.450162034846) {};
\node[node_style, label={startups} ] (v9) at (-1.10526089045,-1.5) {};
\node[node_style, label={ylelle} ] (v10) at (-0.974326828183,2.18516936144) {};
\node[node_style, label={travel} ] (v12) at (-1.65656983803,-0.787496698669) {};
\node[node_style, label={hctps} ] (v13) at (-0.616371416663,0.753498857609) {};
\node[node_style] (v14) at (-1.43704569856,0.462683049533) {};
\node[node_style, label={tappara} ] (v15) at (1.10537821339,-1.83962226899) {};
\node[node_style, label={yle} ] (v16) at (0.205985007092,1.62242784301) {};
\node[node_style, label={cop19} ] (v17) at (1.33505418636,1.15357113699) {};
\node[node_style, label={ässät} ] (v18) at (1.39499609057,0.452672424282) {};
\node[node_style] (v19) at (2.10302414524,-1.41714556654) {};
\node[node_style, label={cheek} ] (v20) at (-1.19591535997,-0.0534734903194) {};
\node[node_style] (v21) at (-1.82876801474,-1.50962511566) {};
\node[node_style] (v22) at (0.15097157212,-2.49503705892) {};
\node[node_style, label={crowdfunding} ] (v23) at (-2.2361302714,-0.3051385448) {};
\node[node_style, label={slush} ] (v25) at (-2.1,-1.27892996593) {};
\node[node_style, label={nokia} ] (v26) at (-0.0634478832362,-0.188844018131) {};
\node[node_style] (v27) at (-2.01050237686,-1.29545083967) {};
\node[node_style, label={hifk} ] (v28) at (1.08482441886,-0.801789594588) {};
\node[node_style] (v29) at (-0.777687383276,-2.48764839555) {};
\node[node_style] (v30) at (0.293180128648,0.469832260133) {};
\node[node_style, label={slush2013} ] (v31) at (-1.12374794719,1.24516245067) {};
\node[node_style] (v32) at (2.41021525186,0.59426336514) {};
\node[node_style, label={slush13} ] (v33) at (-0.563191540816,0.1798306337736) {};
\node[node_style] (v34) at (-1.66146563182,1.83974305542) {};
\node[node_style, label={climate} ] (v35) at (1.97751152557,1.4710063316) {};
\node[node_style, label={liiga} ] (v36) at (0.214068828311,-0.614421094104) {};
\node[node_style, label={freethearctic30} ] (v37) at (1.96252923031,0.233049210215) {};
\node[node_style, label={helsinki} ] (v38) at (0.23131379174,-1.02619014043) {};
\node[node_style] (v39) at (-0.436065749082,2.00371696082) {};
\node[node_style, label={espooblues} ] (v40) at (2.31108663892,-0.674780056391) {};
\node[node_style, label={restaurantday} ] (v41) at (1.42246233721,-2.16650544447) {};
\node[node_style, label={elcrew} ] (v42) at (-2.32584704669,1.23419807643) {};
\node[node_style, label={muuten} ] (v44) at (1.20978011039,-0.262526005449) {};
\node[node_style, label={digitalist} ] (v45) at (-1.43792301194,-1.81536620323) {};
\node[node_style, label={cama} ] (v46) at (-0.100604798231,1.2) {};
\node[node_style, label={arctic30} ] (v47) at (2.5,-0.175873753399) {};
\node[node_style] (v48) at (1.24924691355,2.12464513836) {};
\node[node_style, label={cheektiedottaaettä} ] (v11) at (-0.484509269171,-1.22410812361) {};
\node[node_style, label={ravintolapäivä} ] (v24) at (1.8,-1.16706444606) {};
\node[node_style, label={movember} ] (v43) at (0.609864887939,-2.6) {};
\begin{pgfonlayer}{bg}
\draw[edge_style]  (v0) edge (v18);
\draw[edge_style]  (v0) edge (v13);
\draw[edge_style]  (v0) edge (v36);
\draw[edge_style]  (v1) edge (v11);
\draw[edge_style]  (v1) edge (v43);
\draw[edge_style]  (v2) edge (v33);
\draw[edge_style]  (v2) edge (v12);
\draw[edge_style]  (v2) edge (v26);
\draw[edge_style]  (v2) edge (v9);
\draw[edge_style]  (v2) edge (v38);
\draw[edge_style]  (v4) edge (v7);
\draw[edge_style]  (v4) edge (v36);
\draw[edge_style]  (v5) edge (v36);
\draw[edge_style]  (v5) edge (v22);
\draw[edge_style]  (v5) edge (v15);
\draw[edge_style]  (v6) edge (v33);
\draw[edge_style]  (v6) edge (v16);
\draw[edge_style]  (v7) edge (v36);
\draw[edge_style]  (v7) edge (v11);
\draw[edge_style]  (v7) edge (v40);
\draw[edge_style]  (v7) edge (v16);
\draw[edge_style]  (v7) edge (v46);
\draw[edge_style]  (v7) edge (v44);
\draw[edge_style]  (v8) edge (v12);
\draw[edge_style]  (v8) edge (v23);
\draw[edge_style]  (v9) edge (v33);
\draw[edge_style]  (v9) edge (v25);
\draw[edge_style]  (v38) edge (v33);
\draw[edge_style]  (v38) edge (v12);
\draw[edge_style]  (v38) edge (v41);
\draw[edge_style]  (v38) edge (v37);
\draw[edge_style]  (v11) edge (v20);
\draw[edge_style]  (v13) edge (v36);
\draw[edge_style]  (v13) edge (v14);
\draw[edge_style]  (v14) edge (v36);
\draw[edge_style]  (v15) edge (v28);
\draw[edge_style]  (v15) edge (v36);
\draw[edge_style]  (v43) edge (v24);
\draw[edge_style]  (v43) edge (v41);
\draw[edge_style]  (v34) edge (v42);
\draw[edge_style]  (v34) edge (v10);
\draw[edge_style]  (v17) edge (v33);
\draw[edge_style]  (v17) edge (v35);
\draw[edge_style]  (v17) edge (v37);
\draw[edge_style]  (v18) edge (v28);
\draw[edge_style]  (v18) edge (v36);
\draw[edge_style]  (v19) edge (v40);
\draw[edge_style]  (v19) edge (v36);
\draw[edge_style]  (v20) edge (v42);
\draw[edge_style]  (v20) edge (v26);
\draw[edge_style]  (v20) edge (v36);
\draw[edge_style]  (v21) edge (v33);
\draw[edge_style]  (v21) edge (v25);
\draw[edge_style]  (v22) edge (v36);
\draw[edge_style]  (v23) edge (v33);
\draw[edge_style]  (v23) edge (v25);
\draw[edge_style]  (v24) edge (v47);
\draw[edge_style]  (v24) edge (v26);
\draw[edge_style]  (v24) edge (v41);
\draw[edge_style]  (v24) edge (v37);
\draw[edge_style]  (v25) edge (v33);
\draw[edge_style]  (v25) edge (v29);
\draw[edge_style]  (v25) edge (v45);
\draw[edge_style]  (v26) edge (v33);
\draw[edge_style]  (v26) edge (v28);
\draw[edge_style]  (v26) edge (v31);
\draw[edge_style]  (v3) edge (v31);
\draw[edge_style]  (v3) edge (v10);
\draw[edge_style]  (v29) edge (v45);
\draw[edge_style]  (v30) edge (v33);
\draw[edge_style]  (v30) edge (v36);
\draw[edge_style]  (v31) edge (v33);
\draw[edge_style]  (v31) edge (v42);
\draw[edge_style]  (v32) edge (v37);
\draw[edge_style]  (v32) edge (v47);
\draw[edge_style]  (v40) edge (v44);
\draw[edge_style]  (v33) edge (v39);
\draw[edge_style]  (v33) edge (v16);
\draw[edge_style]  (v33) edge (v45);
\draw[edge_style]  (v33) edge (v44);
\draw[edge_style]  (v28) edge (v36);
\draw[edge_style]  (v35) edge (v37);
\draw[edge_style]  (v35) edge (v48);
\draw[edge_style]  (v36) edge (v27);
\draw[edge_style]  (v36) edge (v46);
\draw[edge_style]  (v36) edge (v44);
\draw[edge_style]  (v37) edge (v47);
\draw[edge_style]  (v27) edge (v45);
\draw[edge_style]  (v39) edge (v16);
\draw[edge_style]  (v10) edge (v46);
\draw[edge_style]  (v10) edge (v16);
\draw[edge_style]  (v16) edge (v48);
\end{pgfonlayer}
\end{tikzpicture}}
			&
			\scalebox{0.65}{\begin{tikzpicture}[scale = 1] 
\tikzstyle{node_style} = [circle,draw=blue,fill=blue!30!,inner sep = 1.2pt] 
\tikzstyle{edge_style} = [gray!50, bend left = 10]
\node[node_style, label={kärpät} ] (v0) at (1.65475983743,-1.07794545839) {};
\node[node_style, label={ilves} ] (v1) at (-2.5,-0.748136491847) {};
\node[node_style, label={lukko} ] (v2) at (-1.3279638732,2.5) {};
\node[node_style, label={ruutufi} ] (v3) at (1.99814228238,0.698370658937) {};
\node[node_style, label={espooblues} ] (v4) at (-2.38856759978,2.23632111567) {};
\node[node_style, label={pelicansfi} ] (v5) at (0.631144016837,-1.83600719721) {};
\node[node_style, label={hctps} ] (v6) at (-2.39202455507,1.17193467534) {};
\node[node_style, label={tappara} ] (v7) at (0.578595011587,-0.849392924578) {};
\node[node_style, label={hifk} ] (v8) at (1.38415380718,0.202350307104) {};
\node[node_style, label={jokerit} ] (v9) at (-0.290013798854,-0.501900614348) {};
\node[node_style, label={liiga} ] (v10) at (-0.358892166457,0.415999068171) {};
\node[node_style, label={kalpa} ] (v11) at (0.480299141622,0.872291894309) {};
\node[node_style, label={hymiö} ] (v12) at (-1.85450846317,0.335030856842) {};
\node[node_style, label={eteläpääty} ] (v13) at (-1.31758154514,-1.36342654711) {};
\node[node_style, label={jypliiga} ] (v14) at (-1.4442209585,-2.5) {};
\node[node_style, label={nelonenpro} ] (v15) at (1.66707187384,1.78577285313) {};
\begin{pgfonlayer}{bg}
\draw[edge_style]  (v0) edge (v8);
\draw[edge_style]  (v0) edge (v10);
\draw[edge_style]  (v0) edge (v5);
\draw[edge_style]  (v0) edge (v7);
\draw[edge_style]  (v1) edge (v6);
\draw[edge_style]  (v1) edge (v14);
\draw[edge_style]  (v1) edge (v10);
\draw[edge_style]  (v2) edge (v6);
\draw[edge_style]  (v2) edge (v10);
\draw[edge_style]  (v3) edge (v15);
\draw[edge_style]  (v3) edge (v10);
\draw[edge_style]  (v3) edge (v7);
\draw[edge_style]  (v4) edge (v6);
\draw[edge_style]  (v4) edge (v10);
\draw[edge_style]  (v5) edge (v10);
\draw[edge_style]  (v5) edge (v9);
\draw[edge_style]  (v6) edge (v12);
\draw[edge_style]  (v6) edge (v10);
\draw[edge_style]  (v7) edge (v11);
\draw[edge_style]  (v7) edge (v10);
\draw[edge_style]  (v7) edge (v14);
\draw[edge_style]  (v8) edge (v15);
\draw[edge_style]  (v8) edge (v10);
\draw[edge_style]  (v8) edge (v9);
\draw[edge_style]  (v9) edge (v10);
\draw[edge_style]  (v9) edge (v11);
\draw[edge_style]  (v9) edge (v12);
\draw[edge_style]  (v9) edge (v13);
\draw[edge_style]  (v10) edge (v11);
\draw[edge_style]  (v10) edge (v12);
\draw[edge_style]  (v10) edge (v13);
\draw[edge_style]  (v10) edge (v15);
\end{pgfonlayer}
\end{tikzpicture}}
			\\
			& 01.11 03:31 -- 02.11 11:00
			& 02.11 11:01 -- 11.11 12:53
			& 11.11 12:54 -- 21.11 06:12
			& 21.11 06:18 -- 30.11 22:40\\
			\rotatebox{90}{\hspace*{0.3cm} December 2013}
			&
			\scalebox{0.65}{\begin{tikzpicture}[scale = 1] 
\tikzstyle{node_style} = [circle,draw=blue,fill=blue!30!,inner sep = 1.2pt] 
\tikzstyle{edge_style} = [gray!50, bend left = 10]
\node[node_style, label={koulu} ] (v0) at (2.2082551942,-1.39947966138) {};
\node[node_style, label={december} ] (v1) at (-2.07373671758,1.27437704302) {};
\node[node_style, label={itsenäisyyspäivä} ] (v2) at (0.235766471464,0.819782330314) {};
\node[node_style, label={pisa} ] (v3) at (1.73290239874,-1.89102643209) {};
\node[node_style, label={independenceday} ] (v4) at (-1.93130972677,-1.74155273765) {};
\node[node_style, label={finland} ] (v6) at (-0.823024633445,-0.804023252395) {};
\node[node_style, label={snowing} ] (v7) at (-2.35112071259,-0.778757136263) {};
\node[node_style, label={pisa2012} ] (v8) at (0.904915311753,-2.25052551999) {};
\node[node_style, label={snow} ] (v9) at (-0.567037400546,-1.2315223336) {};
\node[node_style, label={christmas} ] (v10) at (-1.77182114716,0.28413081768) {};
\node[node_style, label={happy} ] (v12) at (-1.52134469598,-1.99328932521) {};
\node[node_style, label={sixtynorth} ] (v13) at (-2.5,0.0313245420746) {};
\node[node_style, label={juti} ] (v14) at (0.549682514791,2.47498571994) {};
\node[node_style, label={ff} ] (v15) at (-0.13477334103,2.5) {};
\node[node_style, label={finnish} ] (v16) at (-0.57808343659,-2.21003724728) {};
\node[node_style, label={suomentampere} ] (v17) at (-0.185018342545,1.50876443141) {};
\node[node_style, label={helsinki} ] (v18) at (-1.02610523395,-0.105494731427) {};
\node[node_style, label={suomi} ] (v19) at (0.427493002334,-0.926147761287) {};
\node[node_style, label={ylelle} ] (v20) at (2.08622054305,1.46108795101) {};
\node[node_style, label={yle} ] (v21) at (1.21680679779,1.05) {};
\node[node_style, label={linnanjuhlat} ] (v5) at (1.10773896393,2.05440080151) {};
\node[node_style, label={independence} ] (v11) at (-0.050970621293,-2.5) {};
\begin{pgfonlayer}{bg}
\draw[edge_style]  (v0) edge (v8);
\draw[edge_style]  (v0) edge (v3);
\draw[edge_style]  (v1) edge (v10);
\draw[edge_style]  (v1) edge (v18);
\draw[edge_style]  (v2) edge (v5);
\draw[edge_style]  (v2) edge (v19);
\draw[edge_style]  (v2) edge (v6);
\draw[edge_style]  (v2) edge (v15);
\draw[edge_style]  (v2) edge (v17);
\draw[edge_style]  (v3) edge (v19);
\draw[edge_style]  (v3) edge (v8);
\draw[edge_style]  (v4) edge (v6);
\draw[edge_style]  (v4) edge (v16);
\draw[edge_style]  (v5) edge (v17);
\draw[edge_style]  (v5) edge (v20);
\draw[edge_style]  (v5) edge (v14);
\draw[edge_style]  (v5) edge (v21);
\draw[edge_style]  (v5) edge (v15);
\draw[edge_style]  (v20) edge (v21);
\draw[edge_style]  (v7) edge (v6);
\draw[edge_style]  (v7) edge (v18);
\draw[edge_style]  (v8) edge (v16);
\draw[edge_style]  (v8) edge (v6);
\draw[edge_style]  (v9) edge (v10);
\draw[edge_style]  (v9) edge (v11);
\draw[edge_style]  (v9) edge (v18);
\draw[edge_style]  (v10) edge (v6);
\draw[edge_style]  (v10) edge (v18);
\draw[edge_style]  (v11) edge (v6);
\draw[edge_style]  (v12) edge (v6);
\draw[edge_style]  (v12) edge (v18);
\draw[edge_style]  (v13) edge (v6);
\draw[edge_style]  (v13) edge (v18);
\draw[edge_style]  (v14) edge (v17);
\draw[edge_style]  (v17) edge (v6);
\draw[edge_style]  (v16) edge (v19);
\draw[edge_style]  (v16) edge (v6);
\draw[edge_style]  (v18) edge (v19);
\draw[edge_style]  (v18) edge (v6);
\draw[edge_style]  (v18) edge (v21);
\draw[edge_style]  (v19) edge (v6);
\end{pgfonlayer}
\end{tikzpicture}}
			&
			\scalebox{0.65}{\begin{tikzpicture}[scale = 1] 
\tikzstyle{node_style} = [circle,draw=blue,fill=blue!30!,inner sep = 1.2pt] 
\tikzstyle{edge_style} = [gray!50, bend left = 10]
\node[node_style, label={duudsonit} ] (v0) at (0.602779743937,2.5) {};
\node[node_style, label={linnajuhlat} ] (v1) at (1.82339778009,2.1806236829) {};
\node[node_style, label={kiakkovieraat} ] (v2) at (-0.583289799973,-2.5) {};
\node[node_style, label={kotikatsomo} ] (v3) at (1.70246064341,-0.671253423778) {};
\node[node_style, label={suomentampere} ] (v4) at (1.00948513217,-1.52205085567) {};
\node[node_style, label={kiakkovierasjuhlat} ] (v5) at (0.30873453806,-2.17850864972) {};
\node[node_style, label={helsinki} ] (v6) at (-2.2241664956,0.772617293511) {};
\node[node_style, label={sixtynorth} ] (v7) at (-2.05226674369,-1.05357702335) {};
\node[node_style, label={linnanjatkot} ] (v8) at (1.73041601094,-1.7067901616) {};
\node[node_style, label={finland} ] (v9) at (-1.72148486584,0.167904150429) {};
\node[node_style, label={itsenäisyyspäivä} ] (v10) at (1.18197176086,0.128928704472) {};
\node[node_style, label={tampere} ] (v11) at (-0.994837288338,-1.36680037679) {};
\node[node_style, label={suomi} ] (v12) at (-2.5,-0.137756926682) {};
\node[node_style, label={yle} ] (v13) at (0.00566814223072,-1.04841234173) {};
\node[node_style, label={slottsbalen} ] (v14) at (1.71585814469,1.13517332943) {};
\node[node_style, label={linnanjuhlat} ] (v15) at (0.0565154703819,-0.125747947007) {};
\node[node_style, label={fail} ] (v16) at (2.46760569504,0.823147598013) {};
\node[node_style, label={jatkot} ] (v17) at (2.34139635505,-0.327353507003) {};
\node[node_style, label={finlandia} ] (v18) at (0.5846231384,1.65999812374) {};
\node[node_style, label={christmas} ] (v19) at (-2.40784350238,1.62398105607) {};
\node[node_style, label={sibelius} ] (v20) at (-1.19144281152,1.44298474639) {};
\begin{pgfonlayer}{bg}
\draw[edge_style]  (v0) edge (v15);
\draw[edge_style]  (v0) edge (v1);
\draw[edge_style]  (v1) edge (v10);
\draw[edge_style]  (v1) edge (v14);
\draw[edge_style]  (v9) edge (v6);
\draw[edge_style]  (v9) edge (v7);
\draw[edge_style]  (v9) edge (v11);
\draw[edge_style]  (v9) edge (v12);
\draw[edge_style]  (v9) edge (v15);
\draw[edge_style]  (v9) edge (v19);
\draw[edge_style]  (v9) edge (v20);
\draw[edge_style]  (v3) edge (v15);
\draw[edge_style]  (v3) edge (v10);
\draw[edge_style]  (v4) edge (v15);
\draw[edge_style]  (v4) edge (v10);
\draw[edge_style]  (v4) edge (v5);
\draw[edge_style]  (v5) edge (v15);
\draw[edge_style]  (v5) edge (v8);
\draw[edge_style]  (v5) edge (v2);
\draw[edge_style]  (v5) edge (v11);
\draw[edge_style]  (v6) edge (v15);
\draw[edge_style]  (v6) edge (v12);
\draw[edge_style]  (v6) edge (v19);
\draw[edge_style]  (v7) edge (v15);
\draw[edge_style]  (v8) edge (v15);
\draw[edge_style]  (v8) edge (v17);
\draw[edge_style]  (v2) edge (v15);
\draw[edge_style]  (v10) edge (v15);
\draw[edge_style]  (v10) edge (v13);
\draw[edge_style]  (v11) edge (v15);
\draw[edge_style]  (v11) edge (v13);
\draw[edge_style]  (v12) edge (v15);
\draw[edge_style]  (v13) edge (v15);
\draw[edge_style]  (v20) edge (v15);
\draw[edge_style]  (v15) edge (v16);
\draw[edge_style]  (v15) edge (v17);
\draw[edge_style]  (v15) edge (v18);
\draw[edge_style]  (v15) edge (v14);
\draw[edge_style]  (v16) edge (v14);
\draw[edge_style]  (v17) edge (v14);
\draw[edge_style]  (v18) edge (v14);
\end{pgfonlayer}
\end{tikzpicture}}
			&
			\scalebox{0.65}{\begin{tikzpicture}[scale = 1] 
\tikzstyle{node_style} = [circle,draw=blue,fill=blue!30!,inner sep = 1.2pt] 
\tikzstyle{edge_style} = [gray!50, bend left = 10]
\node[node_style, label={kärpät} ] (v0) at (-2.296418252,0.638648684766) {};
\node[node_style, label={ilves} ] (v1) at (-0.583455443759,-1.40833623064) {};
\node[node_style, label={hctps} ] (v2) at (-1.87959153204,-0.952785953576) {};
\node[node_style, label={nhlfi} ] (v3) at (0.996959872543,0.670510926958) {};
\node[node_style, label={openicetre} ] (v4) at (-0.0226790515518,0.741979605624) {};
\node[node_style, label={promagnu} ] (v5) at (1.54665616385,1.66634582173) {};
\node[node_style, label={espooblues} ] (v6) at (-2.5,-0.535300492429) {};
\node[node_style, label={tappara} ] (v8) at (0.341846037347,-1.66532532047) {};
\node[node_style, label={tampere} ] (v9) at (0.322258979618,-0.449041894945) {};
\node[node_style, label={leijonat} ] (v10) at (1.32836928697,-1.07911211766) {};
\node[node_style, label={jokerit} ] (v11) at (-0.868884301788,0.55654626276) {};
\node[node_style, label={liiga} ] (v12) at (-0.545368417242,-0.52186941493) {};
\node[node_style, label={hymiö} ] (v13) at (2.10149824934,1.34991845594) {};
\node[node_style, label={muuten} ] (v14) at (0.44732868531,1.95384562522) {};
\node[node_style, label={cama} ] (v15) at (1.3891563146,0.016804187803) {};
\node[node_style, label={ässät} ] (v16) at (-1.81171019011,-1.88830355131) {};
\node[node_style, label={jypliiga} ] (v17) at (-0.354526665629,-2.5) {};
\node[node_style, label={sotshi} ] (v18) at (2.5,0.448506634042) {};
\node[node_style, label={pelicansfi} ] (v7) at (1.02812650453,-1.93564247935) {};
\begin{pgfonlayer}{bg}
\draw[edge_style]  (v0) edge (v11);
\draw[edge_style]  (v0) edge (v6);
\draw[edge_style]  (v0) edge (v12);
\draw[edge_style]  (v1) edge (v2);
\draw[edge_style]  (v1) edge (v7);
\draw[edge_style]  (v1) edge (v4);
\draw[edge_style]  (v1) edge (v8);
\draw[edge_style]  (v1) edge (v12);
\draw[edge_style]  (v6) edge (v16);
\draw[edge_style]  (v6) edge (v12);
\draw[edge_style]  (v3) edge (v12);
\draw[edge_style]  (v3) edge (v10);
\draw[edge_style]  (v3) edge (v11);
\draw[edge_style]  (v3) edge (v18);
\draw[edge_style]  (v3) edge (v15);
\draw[edge_style]  (v3) edge (v13);
\draw[edge_style]  (v3) edge (v14);
\draw[edge_style]  (v4) edge (v5);
\draw[edge_style]  (v4) edge (v12);
\draw[edge_style]  (v4) edge (v9);
\draw[edge_style]  (v11) edge (v15);
\draw[edge_style]  (v11) edge (v2);
\draw[edge_style]  (v11) edge (v12);
\draw[edge_style]  (v2) edge (v12);
\draw[edge_style]  (v2) edge (v16);
\draw[edge_style]  (v7) edge (v12);
\draw[edge_style]  (v7) edge (v15);
\draw[edge_style]  (v8) edge (v10);
\draw[edge_style]  (v8) edge (v12);
\draw[edge_style]  (v8) edge (v9);
\draw[edge_style]  (v9) edge (v12);
\draw[edge_style]  (v10) edge (v12);
\draw[edge_style]  (v10) edge (v18);
\draw[edge_style]  (v10) edge (v17);
\draw[edge_style]  (v5) edge (v15);
\draw[edge_style]  (v5) edge (v18);
\draw[edge_style]  (v12) edge (v16);
\draw[edge_style]  (v12) edge (v17);
\draw[edge_style]  (v12) edge (v14);
\draw[edge_style]  (v18) edge (v13);
\draw[edge_style]  (v17) edge (v16);
\draw[edge_style]  (v13) edge (v14);
\end{pgfonlayer}
\end{tikzpicture}}
			&
			\scalebox{0.65}{\begin{tikzpicture}[scale = 1] 
\tikzstyle{node_style} = [circle,draw=blue,fill=blue!30!,inner sep = 1.2pt] 
\tikzstyle{edge_style} = [gray!50, bend left = 10]
\node[node_style, label={änäri} ] (v0) at (1.28580047248,-1.25475283355) {};
\node[node_style, label={liiga} ] (v1) at (-1.57257149836,-1.53707623311) {};
\node[node_style, label={nhlfi} ] (v2) at (0.219411940679,-1.06053589114) {};
\node[node_style, label={helsinki} ] (v4) at (-0.566940687183,2.5) {};
\node[node_style, label={suomi} ] (v5) at (1.11552280055,-0.175786095256) {};
\node[node_style, label={travel} ] (v6) at (-2.06780555973,2.08951297536) {};
\node[node_style, label={finland} ] (v7) at (0.879481981853,1.57045711072) {};
\node[node_style, label={tappara} ] (v8) at (-1.98032451453,-2.10561392118) {};
\node[node_style, label={leijonat} ] (v9) at (-0.727831436378,-2.5) {};
\node[node_style, label={jokerit} ] (v10) at (-2.5,-1.60882013413) {};
\node[node_style, label={u20fi} ] (v11) at (-0.677775838756,-1.71167713348) {};
\node[node_style, label={u20} ] (v12) at (-0.183992884348,-2.66696377023) {};
\node[node_style, label={muuten} ] (v13) at (-1.84889919624,-0.641109746074) {};
\node[node_style, label={newyear} ] (v14) at (0.665333208178,2.0) {};
\node[node_style, label={joulu} ] (v15) at (-1.27405880826,0.630576447631) {};
\node[node_style, label={hymiö} ] (v16) at (-0.768525354914,-0.680698794953) {};
\node[node_style, label={christmas} ] (v17) at (-0.905323315437,1.83672743514) {};
\node[node_style, label={sotshi} ] (v18) at (0.793059503968,-2.01550007229) {};
\node[node_style, label={nye} ] (v19) at (-1.40851255978,2.41231391174) {};
\node[node_style, label={uusivuosi} ] (v3) at (-0.1415969864,2.28890432125) {};
\begin{pgfonlayer}{bg}
\draw[edge_style]  (v9) edge (v1);
\draw[edge_style]  (v9) edge (v12);
\draw[edge_style]  (v9) edge (v11);
\draw[edge_style]  (v9) edge (v18);
\draw[edge_style]  (v1) edge (v11);
\draw[edge_style]  (v1) edge (v15);
\draw[edge_style]  (v1) edge (v8);
\draw[edge_style]  (v1) edge (v10);
\draw[edge_style]  (v1) edge (v12);
\draw[edge_style]  (v1) edge (v13);
\draw[edge_style]  (v2) edge (v11);
\draw[edge_style]  (v2) edge (v5);
\draw[edge_style]  (v2) edge (v0);
\draw[edge_style]  (v2) edge (v13);
\draw[edge_style]  (v2) edge (v16);
\draw[edge_style]  (v2) edge (v18);
\draw[edge_style]  (v3) edge (v15);
\draw[edge_style]  (v3) edge (v4);
\draw[edge_style]  (v3) edge (v14);
\draw[edge_style]  (v3) edge (v19);
\draw[edge_style]  (v4) edge (v6);
\draw[edge_style]  (v4) edge (v7);
\draw[edge_style]  (v4) edge (v14);
\draw[edge_style]  (v4) edge (v17);
\draw[edge_style]  (v4) edge (v19);
\draw[edge_style]  (v5) edge (v7);
\draw[edge_style]  (v5) edge (v11);
\draw[edge_style]  (v6) edge (v17);
\draw[edge_style]  (v6) edge (v19);
\draw[edge_style]  (v7) edge (v17);
\draw[edge_style]  (v7) edge (v14);
\draw[edge_style]  (v8) edge (v10);
\draw[edge_style]  (v8) edge (v11);
\draw[edge_style]  (v17) edge (v15);
\draw[edge_style]  (v0) edge (v16);
\draw[edge_style]  (v0) edge (v18);
\draw[edge_style]  (v10) edge (v16);
\draw[edge_style]  (v10) edge (v11);
\draw[edge_style]  (v12) edge (v11);
\draw[edge_style]  (v16) edge (v15);
\draw[edge_style]  (v16) edge (v13);
\end{pgfonlayer}
\end{tikzpicture}}
			\\
			& 01.12 00:07 -- 06.12 15:29
			& 06.12 15:29 -- 09.12 17:58
			& 09.12 18:36 -- 21.12 12:23
			& 21.12 12:33 -- 31.12 23:41\\	
		\end{tabular}
	\end{center}
	\caption{Subgraphs, discovered in the network of Twitter hashtags \twitterH by \segment algorithm with $k=4$, $\eds = \edp = 0.1$.}
	\label{fig:twitter}
\vspace{2mm}
\end{figure*}
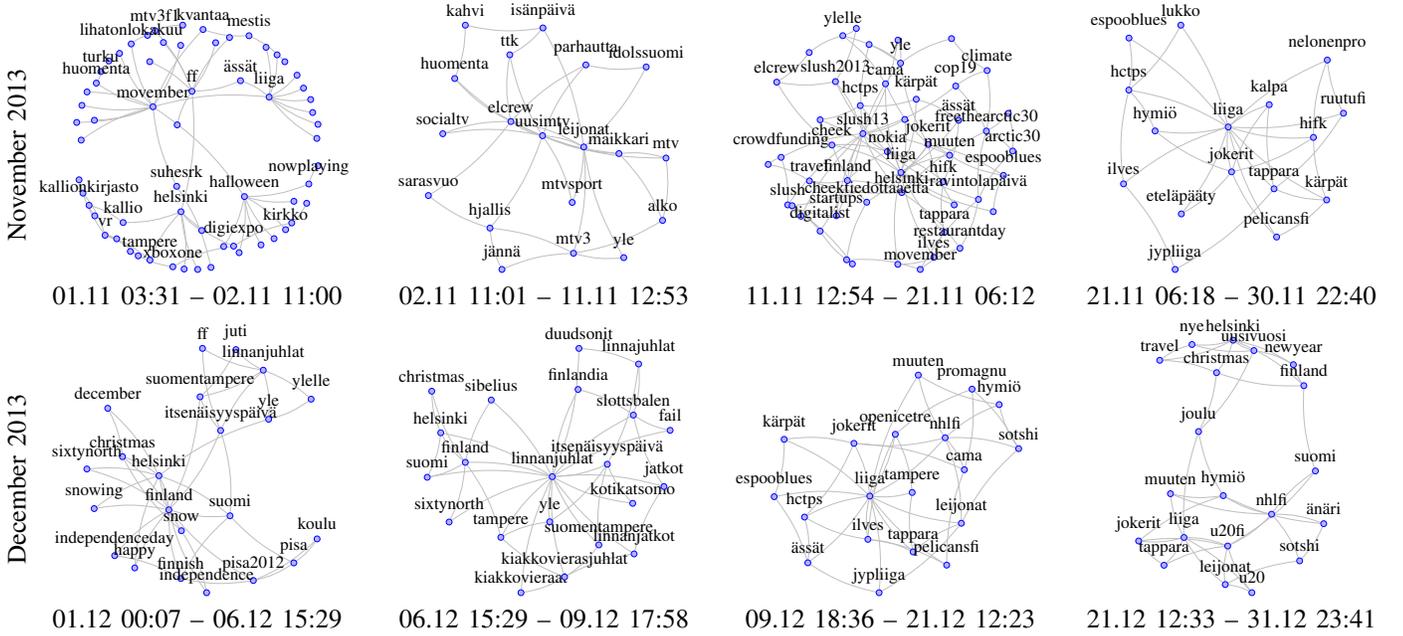

\section{Related work}
\label{section:related}

Partitioning a graph in dense subgraphs is a well-established problem.
Many of the existing works adopt as density definition the average-degree notion~\cite{andersen2009finding, khuller2009finding, galbrun2014overlapping, tsourakakis2014novel}.
The densest subgraph, under this definition, can be found in polynomial time~\cite{goldberg1984finding}. 
Moreover, there is a 2-approximation greedy algorithm by Charikar~\cite{charikar2000greedy} and Asahiro~\cite{asahiro2000greedily}, which runs in linear time of the graph size. 
Many recent works develop methods to maintain the average-degree densest-subgraph in a streaming scenario~\cite{bhattacharya2015space, epasto2015efficient, esfandiari2015applications, mcgregor2015densest, mitzenmacher2015scalable}. Alternative density definitions, such as variants of quasi-clique, are often hard to approximate or solve by efficient heuristics due to connections to \np-complete Maximum Clique problem~\cite{makino2004new, alvarez2006large, tsourakakis2013denser}.


A line of work focuses on dynamic graphs, 
which model node/edge additions/deletions. 
Different aspects of network evolution, 
including evolution of dense groups, were studied in this setting 
\cite{backstrom2006group, berlingerio2009mining, myers2014bursty, li2014efficient}. 
However, here we use the interaction-network model, 
which is different to dynamic graphs, as it captures the instantaneous interactions between nodes.

Another classic approach to model temporal graphs is to consider graph snapshots, 
find structures in each snapshot separately (or by incorporating information from previous snapshots), 
and then summarize historical behavior of the discovered structures~\cite{lin2008facetnet, asur2009event,greene2010tracking, mucha2010community, berlingerio2013abacus}.
These approaches usually focus on the temporal coherence of the dense structures 
discovered in the snapshots and assume that the snapshots are given. 
In this work we aggregate instantaneous interaction into timeline partitions of arbitrary lengths.

To the best our knowledge, the following works are better aligned with our approach.
A work of Rozenshtein et al.~\cite{rozenshtein2017finding} 
considers a problem of finding the densest subgraph in a temporal network. 
However, first, they do not aim on creating a temporal partitioning.
Second, they are interested in finding a single dense subgraph whose edges occur in $k$ short time intervals.
On the contrary, in this work we search for an interval partitioning and 
consider only graphs that are span continuous intervals. 
Other close works are by Jethava and Beerenwinkel~\cite{jethava2015finding} and Semertzidis et al.~\cite{semertzidis2016best}. 
However, these works consider a set of snapshots and search for a single heavy subgraph induced by one or several intervals. The work of Semertzidis et al.~\cite{semertzidis2016best} explores different formulations for the \emph{persistent heavy subgraph problem}, including maximum average density, while Jethava and Beerenwinkel \cite{jethava2015finding} focus solely on maximum average density.

\section{Conclusions}
\label{section:conclusions}

In this work we consider the problem of 
finding a sequence of dense subgraphs in a temporal network. 
We search for a partition of the network timeline into $k$ non-overlapping intervals, 
such that the intervals span subgraphs with maximum total density. 
To provide a fast solution for this problem we adapt recent work on dynamic densest subgraph and approximate dynamic programming. In order to ensure that the episodes we discover consist of a diverse set of nodes, we adjust the problem formulation to encourage coverage of a larger set of nodes. While the modified problem is \np-hard, we provide a greedy heuristic, which performs well on empirical tests.

The problems of temporal event detection and timeline segmentation can be formulated in various ways depending on the type of structures that are considered to be interesting. Here we propose segmentation with respect to maximizing subgraph density. The intuition is that those dense subgraphs provide a sequence of interesting events that occur in the lifetime of the temporal network. However, other notions of interesting structures, such as frequency of the subgraphs, or statistical non-randomness of the subgraphs, can be considered for future work. In addition, it could be meaningful to allow more than one structure per interval. Another possible extension is to consider overlapping intervals instead of a segmentation.

\section*{Acknowledgments}
Part of this work was done while the first author was visiting ISI Foundation. 
This work was partially supported by
three Academy of Finland projects  (286211, 313927, and 317085), 
and the EC H2020 RIA project ``SoBigData'' (654024). 
We thank the anonymous reviewers for their valuable comments.
%
%



%

\balance

\bibliographystyle{IEEEtran}
\bibliography{refs-brief}

\begin{thebibliography}{10}
\providecommand{\url}[1]{#1}
\csname url@samestyle\endcsname
\providecommand{\newblock}{\relax}
\providecommand{\bibinfo}[2]{#2}
\providecommand{\BIBentrySTDinterwordspacing}{\spaceskip=0pt\relax}
\providecommand{\BIBentryALTinterwordstretchfactor}{4}
\providecommand{\BIBentryALTinterwordspacing}{\spaceskip=\fontdimen2\font plus
\BIBentryALTinterwordstretchfactor\fontdimen3\font minus
  \fontdimen4\font\relax}
\providecommand{\BIBforeignlanguage}[2]{{%
\expandafter\ifx\csname l@#1\endcsname\relax
\typeout{** WARNING: IEEEtran.bst: No hyphenation pattern has been}%
\typeout{** loaded for the language `#1'. Using the pattern for}%
\typeout{** the default language instead.}%
\else
\language=\csname l@#1\endcsname
\fi
#2}}
\providecommand{\BIBdecl}{\relax}
\BIBdecl

\bibitem{chen2012dense}
J.~Chen and Y.~Saad, ``Dense subgraph extraction with application to community
  detection,'' \emph{TKDE}, vol.~24, no.~7, pp. 1216--1230, 2012.

\bibitem{ditursi2017local}
D.~DiTursi, G.~Ghosh, and P.~Bogdanov, ``Local community detection in dynamic
  networks,'' \emph{arXiv:1709.04033}, 2017.

\bibitem{taylor2017super}
D.~Taylor, R.~S. Caceres, and P.~J. Mucha, ``Super-resolution community
  detection for layer-aggregated multilayer networks,'' \emph{Physical Review
  X}, vol.~7, no.~3, p. 031056, 2017.

\bibitem{fratkin2006motifcut}
E.~Fratkin, B.~T. Naughton, D.~L. Brutlag, and S.~Batzoglou, ``Motifcut:
  regulatory motifs finding with maximum density subgraphs,''
  \emph{Bioinformatics}, vol.~22, no.~14, pp. e150--e157, 2006.

\bibitem{saha2010dense}
B.~Saha, A.~Hoch, S.~Khuller, L.~Raschid, and X.-N. Zhang, ``Dense subgraphs
  with restrictions and applications to gene annotation graphs,'' in
  \emph{RECOMB}, 2010.

\bibitem{feder1995clique}
T.~Feder and R.~Motwani, ``Clique partitions, graph compression and speeding-up
  algorithms,'' \emph{JCSS}, vol.~51, no.~2, pp. 261--272, 1995.

\bibitem{karande2009speeding}
C.~Karande, K.~Chellapilla, and R.~Andersen, ``Speeding up algorithms on
  compressed web graphs,'' \emph{Internet Mathematics}, 2009.

\bibitem{hernandez2012compressed}
C.~Hern{\'a}ndez and G.~Navarro, ``Compressed representation of web and social
  networks via dense subgraphs,'' in \emph{SIGIR}, 2012, pp. 264--276.

\bibitem{gibson2005discovering}
D.~Gibson, R.~Kumar, and A.~Tomkins, ``Discovering large dense subgraphs in
  massive graphs,'' in \emph{PVLDB}, 2005, pp. 721--732.

\bibitem{beutel2013copycatch}
A.~Beutel, W.~Xu, V.~Guruswami, C.~Palow, and C.~Faloutsos, ``Copycatch:
  stopping group attacks by spotting lockstep behavior in social networks,'' in
  \emph{WWW}, 2013, pp. 119--130.

\bibitem{angel2012dense}
A.~Angel, N.~Sarkas, N.~Koudas, and D.~Srivastava, ``Dense subgraph maintenance
  under streaming edge weight updates for real-time story identification,''
  \emph{PLVDB}, vol.~5, no.~6, pp. 574--585, 2012.

\bibitem{DBLP:conf/icwsm/BalalauCS18}
O.~D. Balalau, C.~Castillo, and M.~Sozio, ``Evidense: {A} graph-based method
  for finding unique high-impact events with succinct keyword-based
  descriptions,'' in \emph{Proceedings of the Twelfth International Conference
  on Web and Social Media, {ICWSM}}, 2018, pp. 560--563.

\bibitem{tsourakakis2013denser}
C.~Tsourakakis, F.~Bonchi, A.~Gionis, F.~Gullo, and M.~Tsiarli, ``Denser than
  the densest subgraph: extracting optimal quasi-cliques with quality
  guarantees,'' in \emph{KDD}, 2013, pp. 104--112.

\bibitem{balalau2015finding}
O.~D. Balalau, F.~Bonchi, T.~Chan, F.~Gullo, and M.~Sozio, ``Finding subgraphs
  with maximum total density and limited overlap,'' in \emph{WSDM}, 2015, pp.
  379--388.

\bibitem{galbrun2016top}
E.~Galbrun, A.~Gionis, and N.~Tatti, ``Top-$k$ overlapping densest subgraphs,''
  \emph{DMKD}, vol.~30, no.~5, pp. 1134--1165, 2016.

\bibitem{semertzidis2016best}
K.~Semertzidis, E.~Pitoura, E.~Terzi, and P.~Tsaparas, ``Best friends forever
  (bff): Finding lasting dense subgraphs,'' \emph{arXiv:1612.05440}, 2016.

\bibitem{rozenshtein2017finding}
P.~Rozenshtein, N.~Tatti, and A.~Gionis, ``Finding dynamic dense subgraphs,''
  \emph{TKDD}, vol.~11, no.~3, p.~27, 2017.

\bibitem{epasto2015efficient}
A.~Epasto, S.~Lattanzi, and M.~Sozio, ``Efficient densest subgraph computation
  in evolving graphs,'' in \emph{WWW}, 2015, pp. 300--310.

\bibitem{tatti15segmentation}
N.~{Tatti}, ``{Strongly polynomial efficient approximation scheme for
  segmentation},'' \emph{ArXiv:1805.11170}, 2018.

\bibitem{goldberg1984finding}
A.~V. Goldberg, \emph{Finding a maximum density subgraph}.\hskip 1em plus 0.5em
  minus 0.4em\relax University of California Berkeley, 1984.

\bibitem{charikar2000greedy}
M.~Charikar, ``Greedy approximation algorithms for finding dense components in
  a graph,'' in \emph{APPROX}, 2000, pp. 84--95.

\bibitem{DBLP:conf/www/DanischCS17}
M.~Danisch, T.~H. Chan, and M.~Sozio, ``Large scale density-friendly graph
  decomposition via convex programming,'' in \emph{Proceedings of the 26th
  International Conference on World Wide Web, {WWW} 2017}, 2017.

\bibitem{orlin2013max}
J.~B. Orlin, ``Max flows in o (nm) time, or better,'' in \emph{Proceedings of
  the forty-fifth annual ACM symposium on Theory of computing}, 2013.

\bibitem{gallo1989fast}
G.~Gallo, M.~D. Grigoriadis, and R.~E. Tarjan, ``A fast parametric maximum flow
  algorithm and applications,'' \emph{SIAM J. Computing}, 1989.

\bibitem{bellman2013dynamic}
R.~Bellman, \emph{Dynamic programming}.\hskip 1em plus 0.5em minus 0.4em\relax
  Courier Corporation, 2013.

\bibitem{guha2001data}
S.~Guha, N.~Koudas, and K.~Shim, ``Data-streams and histograms,'' in
  \emph{STOC}, 2001, pp. 471--475.

\bibitem{lin2011class}
H.~Lin and J.~Bilmes, ``A class of submodular functions for document
  summarization,'' in \emph{ACL}, 2011, pp. 510--520.

\bibitem{nemhauser1978analysis}
G.~Nemhauser, L.~Wolsey, and M.~Fisher, ``An analysis of approximations for
  maximizing submodular set functions,'' \emph{Mathematical Programming},
  vol.~14, no.~1, pp. 265--294, 1978.

\bibitem{viswanath2009evolution}
B.~Viswanath, A.~Mislove, M.~Cha, and K.~Gummadi, ``On the evolution of user
  interaction in facebook,'' in \emph{WOSN}, 2009, pp. 37--42.

\bibitem{andersen2009finding}
R.~Andersen and K.~Chellapilla, ``Finding dense subgraphs with size bounds,''
  in \emph{WAW}, 2009, pp. 25--37.

\bibitem{khuller2009finding}
S.~Khuller and B.~Saha, ``On finding dense subgraphs,'' in \emph{ICALP}, 2009.

\bibitem{galbrun2014overlapping}
E.~Galbrun, A.~Gionis, and N.~Tatti, ``Overlapping community detection in
  labeled graphs,'' \emph{DMKD}, vol.~28, no. 5-6, pp. 1586--1610, 2014.

\bibitem{tsourakakis2014novel}
C.~E. Tsourakakis, ``A novel approach to finding near-cliques: The
  triangle-densest subgraph problem,'' \emph{arXiv:1405.1477}, 2014.

\bibitem{asahiro2000greedily}
Y.~Asahiro, K.~Iwama, H.~Tamaki, and T.~Tokuyama, ``Greedily finding a dense
  subgraph,'' \emph{Journal of Algorithms}, vol.~34, no.~2, 2000.

\bibitem{bhattacharya2015space}
S.~Bhattacharya, M.~Henzinger, D.~Nanongkai, and C.~Tsourakakis, ``Space-and
  time-efficient algorithm for maintaining dense subgraphs on one-pass dynamic
  streams,'' in \emph{STOC}, 2015, pp. 173--182.

\bibitem{esfandiari2015applications}
H.~Esfandiari, M.~Hajiaghayi, and D.~Woodruff, ``Applications of uniform
  sampling: Densest subgraph and beyond,'' \emph{arXiv:1506.04505}, 2015.

\bibitem{mcgregor2015densest}
A.~McGregor, D.~Tench, S.~Vorotnikova, and H.~T. Vu, ``Densest subgraph in
  dynamic graph streams,'' in \emph{MFCS}.\hskip 1em plus 0.5em minus
  0.4em\relax Springer, 2015.

\bibitem{mitzenmacher2015scalable}
M.~Mitzenmacher, J.~Pachocki, R.~Peng, C.~Tsourakakis, and S.~C. Xu, ``Scalable
  large near-clique detection in large-scale networks via sampling,'' in
  \emph{KDD}, 2015, pp. 815--824.

\bibitem{makino2004new}
K.~Makino and T.~Uno, ``New algorithms for enumerating all maximal cliques,''
  in \emph{SWAT}, 2004, pp. 260--272.

\bibitem{alvarez2006large}
J.~I. Alvarez-Hamelin, L.~Dall'Asta, A.~Barrat, and A.~Vespignani, ``Large
  scale networks fingerprinting and visualization using the k-core
  decomposition,'' in \emph{NIPS}, 2006.

\bibitem{backstrom2006group}
L.~Backstrom, D.~Huttenlocher, J.~Kleinberg, and X.~Lan, ``Group formation in
  large social networks: membership, growth, and evolution,'' in \emph{KDD},
  2006, pp. 44--54.

\bibitem{berlingerio2009mining}
M.~Berlingerio, F.~Bonchi, B.~Bringmann, and A.~Gionis, ``Mining graph
  evolution rules,'' in \emph{ECML PKDD}, 2009, pp. 115--130.

\bibitem{myers2014bursty}
S.~A. Myers and J.~Leskovec, ``The bursty dynamics of the twitter information
  network,'' in \emph{WWW}, 2014, pp. 913--924.

\bibitem{li2014efficient}
R.-H. Li, J.~X. Yu, and R.~Mao, ``Efficient core maintenance in large dynamic
  graphs,'' \emph{TKDE}, vol.~26, no.~10, pp. 2453--2465, 2014.

\bibitem{lin2008facetnet}
Y.-R. Lin, Y.~Chi, S.~Zhu, H.~Sundaram, and B.~L. Tseng, ``Facetnet: a
  framework for analyzing communities and their evolutions in dynamic
  networks,'' in \emph{WWW}, 2008, pp. 685--694.

\bibitem{asur2009event}
S.~Asur, S.~Parthasarathy, and D.~Ucar, ``An event-based framework for
  characterizing the evolutionary behavior of interaction graphs,''
  \emph{TKDD}, vol.~3, no.~4, p.~16, 2009.

\bibitem{greene2010tracking}
D.~Greene, D.~Doyle, and P.~Cunningham, ``Tracking the evolution of communities
  in dynamic social networks,'' in \emph{ASONAM}, 2010.

\bibitem{mucha2010community}
P.~J. Mucha, T.~Richardson, K.~Macon, M.~A. Porter, and J.-P. Onnela,
  ``Community structure in time-dependent, multiscale, and multiplex
  networks,'' \emph{Science}, vol. 328, no. 5980, pp. 876--878, 2010.

\bibitem{berlingerio2013abacus}
M.~Berlingerio, F.~Pinelli, and F.~Calabrese, ``Abacus: frequent pattern
  mining-based community discovery in multidimensional networks,'' \emph{DMKD},
  vol.~27, no.~3, pp. 294--320, 2013.

\bibitem{jethava2015finding}
V.~Jethava and N.~Beerenwinkel, ``Finding dense subgraphs in relational
  graphs,'' in \emph{ECML PKDD}, 2015, pp. 641--654.

\end{thebibliography}

\appendix
\subsection{Proofs}
\label{appendix}
	
\begin{proof}[Proof of Proposition~\ref{prop:NPP2}]
	
	In the proof we show that Problem~\ref{maxgendegree} is at least as hard as \np-complete \atleastk problem: given static graph $G'=(V', E')$ and parameter $k'$, find the densest subgraph with at least $k'$ nodes. 
	
	We will consider an instance of Problem~\ref{p2} for a temporal graph with only one timestamp (static graph), $k=1$ and $\dvrs$ being a standard cover. We will refer to this instance as $P$. We abuse the notation, and write $\sum_{G_i\in \graphs}\dens(G_i)$ as $d$, $\dvrs(\graphs)$ as $c$, $\profit$ as $o$. For a given $\lambda$ denote optimal value $\profit(\lambda)$ as $o(\lambda)$, corresponding values of $c$ and $d$ as $c(\lambda)$ and $d(\lambda)$.
	
	
	Observe that $c(\lambda)$ is a non-decreasing function of $\lambda$: consider $\lambda_1 < \lambda_2$, let $S_1$ be the optimal solution for $\lambda_1$, write $o_1$ for $o(\lambda_1)$, $d_1$ for $d(\lambda_1)$ and $c_1$ for $c(\lambda_1)$. Profit value of $S_1$ is $o_1=d_1 + \lambda_1c_1$. Similarly, for $\lambda_2$ define optimal solution $S_2$ with profit $o_2=d_2 +\lambda_2c_2$. Suppose that $c_2<c_1$. Then, the only option is $d_2 > d_1$, otherwise $S_1$ would provide a better solution for $\lambda_2$, while $S_2$ is optimum. For this remaining case $d_2 > d_1$, from optimality of $S_1$ and $S_2$ we have: $d_1 + \lambda_1 c_1\geq d_2 + \lambda_1 c_2$ and $d_2 + \lambda_2 c_2\geq d_1 + \lambda_2 c_1$. Thus, $\lambda_2\leq (d_2-d_1)/(c_1-c_2) \leq \lambda_1$, which leads to the contradiction with $\lambda_1<\lambda_2$. This concludes monotonicity of $c(\lambda)$.
	
	Next, note that due to optimality, any optimum solution $S$ with density $d$ and cover $c$ has $d$ equal to the maximum density of a graph with at least $c$ nodes. Furthermore, for every fixed natural $k'\leq n$ (where $n$ is the number of nodes) there exists $\lambda$, such that $c(\lambda) \geq k'$. For $\lambda=n$ optimum solution $S$ is guaranteed to have $c=n$: consider solution $S_1$ with $c_1=n$ and $d_1=\dens(G)$ (density of the whole graph) obtained for some fixed $\lambda$. Due to optimality of $S_1$ it holds that $\dens(G)+\lambda n\geq d(S_2)+\lambda c(S_2)$ for any other subgraph $S_2$. Thus, 
	\[\lambda\geq\frac{d(S_2)- d(G)}{n-c(S_2)}.\] 
	Since $c(\lambda)$ is monotone, any larger $\lambda$, e.g., $\lambda^* = n \geq \frac{d(S_2)- d(G)}{n-c(S_2)}$, will guarantee $c(\lambda) = n$.
	
	Now, given an instance of \atleastk with a static graph $G'=(V',E')$ and $k'$, we can solve it doing a binary search for $\lambda$ and thus solving a polynomial ($\log n$) number of instances of $P$ were the only timestamp $t_1$ contains all edges from $G'$.
	
\end{proof}	

\begin{proof}[Proof of Proposition~\ref{prop:NPP5}]
	Similar to the proof for Problem~\ref{p2}.
\end{proof}

\begin{proof}[Proof of Proposition~\ref{prop:statgreedy}]
	Let $H^* = (W, A)$ be the optimal subgraph. 
	
	First, for each $v\in W$ holds $\g(v\mid H^*)\geq\gdens(H^*)$. To see this, note that 
	\[
	\gdens(H^*)=\frac{\sum_{u\in W} \g(u\mid H^*)}{2|W|}.
	\]
	Since $H^*$ is optimal and $\g$ is increasing w.r.t node addition, 
	\[
	\frac{\sum_{u\in W \setminus\{v\}} \g(u\mid H^*)}{2(|W|-1)}\leq \frac{\sum_{u\in V(H^*)} \g(u\mid H^*)}{2|W|}.
	\]
	Solving $\g(v\mid H^*)$ leads to
	\[
	\g(v\mid H^*)\geq \frac{\sum_{u\in V(H^*)} \g(u\mid H^*)}{|V(H^*)|}\geq \gdens(H^*).
	\]
	
	The rest of the argument follows the classic proof by
	\cite{khuller2009finding}. Denote $\gdens(H^*)$ as $O$. Consider iteration
	$i$, when the first node $v\in W$ is removed. Let $\bar{H}=
	(\bar{V},\bar{E})$ be the remaining graph after iteration $i$. By greedy
	construction all vertices $v\in\bar{V}$ have weight $\g(v \mid \bar{H}) \geq \g(v\mid H^*)\geq O$
	and $\gdens(\bar H)\geq O|\bar{V}|/2|\bar{V}| = O/2$. As greedy outputs
	the best subgraph, it will always output a subgraph with weight no worse
	than $H^*$.
\end{proof}	

\subsection{Incremental $k$-densest subgraphs with generalized average degree}
\label{app:incremental}

Given a stream of incremental edge updates to graph $H$ we would like to find and keep up-to-date a subgraph $H_i$, which maximizes $\gdens(H_i)$ for some generalized degree function $\g(u,v\mid H_i)$. 

To keep $H_i$ updated we can use the data structure and update procedure designed for the densest subgraph by Epasto et al.~\cite{epasto2015efficient}. Here we briefly describe it for the sake of completeness and discuss necessary modifications.

The approach uses the following variant of the greedy algorithm $\find$ as a building block. Additionally to the graph $H$, $\find$ requires parameters $\beta$ and $\epsilon$ as an input. Parameter $\beta$ has a meaning of the estimate for the optimal profit and $\epsilon$ is accuracy. 

\begin{algorithm}[]
	\KwData{static graph $H=(V,\estat)$, $\beta>0$, $\epsilon>0$}
	$H_0, \bar H=V(H)$;  $t=0$\;
	\While{$H_t\not = \emptyset$ and $t\leq \ceil{\log_{1+\epsilon}(|V(H_t)|)}	$}
	{
		$A(H_t) = \{v\in V(H_t): \g(v\mid H_t) < 2(1+\epsilon)\beta\}$\;
		$H_{t+1}=H_t\setminus A(H_t)$\;		
		\lIf {$\gdens(H_{t+1}) > \gdens(\bar H)$} {$\bar H = H_{t+1}$}
		$t=t+1$
	}	
	\Return $\bar H$
	\caption{\texttt {Find}}
	\label{alg:find}
\end{algorithm}

Algorithm~\ref{alg:find} has a property formulated in preposition~\ref{findproperty}, which is used in the binary search for the approximate optimal subgraph in Algorithm~\ref{alg:findDensest}.

\begin{proposition}\label{findproperty}
	If $0< \beta \leq \frac{O}{2(1+\epsilon)}$ with $O$ being optimal solution, then Algorithm~\ref{alg:find} finds a subgraph with weight at least $\beta$, while if $\beta > O$ a subgraph with the profit strictly less than $\beta$ is found.
\end{proposition}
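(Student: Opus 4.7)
My plan is to split the argument into the two cases stated in the proposition. In the first case ($0 < \beta \leq O/(2(1+\epsilon))$) the goal is to show that at some iteration the algorithm either encounters a subgraph $H_t$ with $\gdens(H_t)\geq \beta$, or never terminates by emptying $H_t$; since $\bar H$ keeps track of the best subgraph seen, either outcome yields a final $\bar H$ with $\gdens(\bar H)\geq \beta$. In the second case ($\beta > O$) I would simply observe that any subgraph of $H$, and in particular every $H_t$ and hence $\bar H$, has generalized density at most $O < \beta$, so the output is strictly below $\beta$; this case is essentially immediate from the definition of $O$ as the optimum.

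The interesting part is Case 1. First I would invoke the key fact established inside the proof of Proposition~\ref{prop:statgreedy}: for the optimum subgraph $H^*$ every node $v\in V(H^*)$ satisfies $\g(v\mid H^*)\geq \gdens(H^*)=O$. Using monotonicity of $\g$ with respect to graph containment (the same assumption used in Proposition~\ref{prop:statgreedy}), for any $H_t\supseteq H^*$ and any $v\in V(H^*)$ we get
\[
\g(v\mid H_t)\;\geq\;\g(v\mid H^*)\;\geq\;O\;\geq\;2(1+\epsilon)\beta,
\]
so $v$ is never placed in $A(H_t)$. By induction on $t$, $H^*\subseteq H_t$ throughout the execution, and in particular $H_t$ is never empty.

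Next I would argue, by contradiction, that some iteration must satisfy $\gdens(H_t)\geq \beta$. Suppose instead that $\gdens(H_t)<\beta$ at every iteration. Then $\sum_{v\in V(H_t)}\g(v\mid H_t)<2\beta\,|V(H_t)|$, while the set of ``heavy'' nodes $V_\geq=\{v:\g(v\mid H_t)\geq 2(1+\epsilon)\beta\}$ contributes at least $2(1+\epsilon)\beta\,|V_\geq|$ to that sum. Combining these bounds gives $|V_\geq|<|V(H_t)|/(1+\epsilon)$, hence $|A(H_t)|>\epsilon|V(H_t)|/(1+\epsilon)$ and
\[
|V(H_{t+1})|\;<\;\frac{|V(H_t)|}{1+\epsilon}.
\]
Iterating this shrinkage, after at most $\lceil\log_{1+\epsilon}|V(H_0)|\rceil$ steps (which is exactly the loop bound of Algorithm~\ref{alg:find}) $H_t$ would have to be empty, contradicting $H^*\subseteq H_t$. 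Therefore at some iteration $\gdens(H_t)\geq \beta$, and since $\bar H$ stores the densest such $H_t$, we conclude $\gdens(\bar H)\geq \beta$.

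The main obstacle I anticipate is the shrinkage bookkeeping: it requires the right double counting of generalized degrees and the assumption that $\g(\cdot\mid \cdot)$ is monotone under node addition (an assumption implicit in the companion StaticGreedy proof and that I would make explicit here). Once that monotonicity is in hand the rest of the argument is a straightforward combination of (i)~``$H^*$ is never removed'' and (ii)~``shrinkage per iteration when $\gdens(H_t)<\beta$''; Case 2 requires no work beyond the definition of the optimum.
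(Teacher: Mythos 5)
Your proof is correct, and it is worth noting that the paper itself states Proposition~\ref{findproperty} \emph{without} proof, deferring implicitly to the analysis of Epasto et al.~\cite{epasto2015efficient}; your argument is precisely the standard one from that line of work (Bahmani-style peeling analysis), transplanted to the generalized degree $\g$. The two ingredients you isolate are the right ones: (i)~every node of the optimal subgraph $H^*$ has $\g(v\mid H^*)\geq\gdens(H^*)=O\geq 2(1+\epsilon)\beta$, which together with monotonicity of $\g$ under node addition guarantees $V(H^*)$ survives every peeling round, and (ii)~the averaging bound $|V(H_{t+1})|<|V(H_t)|/(1+\epsilon)$ whenever $\gdens(H_t)<\beta$, which forces emptiness within $\lceil\log_{1+\epsilon}|V|\rceil$ rounds and yields the contradiction. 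Two small caveats. First, you are right to make the monotonicity of $\g$ explicit: the paper uses it silently in the proof of Proposition~\ref{prop:statgreedy} ("$\g$ is increasing w.r.t.\ node addition"), and your proof genuinely needs it, both for step~(i) and for item~(ii) (non-negativity of the contributions of the removed nodes). Second, your shrinkage count assumes the loop bound is $\lceil\log_{1+\epsilon}|V(H_0)|\rceil$, whereas Algorithm~\ref{alg:find} as written tests $t\leq\lceil\log_{1+\epsilon}(|V(H_t)|)\rceil$ against the \emph{current} graph, which is a smaller quantity; with that literal reading the loop could exit early with $H_t$ nonempty and no round yet achieving $\gdens(H_t)\geq\beta$, and your contradiction would not close. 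Since the surrounding text fixes $k=\lceil\log_{1+\epsilon}(|V|)\rceil$ for the nested sets $S_0,\dots,S_k$, the pseudocode is evidently a transcription slip and your reading is the intended one, but it deserves a sentence. Case~2 is, as you say, immediate.
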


\begin{algorithm}[]
	\KwData{static graph $H=(V,\estat)$, lower bound for optimal profit $\rho>0$, $\epsilon>0$}
	\KwResult{subgraph $\bar H$ which maximizes $\gdens(\bar H)$ within factor $2(1+\epsilon)^2$}
	$\bar H=\emptyset$;
	$\beta=\max(\frac{1}{4(1+\epsilon)}, (1+\epsilon) \rho)$\;
	\While{True}
	{
		$H'=Find(H,\beta,\epsilon)$\;
		\eIf {$\gdens(H')\geq \beta$}
		{
			$\bar H = H'$; $\beta=(1+\epsilon)\gdens(\bar H)$\;	
		}
		{return $(\beta, \bar H)$}
	}
	\caption{\texttt {FindDensest}}
	\label{alg:findDensest}
\end{algorithm}

Consider the last call of $Find(H,\beta,\epsilon)$ in the Algorithm~\ref{alg:findDensest}. Let $S_t$ be the set of nodes of graph $H_t$ at the iteration $t$. The nested set of node sets $S=(S_1, \dots, S_k)$ with $k=\ceil{\log_{1+\epsilon}(|V|)}$ has the following property by construction: $S_0=V$, $S_k=\emptyset$ and for $t\in [1,k-1]$ set $S_{t+1}$ is obtained from $S_t$ by removing all nodes $v$ with $\g(v\mid H_t)<2(1+\epsilon\beta)$. Furthermore, it can be shown that if a set $S=(S_1, \dots, S_k)$ with $k=\ceil{\log_{1+\epsilon}(|V|)}$ has that property, than there is a set $S_i\in S$, such that $S$ induces a subgraphs $H$ with the profit within factor $2(1+\epsilon)^2$ of optimal. 

The update procedure Algorithm~\ref{alg:Add} is designed to keep $S$ updated. Note that new edges can only increase generalized degree of nodes, thus nodes may need to be assigned to the set with larger $t$. The changes are propagated among the neighbors. The only difference with the original procedure from~\cite{epasto2015efficient} is that we need extra care with new nodes (line 12): adding a new node $u$ affects generalized degree of all nodes and we have to push all nodes on the stack to check. If some node must be moved to the set $k = \ceil{\log_{1+\epsilon}(|V|)}$, then this will violate requirements for $S$ and $S$ is rebuild from scratch. 

%

\begin{algorithm}[]
	\KwData{graph $H=(V,\estat)$, $\beta$, $\epsilon>0$, $S=(S_0,\dots, S_k)$}
	\KwResult{Updated $S$ or indicator that $S$ must be rebuild.}
	$\estat = \estat \cup \{(u,v)\}$\;
	Update degrees of $u$ and $v$\;
	$Stack = \emptyset$\;
	$Push(u,v,Stack)$\;
	\While{$Stack\not= \emptyset$}
	{
		$s=pop(Stack)$\;
		$S_t=\{S_t \in S : s\in S_t\setminus S_{i+1}\}$\;
		\lIf{$\g(s\mid H_t)<2(1+\epsilon)\beta$}{continue}
		$t'=\min \{t' : t'>t~and~\g(s\mid H_{t'})<2(1+\epsilon)\beta\}$
		
		\lIf{$t'=\ceil{\log_{1+\epsilon(|V|)}}$}{return True}
		Add $s$ to the sets $S_{t+1}, \dots, S_{t'}$\;
		\leIf{$s\in V$}
		{Push all neighbors of $s$ to the $Stack$}
		{Push all nodes in $V$ to the $Stack$}
	}
	{return False}
	\caption{\texttt {Add}}
	\label{alg:Add}
\end{algorithm}

\begin{algorithm}[]\label{alg:Update}
	\KwData{graph $H=(V,\estat)$, $\epsilon>0$}
	\KwResult{Updated optimal subgraph $\bar H$}
	
	$(\beta,\bar H)=FindDensest(H,0,\epsilon)$ and let $(S_0,\dots,S_k)$ be the sets computed by $Find$\;
	Output $\bar H$\;
	\While{$True$}
	{
		Wait for a new edge $(u,v)$\;
		$Rebuild=Add((u,v), (S_0, \dots, S_k),H,\beta,\epsilon)$\;
		\If{$Rebuild$}
		{$(\beta,\bar H)=FindFensest(H,\beta,\epsilon)$ (update $(S_0,\dots,S_k)$)}
		Output $\bar H$\;
	}
	\caption{\texttt {Update}}
\end{algorithm}

The total number of operations, needed to keep the approximate optimal subgraph updated is the following.
First, FindDensest is done in $O(|V|^2\log D \epsilon^{-1})$, where $D$ is maximum value of average generalized degree and is $O(|V|)$. The total number of FindDensest calls for a graph $G=(V,E)$ is $O(\epsilon^{-1}\log(D))$. The total number of operations between two consecutive calls of FindDensest $O(|V|^2\epsilon^{-1}\log {|V|})$. Thus, keeping the solution updated requires $O(|V|^2\epsilon^{-2}\log^2{D}) + O(|V|^2\epsilon^{-1}\log {|V|})= O(|V|^2\epsilon^{-2}\log^2{D})$ of running time. This translates into $O(\frac{|V|^2}{|E|}\epsilon^{-2}\log^2{D})$ amortized cost per edge insertion.
Space requirements are $O(|V|+|E|)$, as for the original algorithm in~\cite{epasto2015efficient}.

\end{document}